\def\ifundefined{\@ifundefined}
\newcommand{\chapter}{}
\newcommand{\bitm}{\begin{itemize}}
\newcommand{\eitm}{\end{itemize}}
\newcommand{\be}{\begin{equation}}
\newcommand{\ee}{\end{equation}}
\newcommand{\bea}{\begin{eqnarray}}
\newcommand{\eea}{\end{eqnarray}}
\newcommand\ba{\renewcommand{\arraystretch}{.7} \left[ \begin{array}{*{6}{@{\hspace{2pt}}r@{\hspace{1.5pt}}}}}
\def\ea{\end{array}\right]}
\def\nn{\nonumber\\}
\newcommand{\bfi}{\begin{figure}}
\newcommand{\efi}{\end{figure}}
\newcommand{\mat}[1]{\begin{bf} #1 \end{bf}}
\newcommand{\fn}{}
\newcommand{\Norm}[2]{|| #1 ||_{{#2}}}
\newcommand{\Ind}[1]{\mathbb{1} \left\{ #1\right\}}
\newcommand{\U}{U}
\newtheorem{thm}{Theorem}  %   number by chapter
\newtheorem{lem}{Lemma}        %   same counter as thm
\newtheorem{pro}{Proposition}
\newenvironment{thmN}[2][Theorem]{\begin{trivlist} 
\item[\hskip \labelsep {\bfseries #1}\hskip \labelsep {\bfseries #2}] \it }{\end{trivlist}}
\newtheorem{defn}{Definition}       %   Numbered independently from thm
\begin{document}

\title{On U-Statistics and Compressed Sensing II: Non-Asymptotic Worst-Case Analysis}

%\author{%\authorblockN{Fabian Lim and Vladimir Stojanovi\'c}\\
%\authorblockN{Fabian Lim${}^*$ and Vladimir Marko Stojanovic}\\
%\authorblockA{EECS Dept., Massachusetts Institute of Technology, 
%77 Mass. Ave, Cambridge, MA 02139\\ %\vspace*{-23pt}
%\{flim,vlada\}@mit.edu%\vspace*{-15pt}
%\thanks{This work was supported by NSF grant ECCS-1128226.} 
%}}

\author{%\authorblockN{Fabian Lim and Vladimir Stojanovi\'c}\\
\authorblockN{Fabian Lim${}^*$ and Vladimir Marko Stojanovic}\\
\ifthenelse{\boolean{dcol}}{%dcol
\thanks{F. Lim and V. M. Stojanovic are with the Research Laboratory of Electronics, 
Massachusetts Institute of Technology, 77 Massachusetts Avenue, Cambridge, MA 02139.
Email: \{flim,vlada\}@mit.edu.
This work was supported by NSF grant ECCS-1128226.} 
}{
\authorblockA{Research Laboratory of Electronics\\ Massachusetts Institute of Technology, 
%77 Mass. Ave, Cambridge, MA 02139\\ %\vspace*{-23pt}
77 Massachusetts Avenue, Cambridge, MA 02139\\ %\vspace*{-23pt}
\{flim,vlada\}@mit.edu %\vspace*{-15pt}
\thanks{This work was supported by NSF grant ECCS-1128226. } 
}}
}

% make the title area
\maketitle

%\begin{abstract}
%In another related work, we used Hoeffding's U-statistics to perform non-asymptotic ``average-case'' analysis of random compressed sensing matrices. 
%In this companion paper the same analytical tool is adopted differently - here we perform non-asymptotic ``worst-case'' analysis.
%
%We focus on a U-statistical result, that potentially allows us to theoretically prove that union bounds (typically employed in ``worst-case'' analyses) are tight.
%To our knowledge, this kind of (powerful) result is completely new in the context of CS.
%This general result applies to a wide variety of parameters, and is related to \emph{(Stein-Chen) Poisson approximation}.
%In this paper, we consider i) restricted isometries, and ii) mutual coherence.
%For the bounded case, we show that $k$-th order restricted isometry constants have tight union bounds, when the measurements $m = \mathcal{O}( k (1 + \log(n/k)))$. 
%Here we require the restricted isometries to grow linearly in $k$, however we conjecture that this result can improved to allow them to be fixed.
%Also, we show that mutual coherence (with the standard estimate $\sqrt{(4\log n)/m}$) have extremely tight union bounds.
%For the coherence case, the typically applied normalization complicates general discussion, and for simplicity we focus only on Gaussian and Bernoulli cases.
%\end{abstract}

\begin{abstract}
In another related work, U-statistics were used for non-asymptotic ``average-case'' analysis of random compressed sensing matrices. 
In this companion paper the same analytical tool is adopted differently - here we perform non-asymptotic ``worst-case'' analysis.

Simple union bounds are a natural choice for ``worst-case'' analyses, however their tightness is an issue (and questioned in previous works).
Here we focus on a theoretical U-statistical result, which potentially allows us to prove that these union bounds are tight.
To our knowledge, this kind of (powerful) result is completely new in the context of CS.
This general result applies to a wide variety of parameters, and is related to \emph{(Stein-Chen) Poisson approximation}.
In this paper, we consider i) restricted isometries, and ii) mutual coherence.
For the bounded case, we show that $k$-th order restricted isometry constants have tight union bounds, when the measurements $m = \mathcal{O}( k (1 + \log(n/k)))$. 
Here we require the restricted isometries to grow linearly in $k$, however we conjecture that this result can be improved to allow them to be fixed.
Also, we show that mutual coherence (with the standard estimate $\sqrt{(4\log n)/m}$) have very tight union bounds.
%For coherence, the normalization complicates general discussion, and for simplicity we only consider Gaussian and Bernoulli cases.
For coherence, the normalization complicates general discussion, and we consider only Gaussian and Bernoulli cases here.

\end{abstract}

\begin{IEEEkeywords}
%compressed sensing, pseudorandom sequences, random matrices, restricted isometries
approximation, compressed sensing, satistics, random matrices
\end{IEEEkeywords}

\ifthenelse{\boolean{dcol}}{
\markboth{Lim and Stojanovic: On U-Statistics and Compressed Sensing II: Non-Asymptotic Worst-Case Analysis}{}
}{
\markboth{IEEE Transactions on Signal Processing, Lim and Stojanovic, Prepared using \LaTeX}{}
}

\section{Introduction}

%Compressed sensing (CS) refers to a body of sampling techniques designed to capture sparse data trends with a minimal-sized ``data-hash''. Candes, Tao, and Donoho analyzed \emph{randomized} sampling techniques, and proved that sparse signals with $k$ components, can be recovered from the order of $k\log(n/k)$ number of random samples from an $n$-dimensional ambient space~\cite{Candes,Donoho}. The recovery algorithm is a polynomial-time algorithm in the form of a linear program.

Recovery analysis in compressed sensing (CS) is usually framed in the context matrix parameters. 
The \emph{restricted isometry constant} is arguably the most commonly studied, as it obtains an important result, that proves that sparse signals with $k$ components can be recovered from the order of $k\log(n/k)$ number of samples from an $n$-dimensional ambient space~\cite{Candes,Donoho}.
On the other hand a wide variety of parameters have been studied, each having its own desirable features (\textit{e.g.}, simplicity, accuracy, etc.). 
To list a few, we have \emph{mutual coherence}~\cite{Grib,Elad,Tropp}, \emph{Karush-Kuhn-Tucker (KKT)} conditions for sparsity pattern recovery (involving matrix \emph{pseudoinverses})~\cite{Fuchs,Tropp,Cand2008,Barga}, and the powerful \emph{null-space property}~\cite{Zhang,Cohen2008,DAspremont2009}.

To handle a wide variety of parameters, we would like a common framework that encompasses their common features.
Here we consider random analysis, and in related work~\cite{Lim1}, we proposed how Hoeffding's \emph{U-statistics} can be good analytical tool.
This is due to a wide availability of general U-statistical theory, and most importantly the fact these statistics model a common feature shared by the CS parameters listed above - that is these parameters are defined combinatorially over all subsets of a fixed size.
Furthermore the tool applies well to the non-asymptotic regime, whereby important practical trends are the focus of recent works with similar random analysis themes~\cite{DTExp, Rudelson2010,Rudel}, or \emph{deterministic}-type CS analysis~\cite{Calder,Gan,Tropp,Barga}. There is however no discussion on U-statistics in CS literature.

In~\cite{Lim1} we show how U-statistics have a natural ``average-case''  interpretation, which we apply to so-called \emph{statistical restricted isometry property} (StRIP) recovery guarantees~\cite{Barga}. 
In this work however, we show how U-statistics also apply well to ``worst-case'' analysis. 
Most CS analysis performed for random matrices are of the ``worst-case'' nature, whereby 
past seminal results established optimal rates and powerful recovery guarantees~\cite{Candes,Donoho,RIPsimp,Blanchard,Bah}.
These analyses mostly involve taking union bounds over a large number of terms, whereby we are interested in an earlier posed question of Blanchard~et.~al.~\cite{Blanchard}: how tight are these union bounds? 
%The simplest approach in these analyses involve taking union bounds over a large number of terms.
%Blanchard~et.~al.~earlier posed an important question~\cite{Blanchard}: how tight are these union bounds? 

A result that can ascertain tightness of such union bounds could be potentially very useful, since these bounds constitute some of the simplest ways to analyze CS parameters.
While past efforts to answer this question involve innovative bounding methods~\cite{Bah}, and numerical explorations~\cite{Dossal2009}, 
%here we discuss a U-statistical result that explores this question %from a theoretical perspective.
here we discuss a U-statistical result that can answer this question by theoretical proof.
This comes from \emph{(Stein-Chen) Poisson approximation}~\cite{Barbour},~ch. 2, which provides theoretical guarantees on bound tightness, and can be potentially used to show that the union bound cannot be drastically improved.
In other words from a standpoint of characterizing the behavior of CS parameters for recovery analysis, it could be (possibly) shown that simple union bounds are sufficiently good and tight.
To the best of our knowledge, this kind of powerful result has never been investigated before in the CS context.
The U-statistical result is general, and applies to a wide variety of parameters.
For brevity we only discuss two cases here, ``worst-case'' restricted isometries, and mutual coherence - the more complicated null-space property left for future research. 

\textbf{Contributions}:  
%We show how U-statistics can be used to model salient matrix parameters that predict CS recovery performance. 
We assume throughout that the matrix columns are independently sampled.
We utilize an (non-asymptotic) approximation theorem that predicts how ``worst-case'' U-statistics can be well-approximated by a Poisson distribution (Theorem 2.N).
A good Poisson approximation implies that union bound analyses will essentially be sufficiently tight, where
theoretical approximation error bounds can be given. These error bounds require second-order joint probabilities. 
Denote $n$ and $m$ to be block and measurement sizes, respectively. 
We consider two cases i) restricted isometries and ii) mutual coherence. 
For i) empirical studies suggest good approximation\cite{Dossal2009}.
Here \emph{Ahlsewede-Winter} techniques are used to obtain the necessary joint probabilities.
These techniques lead to simplified arguments, on the other hand there are certain weaknesses that lead to a sub-optimal rate. Nevertheless for bounded matrix entries and $m$ in the order of $k\cdot (1+\log(n/k))$, the approximation error can be shown to exponentially decay in $m$, but at the same time requiring the restricted isometry constants to grow linearly in $k$ (Theorem \ref{cor:DDD}).
We conjecture that this result can be improved. %(we conjecture that the same result can be shown).
For ii), we show that when the mutual coherence is on the order of $\sqrt{(4\log n)/m}$ (the standard estimate, see~\cite{Cand2008}), the Poisson approximation error exponentially decays in $m$ (Theorem \ref{cor:mC}). 
For simplicity for ii) we only consider Gaussian and Bernoulli cases.

\textbf{Organization}:  
We begin with relevant background on CS in Section \ref{sect:params}. In Section \ref{sect:Poi} we present a %(non-asymptotic) U-statistical theorem for Poisson approximation (``worst-case'') behavior. 
In Section \ref{sect:prob} we derive theoretical bounds related to ``worst-case'' Poission approximation, for both the restricted isometries, and mutual coherence cases. 
We conclude in Section \ref{sect:conc}.

%for  ``worst-case''.

%In Section \ref{sect:prob} we visit important probability estimates, required for applying the Poisson approximation for both cases of restricted isometries and mutual coherence. Finally in Section \ref{sect:KKT} we present a U-statistic that measures the fraction of submatrices not satisfying Fuchs' conditions for $\ell_0/\ell_1$-equivalence. We compare our results with recent non-asymptotic results from Donoho \& Tanner~\cite{DTExp}, where we observe significant improvement for some practical system sizes with large undersampling.

\newcommand{\Real}{\mathbb{R}}

\newcommand{\eigm}{\sigma^2_{\scriptsize \mbox{\upshape min}}}
\newcommand{\eigM}{\sigma^2_{\scriptsize \mbox{\upshape max}}}
\newcommand{\sigm}{\sigma_{\scriptsize \mbox{\upshape min}}}
\newcommand{\sigM}{\sigma_{\scriptsize \mbox{\upshape max}}}
\newcommand{\Tr}{ \mbox{\upshape Tr}}

\newcommand{\Eigm}{\varsigma_{\scriptsize \mbox{\upshape min}}}
\newcommand{\EigM}{\varsigma_{\scriptsize \mbox{\upshape max}}}

\textbf{Notation}: 
The set of real numbers is denoted $\Real$. 
%All quantities considered in this paper are real, and $\Real^n$ denotes the space of length-$n$ vectors. 
Deterministic quantities are denoted using $a,\mat{a}$, or $\mat{A}$, where bold fonts denote vectors (\textit{i.e.}, $\mat{a}$) or matrices (\textit{i.e.}, $\mat{A}$). 
Random quantities are denoted using upper-case \emph{italics}, where $A$ is a random variable (RV), and $\pmb{A}$ a random vector/matrix.
%Let $\{A\leq a\}$ denotes the random event that $A \leq a$, and 
Let $\Pr\{A\leq a\}$ denote the probability that event $\{A\leq a\}$ occurs.
Sets are denoted using braces, \textit{e.g.}, $\{1,2,\cdots\}$. 
The notation $i,j,\ell,\omega$ is for indexing. 
The notation $\mathbb{E}$ denotes expectation.
We let $\Norm{\cdot}{p}$ denote the $\ell_p$-norm for $p=1$ and $2$. 

\newcommand{\x}{\alpha}
\newcommand{\xbk}{\overline{\pmb{\alpha}}_{k}}

%\section{U-Statistics \& CS Parameters} \label{sect:params}
\section{Preliminaries}  \label{sect:params}
\subsection{Compressed Sensing (CS) Theory} \label{ssect:params}

\newcommand{\Sens}{\pmb{\Phi}}
\renewcommand{\S}{\mathcal{S}}
\newcommand{\Fm}{F_{\sigma^2_{\scriptsize \mbox{\upshape min}}}}
\newcommand{\FM}{F_{\sigma^2_{\scriptsize \mbox{\upshape max}}}}
\renewcommand{\a}{a}
\newcommand{\matt}[1]{\pmb{#1}}
\newcommand{\Bas}{\mat{D}}

A vector $\mat{a}$ is said to be $k$-sparse, if at most $k$ vector coefficients are non-zero (\emph{i.e.}, its $\ell_0$-distance satisfies $\Norm{\mat{\a}}{0} \leq k$). 
Let $n$ be a positive integer that denotes {block length}, and let $\matt{\x}=[\x_1,\x_2,\cdots, \x_n]^T$ denote a length-$n$ signal vector with signal coefficients $\x_i$. 
The \emph{best $k$-term approximation} $\xbk $ of $\matt{\x}$, is obtained by finding the $k$-sparse vector $\xbk $ that has minimal approximation error $\Norm{\xbk - \matt{\x}}{2}$.

\newcommand{\col}{\pmb{\phi}}
\newcommand{\y}{b}
\newcommand{\eps}{\epsilon}

%with $n$ \emph{independently and identically sampled} columns $\col_1,\col_2,\cdots, \col_n$

%CS-type random sampling is very simple. 
Let $\Sens$ denote an $m\times n$ CS sampling matrix, where $m <  n$. 
The length-$m$ \textbf{measurement vector} denoted $\mat{\y}=[\y_1,\y_2,\cdots, \y_m]^T $ of some length-$n$ signal $\matt{\x}$, is formed as $\mat{\y} =\Sens \matt{\x}$.
Recovering $\matt{\x}$ from $\mat{\y}$ is challenging as $\Sens$ possesses a \emph{non-trivial null-space}. 
%CS theory proves that if the  signal $\matt{\x}$ has limited sparsity (i.e. small enough $\Norm{\matt{\x}}{0}$), then $\matt{\x}$ can be recovered from measurements $\mat{\y}$ by considering the following two optimization problems~\cite{Candes,Donoho}. 
%Recovery analysis involves finding conditions under which $\matt{\x}$ can be recovered. 
%We briefly review some relevant theory.
%The $\ell_0$-\textbf{{minimization} recovery} problem, is given as
%\bea
%	  \min_{\tilde{\matt{\x}} \in \Real^n} \Norm{\tilde{\matt{\x}}}{0}~~~\mbox{ s. t.   }~~~~\mat{\y} = \Sens\tilde{\matt{\x}} , \label{eqn:L0}
%\eea
%whose (convex)-relaxed and noise-adapted version is given by the following 
We typically recover $\matt{\x}$ by solving the (convex) $\ell_1$-\textbf{minimization} problem
\bea
	  \min_{\tilde{\matt{\x}} \in \Real^n} \Norm{\tilde{\matt{\x}}}{1}~~~\mbox{ s. t.   } \Norm{\tilde{\mat{\y}} - \Sens \tilde{\matt{\x}}}{2} \leq \eps. \label{eqn:L1}
\eea
The vector $\tilde{\mat{\y}}$ is a \emph{noisy} version of the original measurements $\mat{\y}$, here $\eps$ bounds the noise error, \emph{i.e.}, $\eps \geq \Norm{\tilde{\mat{\y}}-\mat{\y}}{2}$. 
%We say (\ref{eqn:L0}) has a \emph{unique sparsest solution}, if for some constant $k$, 
%if the signal $\matt{\x}$ satisfying $\mat{\y}=\Sens\matt{\x}$ satisfies $\Norm{\matt{\x}}{0}\leq k $ then $\matt{\x}^* = \matt{\x}$. 
%Early work consider conditions under which i) problem (\ref{eqn:L0}) has a unique solution, and ii) the (unique) solution to problem (\ref{eqn:L1}) equals the (unique) solution to (\ref{eqn:L0}) - the latter is known as $\ell_0/\ell_1$-equivalence~\cite{Donoho,DTExp}.
\ifthenelse{\boolean{dcol}}{ %dcol
Recovery conditions have been considered in many flavors, \textit{e.g.},~\cite{Candes,Donoho,Grib,Elad,Fuchs,Tropp,Cand2008,Barga,Zhang,Cohen2008,DAspremont2009}, mostly by studying parameters of sampling matrix $\Sens$.}{
Recovery conditions have been considered in many flavors,\textit{e.g.},~\cite{Candes,Donoho,Grib,Elad,Fuchs,Tropp,Cand2008,Barga,Zhang,Cohen2008,DAspremont2009}, and mostly rely on studying parameters of the sampling matrix $\Sens$.}
%The following \emph{restricted isometry constants} give optimal compression rates.

%In addition to the limited sparsity of $\matt{\x}$, the solutions of both recovery problems (\ref{eqn:L0}) and (\ref{eqn:L1}) are guaranteed to be good estimates of $\matt{\x}$, if the matrix $\Sens$ (used to form the measurements $\mat{b}$) can be quantified to be well-suited for CS compression~\cite{Candes,Donoho}. This quantification is done by extracting salient parameters of $\Sens$, discussed in the sequel, which turn out to inherit the \emph{combinatorial} flavor of the optimization problems (\ref{eqn:L0}) and (\ref{eqn:L1}). Here, we begin with what is arguably the most popular - Candes \& Tao's \emph{restricted isometry constants}.

\newcommand{\RIC}{\delta}
%In the following let $\mat{a} \in \Real^n$ denote a $k$-sparse vector that satisfies $\Norm{\mat{a}}{0}=k $, clearly $k < n$. 

%In the following let the constant $k$ satisfy . 
%For $k \leq n$, the $k$-th \textbf{restricted isometry constant} of an $m\times n$ matrix $\Sens$ is measured by the following constant $\RIC_k$, 
For $k \leq n$, the $k$-th \textbf{restricted isometry constant} $\RIC_k$ of an $m\times n$ matrix $\Sens$, equals the smallest constant that satisfies
\bea
	(1-\RIC_k) \Norm{\matt{\x}}{2}^2 \leq \Norm{\Sens\matt{\x}}{2}^2 \leq (1+\RIC_k) \Norm{\matt{\x}}{2}^2, \label{eqn:RIC}
\eea
for any $k$-sparse $\matt{\x} \mbox{ in } \Real^n$.
%%for any $\mat{a}\in \Real^n$ satisfying $\Norm{\mat{a}}{0}=k $.
%If $\RIC_{2k}$ satisfies $\RIC_{2k} \leq 1$, then any $k$-sparse solution $\matt{\x}^*$ of the $\ell_0$-minimization recovery problem (\ref{eqn:L0}) must be \emph{unique}. %\emph{i.e.}, if the signal $\matt{\x}$ satisfying $\mat{\y}=\Sens\matt{\x}$ satisfies $\Norm{\matt{\x}}{0}\leq k $ then $\matt{\x}^* = \matt{\x}$. 
%Assume the contrary, that (\ref{eqn:L0}) possesses two solutions $\matt{\x}^*$ and $\tilde{\matt{\x}}^*$ each satisfying $\Norm{\matt{\x}}{0} \leq k$ and $\Norm{\tilde{\matt{\x}}^*}{0} \leq k$. Then $\matt{\x}^* - \tilde{\matt{\x}}^*$ lies in the null-space of $\Sens$ whereby $\Norm{\matt{\x} - \tilde{\matt{\x}}}{0} \leq 2k$, which contradicts $\RIC_{2k} \leq 1$. 
%%Candes \& Tao also provide the following recovery guarantee in a very general setting with respect to the restricted isometry constants $\RIC_k$ in (\ref{eqn:RIC}).
The following well-known recovery guarantee is stated with respect to $\RIC_k$ in (\ref{eqn:RIC}).

%(i.e. for any $\matt{\x}\in \Real^n$)

\newcommand{\ConstZero}{c_1}
\newcommand{\ConstOne}{c_2}
\newcommand{\half}{\frac{1}{2}}
\newcommand{\RIPthm}{A}
%%\begin{thmN}{1.2, c.f.~\cite{RIP}} %\label{thm:RIPCandes}
\begin{thmN}{\RIPthm, \textit{c.f.},~\cite{RIP}} %\label{thm:RIPCandes}
Let $\Sens$ be the sensing matrix.
Let $\matt{\x}$ denote the signal vector. 
Let $\mat{\y}$ be the measurements, \textit{i.e.}, $\mat{\y} = \Sens\matt{\x}$. 
Assume that the $(2k)$-th restricted isometry constant $\RIC_{2k}$ of $\Sens$ satisfies $\RIC_{2k} < \sqrt{2} -1$, and further assume that the noisy version $\tilde{\mat{\y}}$ of $\mat{\y}$ satisfies $ \Norm{\tilde{\mat{\y}}-\mat{\y}}{2} \leq \epsilon$. Let $\xbk$ denote the best-$k$ approximation to $\matt{\x}$. Then the $\ell_1$-minimum solution $\matt{\x}^*$ to (\ref{eqn:L1}) satisfies 
\[
	 %\Norm{\matt{\x}^* - \matt{\x}}{2} \leq \ConstZero k^{-\half} \Norm{\matt{\x} - \xbk}{1} + \ConstOne \epsilon,
	 \Norm{\matt{\x}^* - \matt{\x}}{1} \leq \ConstZero \Norm{\matt{\x} - \xbk}{1} + \ConstOne \epsilon,
\]
for small constants $\ConstZero = 4\sqrt{1+\RIC_{2k}}/(1-\RIC_{2k}(1+\sqrt{2}))$ and $\ConstOne = 2 (\RIC_{2k} (1-\sqrt{2}) - 1)/(\RIC_{2k} (1+\sqrt{2}) - 1) $.
\end{thmN}

\newcommand{\Gram}{\Sens^T\Sens}

\newcommand{\ord}[1]{{(#1)}}
\newcommand{\GramS}{\Sens^T_\S\Sens_\S}
\newcommand{\SensS}{\Sens_\S}
\newcommand{\SensSi}{\Sens_{\S_i}}
\newcommand{\GramSi}{\Sens^T_{\S_i}\Sens_{\S_i}}
\newcommand{\GramSoi}[1]{\Sens_{\S_\ord{#1}}}
\newcommand{\GrampSoi}[1]{\Sens'_{\S_\ord{#1}}}
\newcommand{\N}{N}
\newcommand{\Bin}[2]{{#1 \choose #2}}
\newcommand{\define}{\stackrel{\Delta}{=}}
\renewcommand{\fn}{\footnote{We aim to relax this fairly restrictive assumption in future work.}}

%Enumerate all $\Bin{n}{k}$ size $k$-subsets $\S_1,\S_2,\cdots, \S_\Bin{n}{k}$ of $\{1,2,\cdots, n\}$.

% on the other hand (as seen from (\ref{eqn:RIC})) the constants $\RIC_k$ have inherent combinatoric nature. In fact of a given matrix $\Sens$, the problem of computing its restricted isometry constants $\RIC_k$, is NP-Hard~\cite{Blanchard}. A hint at the difficulty of this problem can be seen as follows. 

\renewcommand{\fn}{\footnote{For simplicity, we omitted small deviation constants in Theorem B, see~\cite{Candes2004} p. 18 for details.}}

\newcommand{\A}{\pmb{A}}
\newcommand{\func}{\zeta}
\newcommand{\1}{\mathbb{1}}
\newcommand{\LedThm}{B}
%The proof follows from a series of (ingeniously applied) basic inequalities, see~\cite{RIP}. The preceding Theorem \RIPthm~bounds the recovery error $\Norm{\matt{\x}^* - \matt{\x}}{2}$ in terms of the best $k$-term approximation error $\Norm{\matt{\x} - \xbk}{1}$, and the noise $\eps$. In the special case that $\eps=0$ (\emph{i.e.}, the noiseless case) and that $\matt{\x}$ is $k$-sparse (\emph{i.e.}, $\matt{\x} = \xbk$), then Theorem \RIPthm~states that recovery problem (\ref{eqn:L1}) estimates $\matt{\x}$ perfectly, \emph{i.e.}, $\matt{\x}^*=\matt{\x}$. Furthermore in the same special case, Theorem \RIPthm~also implies that the two recovery problems (\ref{eqn:L0}) and (\ref{eqn:L1}) will both recover $\matt{\x}$. 
Theorem \RIPthm~is very powerful, on condition that we know the constants $\RIC_k$. But because of their combinatoric nature, computing the restricted isometry constants $\RIC_k$ is NP-Hard~\cite{Blanchard}.
The computational difficulty can be seen as follows.
Let $\eigM(\mat{A})$ and $\eigm(\mat{A})$ respectively denote the maximum and minimum, \emph{squared-singular values} of matrix $\mat{A}$.
Denote a function $\func : \Real^{m\times k} \rightarrow \Real$, where for any $\mat{A} \in \Real^{m\times k}$
\bea
\func(\mat{A})= \max(\eigM(\mat{A}) - 1, 1-\eigm(\mat{A})). \label{eqn:g0}
\eea
Let $\S$ denote a size-$k$ subset of indices. %, \emph{i.e.}, $\S \subseteq \{1,2,\cdots, n\}$. 
Let $\SensS$ denote the size $m\times k$ submatrix of $\Sens $, indexed on (column indices) in $\S$.  %\emph{i.e.}, if $\col_i$ denotes the $i$-th column of $\Sens$ then $\SensS = [\col_{i_1},\col_{i_2},\cdots, \col_{i_k}]$ where $i_j \in \S$. 
We then see from (\ref{eqn:RIC}) that if the columns $\col_i$ of $\Sens$ are properly normalized, \emph{i.e.}, if $\Norm{\col_i}{2}=1$, we deduce that $\RIC_{k}$ satisfies %is the smallest constant in $\Real$ that satisfies 
\bea %for all $i \in \{1,2,\cdots, \Bin{n}{k}\}$, 
	%\RIC_k &\geq& \max_{i\in \{1,2,\cdots, \N\}} \{\eigM(\SensSi) - 1, 1 - \eigm(\SensSi)\}, \label{eqn:RIC2}
	%\RIC_k &\geq& \max(\eigM(\Sens_\S) - 1, 1-\eigm(\Sens_\S)),
	\RIC_k &=& \max_\S \func(\SensS),
	\label{eqn:RIC2}
\eea
where the maximization is taken over all $\Bin{n}{k}$ size-$k$ index subsets $\S$. % \subseteq \{1,2,\cdots, n\}$.
For large $n$, the number $\Bin{n}{k}$ is huge. 
To overcome this issue, we may avoid explicitly computing $\RIC_k$ by incorporating \emph{randomization}. 
Let $\A$ denote a random matrix of size $m \times n$. Suppose we sample $\Sens= \A$. 
Let $\A_\S$ denote the size $m\times k$ submatrix of $\A$, indexed on $\S$.
%Assume that the $m\times n$ matrix $\Sens$ (see (\ref{eqn:L0}),(\ref{eqn:L1}) and (\ref{eqn:RIC})) is randomly sampled, and that any submatrix $\SensS$ has the same random measure. 
As the mappings $\sigM(\cdot)$ and $\sigm(\cdot)$ corresponding to singular values are $1$-Lipschitz, we have the following well-known measure concentration result\fn.
%
%The mappings $\sigM(\cdot)$ and $\sigm(\cdot)$ corresponding to the {maximum and minimum singular values}, can both be shown to be  $1$-Lipschitz (with respect to the $\ell_2$-norm). Two random variables $A$ and $B$ are \emph{independent and identically distributed} (IID), if $A$ and $B$ are i) statistically independent, and ii) have identical probabilistic measure, \emph{i.e.}, $\Pr\{A \leq a\}=\Pr\{B \leq a\}$ for all $a\in\Real$. The following well-known result\fn~on the concentration of measure phenomenon, can help estimate the size of the restricted isometry constants.

\renewcommand{\fn}{\footnote{More specifically, constant $\eps(k/m,\RIC)$ needs to be set greater than $ \max (\sqrt{1+ \RIC} - (1+ \sqrt{k/m} ),1- \sqrt{k/m} - \sqrt{1-\RIC})$. Note that $\RIC$ must result in the latter quantity being positive.}}
\begin{thmN}{\LedThm, \textit{c.f.},~\cite{Rudel} p. 24, \cite{Candes2004} p. 18} \label{thm:Ledoux}
Assume $\A_\S$ is an $m \times k$ random matrix where $k\leq m$, and assume that the entries $(\A_S)_{ij}$ of $\A_\S$ are both IID with zero mean, \textit{i.e.}, $\mathbb{E}(\A_S)_{ij} = 0$.
Let every $(\A_S)_{ij}$ be either i) Gaussian with variance $1$, or ii) symmetric Bernoulli variables in $\{-1,1\}$.
For every $\eps \geq 0$, we have the probability inequalities
\bea
	\Pr\{\sigm(\A_\S) < \sqrt{m} - \sqrt{k} -  \eps \}  &\leq& \exp(-\eps^2/c_1), \mbox{ and }\nn
	\Pr\{\sigM(\A_\S) > \sqrt{m} + \sqrt{k} +  \eps \}  &\leq& \exp(-\eps^2/c_1), \nonumber
\eea
where $c_1=2$ in the Gaussian case, and $c_1 = 16$ in the Bernoulli case.
%where $c_1=2$ and $c_1 = 16$ respectively in Gaussian and Bernoulli cases.
%Furthermore, in the case where $(\A_S)_{ij}$ are symmetric Bernoulli variables in $\{-1,1\}$, the above inequalities also hold, but with the constant 2 replaced by 16.
\end{thmN}

%Denote an indicator function $\1\{\cdot\}$; 
%then $\1\{\func(\SensS) > \RIC \} = 1$ if $\SensS$ violates the inequality (\ref{eqn:RIC2}) for subset $\S$ and value $\RIC_k = \RIC$. 
Let $\A$ have a measure described in Theorem \LedThm, then a union bound over $\Bin{n}{k}$ size-$k$ subsets leads to the following conclusion. % from (\ref{eqn:RIC2}).
Let $\1\{\cdot\}$ denote an indicator function.
The probability of sampling $\Sens = (1/\sqrt{m})\cdot \A$, such that the restricted isometry constant of $\Sens$ exceeds $\RIC$, is upper bounded as
\ifthenelse{\boolean{dcol}}{ %dcol
\begin{align}
	%\Pr\left\{\sum_{S } \1\{\func(\A_\S) > \RIC\}  > 0\right\} \leq 2(e n/k)^k \exp\left(-\frac{m\eps^2(k/m,\RIC)}{2}\right),
	\Pr\left\{\sum_{S } \1\{\func(\A_\S) > \RIC m^\half\}  > 0\right\} \leq 2\left(\frac{en}{k}\right)^k e^{\normalsize -\mbox{\normalsize $\frac{m\cdot \eps^2(k/m,\RIC)}{2}$}}, \label{eqn:RICprob}
\end{align}}{
\bea
	%\Pr\left\{\sum_{S } \1\{\func(\A_\S) > \RIC\}  > 0\right\} \leq 2(e n/k)^k \exp\left(-\frac{m\eps^2(k/m,\RIC)}{2}\right),
	\Pr\left\{\sum_{S } \1\{\func(\A_\S) > \RIC m^\half\}  > 0\right\} \leq 2\left(\frac{en}{k}\right)^k 
	\exp\left(\frac{m\cdot \eps^2(k/m,\RIC)}{2} \right), 
	\label{eqn:RICprob}
\eea}
where constant $\eps(k/m,\RIC)$ only depends\fn~on the ratio $k/m$, and constant $\RIC$.
%Assume $\A_\S$ has a random measure described in Theorem \LedThm.
%Then we have the probability inequality $\Pr\{\func(\A_\S) > \RIC\} \leq \exp(-m \eps^2/2)$, where the constant $\eps$ satisifies
%$\eps = \eps(k/m,\RIC) >  \max (\sqrt{1+ \RIC} - (1+ \sqrt{k/m} ),1- \sqrt{k/m} - \sqrt{1-\RIC})$.
%Take union bound over $\Bin{n}{k}$ size-$k$ subsets, whereby $ \Bin{n}{k}\leq (e n/k)^k$. 
%The probability of sampling $\Sens = (1/\sqrt{m})\cdot \A$, such that the restricted isometry constant of $\SensS$ exceeds $\RIC$, is upper bounded as
%\bea
%	\Pr\left\{\sum_{S } \1\{\func(\A_\S) > \RIC\}  > 0\right\} \leq 2(e n/k)^k \exp\left(-\frac{m\eps^2(k/m,\RIC)}{2}\right). \label{eqn:RICprob}
%\eea
%The RHS of (\ref{eqn:RICprob}) decays exponentially in $m$, if the exponent $k(1+ \log(n/k))$ of $(e n/k)^k$ is smaller than the exponent $-m \eps^2(k/m,\RIC)/2$. 
%The constant $\eps$ in (\ref{eqn:Ledeps}) depends 
%Increase $m$ and hold the fraction $k/m$ constant, whereby $\eps(k/m,\RIC)$ is also held constant. 
Then if $m c \geq k(1+ \log(n/k))$, for some constant $c$ at most $\eps^2(k/m,\RIC)/2$, the  
%and lower bounds $c  < \eps^2(k/m,\RIC)/2$ with a small enough positive constant $c$. Then 
RHS of (\ref{eqn:RICprob}) vanishes with increasing $m$.
% as long $c  < \eps^2(k/m,\RIC)/2$ is satisfied. That is roughly speaking, to ensure that the probability of a randomly-sampled $\Sens$ having restricted isometry constant at most $\RIC$, to vanish exponentially with increasing $m$, then $m$ should be of the order of $k\log(n/k)$.
%Recall that Thm 1.2 c.f.~\cite{RIP} requires $\RIC_k \leq \sqrt{2}-1$ for the recovery guarantees to hold; see~\cite{Cai,Foucart} for generalizations of Thm. 1.2 c.f.~\cite{RIP}.
Then, one claims $\Sens$ has restricted isometry constant at most $\RIC$ with ``large probability''.
\ifthenelse{\boolean{dcol}}{%dcol
Note, $\Sens = (1/\sqrt{m})\cdot \A$ does not guarantee the normalization $\Norm{\col_i}{2}=1$ in the Gaussian case, but for simplicity this is usually ignored, see~\cite{Candes,RIPsimp}.}{ %singles
Note, $\Sens = (1/\sqrt{m})\cdot \A$ does not guarantee the column normalization $\Norm{\col_i}{2}=1$ in the Gaussian case, but for simplicity this is usually ignored, see~\cite{Candes,RIPsimp}.}

%We say that the above analysis is of type ``worst-case''. 
Recovery guarantee Theorem \RIPthm~involves ``worst-case'' analysis.
Seen from union bound (\ref{eqn:RICprob}), if \emph{any} one submatrix $\pmb{A}_\S$ satisfies $\func(\pmb{A}_\S) > \RIC m^\half $, the \emph{whole} matrix $\pmb{A}$ is deemed to have restricted isometry constant strictly larger than $\RIC$. 
Still, such union bounds are conceptually simple and thus commonly employed in ``worst-case'' CS analysis. 
%On the other hand 
They are very useful, whereby in past seminal works they established optimal compression rates and powerful recovery guarantees~\cite{Candes,Donoho,RIPsimp,Blanchard,Bah}.
Hence, it is our interest to present a result that addresses the tightness of union bounds analyses. The question of bound tightness has already been investigated in past works~\cite{Bah,Dossal2009}.
However this work stands apart, by utilizing a mathematical apparatus that is able to theoretically calculate the tightness of such union bounds. 
Such a result would have important theoretical implications, and may potentially void the need for empirical studies and ad-hoc methods.
To our knowledge, such a result has never been discussed before in the context of CS.

%However, the result presented here comes from \emph{U-statistics}, which has important theoretical implications from the  standpoint of being able to calculate the tightness of such union bounds.
% [Meyer?].

%\ifthenelse{\boolean{dcol}}{%dcol
%Next we introduce U-statistics. 
%The utility of this theory comes from its abundance, where related work~\cite{Lim1} discusses a different application to ``average-case'' recovery guarantees.}{ 
%Next we introduce the concept of U-statistics. 
%The utility of this theory comes from its abundance, where related work~\cite{Lim1} discusses a different application to ``average-case'' recovery guarantees.}

%\renewcommand{\fn}{\footnote{While~\cite{Lim1} (and references therein) proposed certain benefits of ``average-case'' analysis, it is not difficult to envision certain applications that strictly require guarantees in the ``worst-case'' sense.}}

%\renewcommand{\fn}{\footnote{While~\cite{Lim1} (and references therein) proposed certain benefits of ``average-case'' analysis, the union bound analyses here is still useful in applications that strictly require guarantees in the ``worst-case'' sense.}}

%\renewcommand{\fn}{\footnote{While~\cite{Lim1} (and references therein) proposed certain benefits of ``average-case'' analysis, certain applications may strictly require guarantees in the ``worst-case'' sense. Thus~\cite{Lim1} is in no position of demeaning \emph{all} ``worst-case'' analyses.}}

\renewcommand{\fn}{\footnote{While~\cite{Lim1} (and references therein) proposed certain benefits of ``average-case'' analysis, we emphasize that ``worst-case'' analyses are useful for important reasons stated in the text. Also, not forgetting that certain applications may strictly require guarantees in the ``worst-case'' sense.}}

%the union bound analyses here is still useful in applications that 
The said apparatus is called \emph{U-statistics}, whose concept is introduced in the next subsection.
%The utility of this theory comes from its abundance, where related work~\cite{Lim1} discusses a different application to ``average-case'' recovery guarantees.
U-statistical theory is very well-studied, and related work~\cite{Lim1} discusses a different application to ``average-case'' recovery guarantees\fn.

%\ifthenelse{\boolean{longver}}{
%\textbf{Related Past Work (random analysis) :} Rudelson and Vershynin applied a wide variety of \emph{sub-gaussian} and \emph{sub-exponential} concentration bounds to ``worst-case'' non-asymptotic analysis~\cite{\cite{Rudelson2010},Rudel}. These include \emph{epsilon-net} approaches (\emph{e.g.},~\cite{RIPsimp}) which give optimal rate $m =  \mathcal{O}(k\log(n/k))$. For asymptotic eigenvalue estimates for \emph{Wishart} matrices, see~\cite{Blanchard,Bah}. For $\ell_0/\ell_1$-equivalence, see~\cite{DLP,DTExp}.
%For recent developments involving sums of random matrices, \emph{i.e.} the Ahlswede-Winter method~\cite{Ahlswede2002}, see~\cite{Tropp2011,Hsu2011,Vershynin2009}. % for concentration results similar to Thm. \LedThm. 
%
%{\it Note: the above list is by no means exhaustive. These works simply contain the ideas that are drawn on most for this paper.}
%}{}

\subsection{U-statistics} \label{ssect:UstatStr}

\newcommand{\E}{\mathbb{E}}
\newcommand{\g}{g}
\newcommand{\ZO}{\{0,1\}}

U-statistics were invented in the late 40's by Hoeffding as a theory for non-parametric testing~\cite{Hoeffding1948}.
A function $\func : \Real^{m \times k} \rightarrow \Real$ is said to be a \textbf{kernel}, 
%Let $g$ denote a function from $\Real^{m \times k}$ to the set of numbers in $\Real$. We say that the function $g$ is a \textbf{kernel}, 
if for any $\mat{A},\mat{A}' \in \Real^{m \times k}$, we have $\func(\mat{A}) = \func(\mat{A}')$ if matrix ${\mat{A}}'$ can be obtained from $\mat{A}$ by \emph{column reordering}.
% for any matrix $\tilde{\mat{A}}$ obtained from some \emph{column reordering} of $\mat{A}$. If a kernel $g$ maps only to numbers between $0$ and $1$, then $g$ is called a \textbf{bounded kernel}. 
%Let $\ZO$ be the set of real numbers bounded below by $0$ and above by $1$, i.e., $\ZO = \{a \in \Real : 0 \leq a \leq 1\}$.
U-statistics are associated with functions $g : \Real^{m \times k} \times \Real \rightarrow \ZO$ known as \textbf{indicator kernels}.
In this paper we only consider indicator kernels $g$ of the form $g(\mat{A},a) = \Ind{\func(\mat{A}) > a}$.
Examples of indicators can be constructed with $\func $ equals (\ref{eqn:g0}), as well as $\func=\eigM$ and $\eigm$.
%
%To obtain an indicator kernel $g$, simply use some kernel $\func$ and set 
%$g(\mat{A},a) = \Ind{\func(\mat{A})\leq a}$ or $g(\mat{A},a) = \Ind{\func(\mat{A} > a}$.
%An example of an indicator kernel is $\1\{\sigM^2(\mat{A}) \leq a\}$. 
%In this paper we only consider special types of U-statistics. %, known as \emph{bounded indicator kernel} U-statistics.
The following definition slight differs from that of~\cite{Lim1}.

\newcommand{\V}{U}
\begin{defn}[Indicator Kernel U-Statistics] \label{def:Ustat}
Let $\A$ be a random matrix with $n$ columns. Let $\Sens$ be sampled as $\Sens=\A$. 
%Let $g : \Real^{m \times k} \mapsto \{a \in \Real: 0 \leq a \leq 1\}$ be a bounded kernel. The following quantity
%\bea
%\V_n \define \frac{1}{\Bin{n}{k}} \sum_{\S} g(\Sens_{\S}) \label{eqn:Ustat0}
%\eea
%is a U-statistic of the sampled realization $\Sens=\A$, corresponding to the kernel. On the other hand l
Let $g : \Real^{m \times k} \times \Real \mapsto \ZO$ be a indicator kernel. For any $a\in \Real$, the following quantity
\bea
	 \U_n(a) \define \frac{1}{\Bin{n}{k}} \sum_{\S} g(\Sens_{\S}, a) \label{eqn:Ustat}
\eea
%is a special type of U-statistic as defined in (\ref{eqn:Ustat0}), corresponding to indicator kernel $g$. 
is a U-statistic of the sampled realization $\Sens=\A$, corresponding to the kernel $g$.
In %both  (\ref{eqn:Ustat0}) and 
(\ref{eqn:Ustat}), the matrix $\SensS$ is the submatrix of $\Sens$ indexed on column indices in $\S$, and the sum takes place over all subsets $\S$ in $\{1,2,\cdots, n\}$. %Note the U-statistic is bounded between $0$ and $1$. 
Note, $0 \leq U_n(a) \leq 1$.
\end{defn}

\newcommand{\BargThm}{C}
\newcommand{\Reg}{\theta_n}
\newcommand{\noisesd}{c_Z}
\newcommand{\CandThm}{D}

%\section{Two Concentration Theorems on U-statistics \& Applications to CS} \label{sect:thms}

\newcommand{\gauss}{\int_{-\infty}^a (1/\sqrt{2\pi}) e^{-\t^2/2}d\t}
\renewcommand{\t}{t}

\section{Poisson approximation theorem: ``worst-case'' behavior} \label{sect:Poi}

This section states the U-statistic result, that allows us to compute the tightness of union bounds used in ``worst-case'' analysis. 
%For illustration here we use the context of restricted isometries, and allude to union bounds similar to (\ref{eqn:RICprob}).
For illustration here we use restricted isometries. % and allude to union bounds similar to (\ref{eqn:RICprob}).

One way of putting resticted isometries (\ref{eqn:RIC2}) in the U-statistical context, is to set $\func$ as in (\ref{eqn:RIC}) and consider
%In this subsection, we only consider indicator kernels $g$ of the form $g(\mat{A},a) = \Ind{\func(\mat{A}) \geq a}$. 
the event $\{\U_n(a) =0 \}$ that occurs when $\func(\Sens_\S)$ \emph{never} exceeds $a$, \textit{i.e.},  %for all subsets $\S$, i.e. the probability 
%we looked at the distribution $\Pr\{\U_n(a) \leq u\}$ for values $0 < u < 1$, on the other hand the probability $\Pr\{\U_n(a) = 0\}$ is useful for addressing quantities of interest such as the union bound (\ref{eqn:RICprob}). 
\bea
\Pr\{\U_n(a) =0 \} = \Pr\{\max_\S \func(\A_\S) \leq a \}. \label{eqn:Umax}
\eea
%where the maximum is taken over all $\Bin{n}{k}$ index subsets $\S$. 
%For $\func$ as in (\ref{eqn:g0}), then $\max_{\S} \func(\SensS)$ in (\ref{eqn:Umax}) equals the restricted isometry constant of $\Sens$ (denoted $\RIC_k$ in (\ref{eqn:RIC})). 
%As previously noted $\max_{\S} \func(\SensS)$ is a ``worst-case'' quantity that measures the \emph{largest} restricted isometry over all $\S$. 
%Looking at the maximum and minimum squared-singular values $\sigM^2$ and $\sigm^2$, consider the following. 
However, here we follow~\cite{Lim1} and consider both maximum and minimum squared eigenvalues in $\func$ separately.
%For $\func = \sigM^2$ and $\func = \sigm^2$ respectively, we consider the complementary events to (\ref{eqn:Umax}) satisfy
For $\func = \sigM^2$ and $\func = \sigm^2$ respectively, we consider the complementary events (of the LHS of (\ref{eqn:Umax}))
\bea
\Pr\{\U_n(a)  > 0 \} &=& \Pr\{\max_\S \sigM^2(\A_\S) > a \}, \nn
\Pr\{\U_n(a)  > 0 \} &=& \Pr\{\max_\S -\sigm^2(\A_\S) >-a \}, \label{eqn:Umax2}
\eea
For $a= \RIC_k$, these two events respectively correspond to violation of the upper, and lower, inequalities of (\ref{eqn:RIC}).
Observe how these bounds are similar to the previous union bound (\ref{eqn:RICprob}).

\newcommand{\mtail}{p}
\newcommand{\jtail}{q}
\newcommand{\z}{a}
\newcommand{\Lam}{\lambda_n}
\newcommand{\kernel}{\zeta}
Techniques developed for estimating $\Pr\{\U_n(a) =0 \}$, see (\ref{eqn:Umax}), fall under the umbrella term \emph{Poission approximation}, see~\cite{Barbour,Trust2011,Lao}. The terminology comes from similarities with the Poisson limit of a binomial distribution, see~\cite{Barbour}, ch. 1. To illustrate the last point, consider the special case $k=1$, and let $\col_i$ denote the $i$-th column of $\Sens$. Then $\U_n(a)$ equals the average $\U_n(a) = \frac{1}{n} \sum_{i=1}^n g(\col_i,a) $, where we consider subsets $\S$ of size-$1$ of the form $ \S= \{i\}$, \textit{i.e.}, $\Sens_\S = \col_i$. 
%Let $\Sens$ be the sampled realization $\Sens=\A$ of an $m\times n$ random matrix $\A$, whereby the columns $\A_i$ are IID.
Suppose we sample $\Sens=\A$ where random matrix $\A$ has $n$ IID columns $\A_i$.
Then for any $\a$, we have $g(\A_i,a)$ to be Bernoulli distributed with probability  $p(a)$, recall $p(a) = \E g(\A_i,a) = \Pr\{\func(\A_i) > a \}$. 
%\ifthenelse{\boolean{longver}}{
%Furthermore $U_n(a)$ has the binomial distribution
%\bea
%	\Pr\left\{\U_n(a) = \frac{j}{n}\right\} = \Bin{n}{j} (p(a))^j (1-p(a))^{n-j}  ~~~\mbox{ for all }~~~ 0 \leq j \leq n. \label{eqn:binomial}
%\eea
%From literature on the Poisson distribution, it is well-known that for small probability $p(a)$ and index $j$ (specifically if $n p(a)^2$ and $j^2/n$ are both small), then (\ref{eqn:binomial}) is well-approximated by the expression $(n p(a))^j e^{-n p(a)} /(j!)$, see~\cite{Barbour}, ch. 1. Recall that if an RV, say denoted $B$, is Poisson distributed with parameter $\lambda$, then $\Pr\{B = j\} = \lambda^j e^{-\lambda} /(j!)$. Therefore $(n p(a))^j e^{-n p(a)} /(j!)$ is nothing but the Poisson distribution function with parameter $n p(a)$. 
%}{ %SHORT
Furthermore $U_n(a)$ has the \emph{binomial distribution}, which is well approximated by the expression $(n p(a))^j e^{-n p(a)} /(j!)$ for small probability $p(a)$ and index $j$, see~\cite{Barbour}, ch. 1. The previous expression is nothing but the \emph{Poisson distribution} function with parameter $n p(a)$.

For $k=1$, Poisson approximation naturally holds for U-statistics, but the extension to general $k \geq 1$ is non-trivial.
Let both $\S$ and $\mathcal{R}$ denote index subsets of size $k$. 
%Recall $p(a)= \E g(\A_\S,a) = \Pr\{\func(\A_\S) > a \}$. Define 
%Assume $p(a) >0$ and define 
Define
\bea
\jtail_i(\z)  &=&  \Pr\{\kernel(\A_\S) > \z,\kernel(\A_{\mathcal{R}}) > \z \} \label{eqn:tail}
\eea 
for $1 \leq i < k$ and subsets $\S,\mathcal{R}$ where $|\S \cap \mathcal{R}| = i$.
%and $\Lam(\z) = \Bin{n}{k} \mtail(\z) $. 
%Let $\Lam(\z)$ denote the sum of all tail probabilities $\Pr\{\func(\A_\S) > a \}$ over all subsets $\S$. 
Let $\Lam(\z)$ denote the sum of all tail probabilities, \textit{i.e.}, let $\Lam(\z) = \Bin{n}{k} \mtail(\z) $.
%The Poisson approximation theorem of U-statistics is given as follows.

\renewcommand{\define}{=}
\newcommand{\PoiThm}{2.N}
\begin{thmN}{\PoiThm, \textit{c.f.},~\cite{Barbour}, see p. 35}
Let $\A$ be an $m\times n$ random matrix, whereby the columns $\A_i$ are IID. 
Let $\func$ be a kernel that maps $\Real^{m\times k} \rightarrow \Real$. 
% Let $\Sens$ be a sampled realization $\Sens=\A$.
Let $g$ be an indicator kernel that maps $\Real^{m\times k} \times \Real \rightarrow \{0,1\}$ that satisfies $g(\mat{A},a) = \Ind{\func(\mat{A}) > a}$, and let $p(a)=\E g(\A_\S,a) = \E \U_n(a)$. Let $\U_n(a)$ be a U-statistic of sampled realization $\Sens=\A$ corresponding to indicator kernel $g$.
For all $1\leq i < k$, let $q_i(a)$ be defined as in (\ref{eqn:tail}). Let $\Lam(\z)\define \Bin{n}{k} p(a)$. For some $a\in \Real$ whereby $\mtail(\z) > 0$, the probability $\Pr\{\U_n(a) = 0\} = \Pr\{\max_\S \func(\A_\S) \leq a \}$ is approximated by the function $\exp(-\Lam(\z))$ as follows
\[
	  \left|\Pr\{\max_\S \func(\A_\S)\leq \z\} - \exp(-\Lam(\z)) \right| \leq \eps_n(\z),
\]
where the approximation error $\eps_n(\z)$ is given as 
\ifthenelse{\boolean{dcol}}{ %dol
\bea
	\eps_n(\z) &=& (1-e^{-\Lam(\z)})\left\{\mtail(\z) \left[\Bin{n}{k}-\Bin{n-k}{k} \right] \right. \nn 
	&& \left.+ \sum_{r=1}^{k-1}\Bin{k}{r}\Bin{n-k}{k-r} \mtail(\z)^{-1} \cdot \jtail_r(\z) \right\}. \label{eqn:errPA}
\eea
}{
\bea %%use the long one for the time being
	\eps_n(\z) = (1-e^{-\Lam(\z)})\left\{\mtail(\z) \left[\Bin{n}{k}-\Bin{n-k}{k} \right] + \sum_{r=1}^{k-1}\Bin{k}{r}\Bin{n-k}{k-r} \mtail(\z)^{-1} \cdot \jtail_r(\z) \right\}. \label{eqn:errPA}
\eea}
\end{thmN}
In the sequel, Theorem \PoiThm~will lead to calculating the tightness of union bounds.
The proof uses Stein-Chen techniques and is rather lengthly, thus we refer the reader to~\cite{Barbour}, ch. 2.
%Like the previous U-statistic Theorem \ref{thm:Ustat}, we also require IID columns. 
Similar to Theorem 1 presented in~\cite{Lim1}, Theorem \PoiThm~also requires IID columns.
The quantities (\ref{eqn:Umax2}) of interest is approximated by the function $1-\exp(-\Lam(\z))$ up to error $\eps_n(\z)$ in (\ref{eqn:errPA}).
Note that Theorem \PoiThm~is a non-asymptotic result because of explicit dependence on system sizes $k,m,n$.

\ifthenelse{\boolean{dcol}}{ %docl
\begin{figure}[!t]
	\centering
	  \epsfig{file=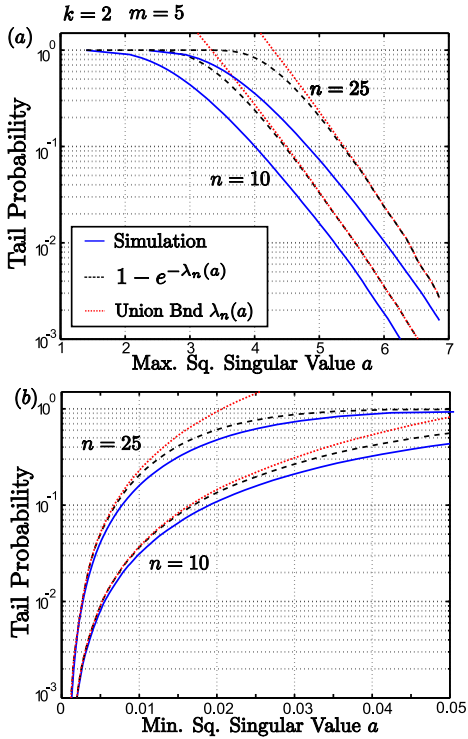,width=.7\linewidth}
		\caption{Gaussian measure. Empirical tail probability $\Pr\{\max_\S \func(\A_\S)> \z\}$ is shown, where $\func=\eigM$ in $(a)$, and $\func=-\eigm$ in $(b)$, respectively corresponding to the the maximum and minimum squared singular values.} 
		\label{fig:GaussRI_extreme}
		\vspace{-15pt}
\end{figure}}
{
\begin{figure}[!t]
	\centering
	  \epsfig{file={GaussRI_extreme.eps},width=.9\linewidth}
		\caption{Gaussian measure. The tail probability $\Pr\{\max_\S \func(\A_\S)> \z\}$ is shown, where $\func=\eigM$ in $(a)$, and $\func=-\eigm$ in $(b)$, respectively corresponding to the the maximum and minimum squared singular values.} %Compare with the probability $\Pr\{B > 0\}$ where $B$ is Possion distributed with parameter $\Lam(\z)$.}
		\label{fig:GaussRI_extreme}
\end{figure}}
First, we illustrate Theorem \PoiThm~using some simulation results. We draw size $5 \times n$ random matrices $\A$, where the columns $\A_i$ are drawn IID. For $\func = \eigM$, Figure \ref{fig:GaussRI_extreme}$(a)$ shows the tail distribution $\Pr\{\max_\S \eigM(\A_\S)>  \z\}$. 
This is obtained by empirical simulation, performed for $k=2$ and two block lengths $n=10$ and $n=25$. 
%\ifthenelse{\boolean{longver}}{ %LONG
%The extremely small chosen values for $k,m$ and $n$ reflect the computational difficulty of the experiment (hence motivates the theoretical study).}{}
%Theorem \PoiThm~claims that the probability $\Pr\{\max_\S \func(\A_\S)>  \z\} $, of which equals $\Pr\{\U_n(a) > 0\}$ see (\ref{eqn:Umax2}), is approximated by $1-e^{\Lam(\z)}$ up to error $\eps_n$ given in (\ref{eqn:errPA}). 
%Theorem \PoiThm~equivalently claims that the probability (\ref{eqn:Umax2}) is approximated by $1-e^{\Lam(\z)}$ up to error $\eps_n$ given in (\ref{eqn:errPA}). 
Figure \ref{fig:GaussRI_extreme}$(a)$ reveals reasonably good approximation for all shown values for $a$ (compared to the function $1-e^{\Lam(a)}$), within a factor of $2$-$4$. 
The approximation is observed to improve in the higher part of the tails (\textit{i.e.} for larger values of $a$). 
For $\func = -\eigm$, Figure \ref{fig:GaussRI_extreme}$(b)$ presents similar empirical comparisons for $\Pr\{\min_\S \eigm(\A_\S) <   \z\}$ in (\ref{eqn:Umax2}). In this case we notice better approximation, the differences become hardly noticeable. 
%The extremely small values for $k,m$ and $n$ chosen in this experiment suggests that Theorem \PoiThm~works well in the non-asymptotic regime (\textit{i.e.}, these are not results that apply when the system parameters $k,m$ and $n$ have to be asymptotically big). 
The extremely small $k,m, n$ values chosen in this experiment suggest Theorem \PoiThm~works well for non-asymptotics. % regime.

%We point out two further remarks. Firstly, the 

The quantity $\Lam(\z)$ defined above,
%as $\Lam(\z)\define \Bin{n}{k} p(a)$, 
%is in fact a union bound for the tail probabilities $\Pr\{\func(\A_\S)>  \z\}$ over $\Bin{n}{k}$ index subsets. % $\S$ in $\{1,2,\cdots, n\}$. 
is in fact a tail probability union bound.
Notice that $\Pr\{\U_n(a) > 0\}\leq \Lam(\z)$, see (\ref{eqn:RICprob}), and we display $\Lam(\z)$ in Figure \ref{fig:GaussRI_extreme}. 
We claim that Theorem \PoiThm~can in fact be used to evaluate of the tightness of the union bound, and shows us (if at all) how much the bound can be improved. 
This is because even though the error (\ref{eqn:errPA}) is given w.r.t. $1-e^{-\Lam(\z)}$ and not $\Lam(\z)$, note that these functions are close for the region of interest, \textit{i.e.}, $1-e^{-\Lam(\z)}= \Lam(\z) + o(\Lam(\z))$ for $\Lam(\z) \rightarrow 0$. 

Union bound analyses only make sense when $\Lam(\z)$ is small.
The first term $[\Bin{n}{k} - \Bin{n-k}{k}] \cdot \mtail(\z)$ from (\ref{eqn:errPA}) must also be small, since is at most $\Lam(\z)$.
We are mostly concerned with the second term, which depends on the joint tail probabilities $q_i(\z)$.
%We present a useful upper estimate of the error $\eps_n$. 
%We can upper bound the first term $[\Bin{n}{k} - \Bin{n-k}{k}] \cdot \mtail(\z)$ using any upper bound for $\mtail(\z)$, e.g. the previous Theorem \LedThm. To have a small first term, it suffices to have $\mtail(\z)=\mathcal{O}(n^{-k})$, see previous exposition (\ref{eqn:RICprob}).
Using the fact $q_{k-1}(\z) \geq q_i(\z)$ for $i \leq k-1$, see~\cite{Trust2011} Lemma 1, we further derive another useful form of (\ref{eqn:errPA})
\ifthenelse{\boolean{dcol}}{ %dcol
\begin{align}
	\eps_n(a) \leq& ~(1-e^{-\Lam(\z)})\left\{\mtail(\z) \left[\Bin{n}{k}-\Bin{n-k}{k} \right]  \right\}  \nn
	 &+ 2^k \cdot \left( \frac{e(2k-1)}{k-1} \right)^{k-1} \cdot \left( \frac{e n}{2k-1} \right)^{2k-1} q_{k-1}(a),
	\label{eqn:errPA2}
\end{align}}{
\bea
	\eps_n(a) &\leq& (1-e^{-\Lam(\z)})\left\{\mtail(\z) \left[\Bin{n}{k}-\Bin{n-k}{k} \right]  \right\}  \nn
	 &&+ 2^k \cdot \left( \frac{e(2k-1)}{k-1} \right)^{k-1} \cdot \left( \frac{e n}{2k-1} \right)^{2k-1} q_{k-1}(a),
	\label{eqn:errPA2}
\eea}
which only depends on a single joint term $q_{k-1}(a)$. 
The exact details of the derivation is given in Appendix \ref{sup:error}.
%The dominating term in front of $q_{k-1}(a)$ in (\ref{eqn:errPA3}) is the binomial coefficient $\Bin{n}{2k-1}$. 
%The exponents of $2^k$ and $\Bin{2k-1}{k}$, are $ k \log 2$ and $(k-1) ( 1 + \log((2k-1)/(k-1)))$, respectively.
The constant term in front of $q_{k-1}(a)$ has exponent at most $(2k-1) \cdot [ 1 + \log (n/(k-1))]$.
This easily follows by the bound $(2k-1)/(k-1) = 2 + 1/(k-1) > 2$, and simple arithmetic. 
%Hence roughly speaking, the error (\ref{eqn:errPA2}) will be small if we can ensure that the joint probability $q_{k-1}(a)$ drops faster than the dominating term $\Bin{n}{2k-1}$. 
Hence as the exponent of $\Bin{n}{k}$ (in front of $p(a)$) in $\Lam(a)$ is at most $k (1 + \log(n/k))$, the error (\ref{eqn:errPA2}) will be small if we can ensure that the joint probability $q_{k-1}(a)$ drops ``twice as fast'' as $p(a)$.% the dominating term $\Bin{n}{2k-1}$. 

%A simpler (but stricter) condition would be to have $q_{k-1}(a) = \mathcal{O} (p(a)^2)$.

%This will surely be the case if  
%
%To estimate the error $\eps_n(a)$ in (\ref{eqn:errPA}) we need access to estimates for $q_i(a)$, whereby $q_i(a)$ equals the joint probability $\Pr\{\kernel(\Sens_\S) > \z,\kernel(\Sens_{\mathcal{R}}) > \z \}$ when $\S$ and $\mathcal{R}$ intersect in exactly $i$ places.

%the joint probability $\Pr\{\kernel(\Sens_\S) > \z,\kernel(\Sens_{\mathcal{R}}) > \z \}$. 

In the next section, we evaluate the (non-asymptotic) approximation error $\eps_n(a)$ given by Theorem \PoiThm,~for two important CS parameters taken from well-studied ``worst-case'' analyses: restricted isometries (see Subsection \ref{ssect:RI}) and mutual coherence (see Subsection \ref{ssect:MC}). 
The main theorems/conclusions will be given for both cases.
%Note that Theorem \PoiThm~can be generally applied to other 
We leave the more complicated null-space property for future work.

\newcommand{\Iv}{^{-1}}
\newcommand{\R}{\mathcal{R}}

\newcommand{\s}{\pmb{\beta}}
\newcommand{\jp}{\ell}
\newcommand{\Rj}{{\R\setminus \{j\}}}

\section{Poisson Approximation Error \& Non-asymptotic Probability Estimates} \label{sect:prob}

%The main focus of this section is to discuss the Poisson approximation error $\eps_n$ given in (\ref{eqn:errPA}) for the previously discussed cases. As (\ref{eqn:errPA2}) shows, we can get upper estimates for $\eps_n$ if we have estimates for the joint tail probabilities $q_i(a)$ defined in (\ref{eqn:tail}), where $q_i(a) = \Pr\{\kernel(\A_\S) > \z,\kernel(\A_{\mathcal{R}}) > \z \}$ when $\S$ and $\mathcal{R}$ intersect in exactly $i$ places. Estimates for $q_i(a)$ are discussed in this section. The bound (\ref{eqn:errPA2}) also requires estimates for the marginal tail probability $p(a)$ that satisfies $p(a) = \Pr\{\kernel(\A_\S) > \z\}$, also discussed in the text.

%joint probability estimates for applying Thereom \PoiThm. We begin with s

%\subsection{Estimating marginal and joint tail probabilities, and Poisson approximation error: Restricted isometries case}
\subsection{Restricted isometries case} \label{ssect:RI}

\newcommand{\row}[1]{{[#1]}}
\renewcommand{\eigm}{\varsigma_{\scriptsize \mbox{\upshape  min}}}
\renewcommand{\eigM}{\varsigma_{\scriptsize \mbox{\upshape max}}}
\renewcommand{\Tr}{ \mbox{\upshape Tr}}
We focus on the case of restricted isometries, \textit{i.e.} when $\func$ is set to equal $\sigM^2$ and $-\sigm^2$ respectively, as in (\ref{eqn:Umax2}). 
%As previously mentioned, estimates for the marginal tail probability $\Pr\{\kernel(\A_\S) > \z\}$, whereby we denote by $p(a)$, is well-established in the literature. 
Estimates for the joint tail probabilities $q_i(a)$, are not as well-addressed as the marginals $\Pr\{\kernel(\A_\S) > \z\}$ (denoted $p(a)$). 
The following proposition presents such estimates. %, that for all values $1\leq i\leq k$. 
%For a submatrix $\pmb{A}_\S$ of $m \times n$ random matrix $\pmb{A}$ indexed on column indices in $\S$, we use $(\pmb{A}_\S)_{i,j}$ to denote the $i,j$-th matrix entry of $\pmb{A}_\S$. 
%We also let $(\pmb{A}_\S)_\row{i}$ the $i$-th row of $\pmb{A}_\S$, i.e. if $|\S| =k$ then $(\pmb{A}_\S)_\row{i}= [(\pmb{A}_\S)_{i1},(\pmb{A}_\S)_{i2},\cdots, (\pmb{A}_\S)_{ik}]$. 
Also for any two Bernoulli distributions with probabilities $a$ and $b$, let $\mathcal{D}(a||b)$ denote the \textbf{binary information divergence}, i.e. $\mathcal{D}(a||b) = a \log(a/b) + (1-a) \log ((1-a)/(1-b))$. 
Note that $\log$ here indicates \emph{natural log}.
For any matrix $\mat{A}$ with entries $a_{ij}$, the \textbf{$i$-th row outer product} of $\mat{A}$ equals the $k\times k$ matrix with entries $a_{i \ell} \cdot a_{i\omega}$. 

\newcommand{\taup}{\tau_p}
\newcommand{\tauAB}{\tau_q}
\newcommand{\taupM}{\tau_{p,\scriptsize \mbox{\upshape max}}}
\newcommand{\taupm}{\tau_{p,\scriptsize \mbox{\upshape min}}}
\newcommand{\pO}{c_1}
\newcommand{\pT}{c_2}

\newcommand{\Am}{\mat{C}}
\newcommand{\Bm}{\mat{D}}

\ifthenelse{\boolean{dcol}}{ %dcol
\begin{figure*}[!b]
	\newcounter{tempequationcounter}
	\normalsize
	\setcounter{tempequationcounter}{\value{equation}}
	\hrulefill
	\begin{align}
	\setcounter{equation}{12}
		&\Pr\{\sigM^2(\Sens_\S) > \z,\sigM^2(\Sens_{\mathcal{R}}) > \z \}  \leq 
		k^2  \exp\left(-m\cdot 2\left(\mathcal{D}\left(\left.\left.\frac{a}{k}\right|\right|\frac{\tauAB}{\taupM}\right) + c_3\left(a,k,\frac{\tauAB}{\taupM},\frac{\taupM^2}{\tauAB}\right)\right)\right)\nn
		&\Pr\{\sigm^2(\Sens_\S) < \z,\sigm^2(\Sens_{\mathcal{R}}) < \z \} \leq 
		k^2  \exp\left(-m\cdot 2\left(\mathcal{D}\left(\left.\left.\frac{a}{k}\right|\right|\frac{\tauAB}{\taupm}\right) + c_3\left(a,k,\frac{\tauAB}{\taupm},\frac{\taupm^2}{\tauAB}\right)\right)\right)
		\label{eqn:joint}
	\end{align}
	\setcounter{equation}{\value{tempequationcounter}}
\end{figure*}}

\begin{pro} \label{pro:joint}
Let $\pmb{A}$ be an $m \times n$ random matrix, whereby the columns $\A_i$ of $\A$ are identically distributed. Let every entry $A_{ij}$ of $\A$, satisfy the bound $|A_{ij}| \leq 1/\sqrt{m}$, such that the columns $\A_i$ are normalized as $\Norm{\A_i}{2}\leq 1$. Let the rows $[A_{i1},A_{i2},\cdots, A_{in}]$ of $\A$ be IID.

%Further assume that each entry $A_{ij}$ has symmetrical measure, i.e. the measure of $A_{ij}$ equals that of $-A_{ij}$. 
%Let $\S,\R$ be size-$k$ index subsets of $\{1,2,\cdots, n\}$, whereby $\S$ and $\R$ intersect in exactly $i$ positions, i.e.,
Let $\S,\R$ be size-$k$ index subsets, whereby $\S$ and $\R$ intersect in exactly $i$ positions, i.e., $|\S \cap \R| = i$. Let $\pmb{X}$ and $\pmb{Y}$ equal the first row outer products of the matrices $\sqrt{\frac{m}{k}} \A_\S$ and $\sqrt{\frac{m}{k}} \A_\mathcal{R}$, respectively.
Let $\tauAB$ denote a constant that satisfies 
\bea
	\tauAB \cdot \Tr(\Am) \Tr(\Bm) \geq \E \Tr(\Am\pmb{X}) \Tr(\Bm\pmb{Y}), \label{eqn:tauAB}
\eea
where $\Am,\Bm$ can be any positive semidefinite matrices of size $k\times k$, and $\Tr(\cdot)$ denotes trace. Also define constants $\taupM$ and $\taupm$ as follows $\taupM \define \eigM(\E\pmb{X})$ and $\taupm \define \eigm(\E\pmb{X})$, where
$\eigM$ and $\eigm$ denote maximum and minimum eigenvalues, respectively.
%
%Let $\pmb{B}_\S$ denote the $1 \times k$ vector that satisfies $\pmb{B}_\S=[(\pmb{A}_\S)_{ij_1},(\pmb{A}_\S)_{ij_2},\cdots, (\pmb{A}_\S)_{ij_k}]$ whereby $j_\ell \in \S$, and denote $\pmb{B}_\R$ similarly.
%Define the expectations 
%\bea
% %\tau &\define& \Norm{\E \left(\frac{\pmb{B}_\S^T \pmb{B}_\S}{\Norm{\pmb{B}_\S}{2}^2} \right)}{2}\nn
% \taup \define \E \left( \frac{A_{1j_1}^2}{\Norm{\pmb{B}_\S}{2}^2} \right), ~~~\mbox{ and }~~~
% \tauAB \define \E \left( \frac{A_{1j_1}^4}{\Norm{\pmb{B}_\S}{2}^2\cdot \Norm{\pmb{B}_\R}{2}^2} \right) + 
%                   2 \E \left( \frac{A_{1j_1}^2A_{1j_2}^2}{\Norm{\pmb{B}_\S}{2}^2\cdot \Norm{\pmb{B}_\R}{2}^2} \right),
%  \label{eqn:tau}
%\eea

%Let the constants $\ConstZero$ and $\ConstOne$ satisfy $\ConstZero = \tauAB/\taupM$  and $\ConstOne = \taupM^2/\tauAB$. 
%Assume $\tauAB \leq \min(\sqrt{\taupM},\sqrt{\taupm})$.
Assume $\max(\taupM,\taupm) \leq \sqrt{\tauAB} $.
Let $\mathcal{D}(\cdot || \cdot)$ denote binary information divergence.
For all $1\leq i < k$ such that $|\S \cap \R| = i$, 
%\ifthenelse{\boolean{longver}}{ %LONG
%\bea
%\Pr\{\sigM^2(\Sens_\S) > \z,\sigM^2(\Sens_{\mathcal{R}}) > \z \}
%&\leq& k^2  \exp\left(-m\cdot 2\left(\mathcal{D}\left(\left.\left.\frac{a}{k}\right|\right|\frac{\tauAB}{\taupM}\right) + c_3\left(a,k,\frac{\tauAB}{\taupM},\frac{\taupM^2}{\tauAB}\right)\right)\right),~~\mbox{ $1 \leq a \leq k$} \nn
%\Pr\{\sigm^2(\Sens_\S) < \z,\sigm^2(\Sens_{\mathcal{R}}) < \z \}
%&\leq& k^2  \exp\left(-m\cdot 2\left(\mathcal{D}\left(\left.\left.\frac{a}{k}\right|\right|\frac{\tauAB}{\taupm}\right) + c_3\left(a,k,\frac{\tauAB}{\taupm},\frac{\taupm^2}{\tauAB}\right)\right)\right),~~\mbox{ $a < 1$}
%\label{eqn:joint}
%\eea}
%{ %%SHORT
\ifthenelse{\boolean{dcol}}{
the joint tail probability bounds (\ref{eqn:joint}) (see page bottom) hold
}{ %single
the following joint tail probability bounds hold
\bea
\!\!\!\!\!\! \Pr\{\sigM^2(\Sens_\S) > \z,\sigM^2(\Sens_{\mathcal{R}}) > \z \}
\!\!\! &\leq& \!\!\!
k^2  \exp\left(-m\cdot 2\left(\mathcal{D}\left(\left.\left.\frac{a}{k}\right|\right|\frac{\tauAB}{\taupM}\right) + c_3\left(a,k,\frac{\tauAB}{\taupM},\frac{\taupM^2}{\tauAB}\right)\right)\right),\nn
\!\!\!\!\!\! \Pr\{\sigm^2(\Sens_\S) < \z,\sigm^2(\Sens_{\mathcal{R}}) < \z \}
\!\!\! &\leq& \!\!\!
k^2  \exp\left(-m\cdot 2\left(\mathcal{D}\left(\left.\left.\frac{a}{k}\right|\right|\frac{\tauAB}{\taupm}\right) + c_3\left(a,k,\frac{\tauAB}{\taupm},\frac{\taupm^2}{\tauAB}\right)\right)\right),\nn
\label{eqn:joint}
\eea}
for $k\cdot \taupM < a < k$ and $0 < a< k \cdot \taupm $ respectively, and
where the constant $c_3 =  c_3(a,k,\pO,\pT)$ satisfies
\ifthenelse{\boolean{dcol}}{ %dcol
\begin{align}
	\setcounter{equation}{13}
	c_3(a,k,\pO,\pT) &= \frac{a}{k}\log \left(\frac{c_4 + \frac{a}{k}}{\frac{a}{k}} \right) \nn
	&-\frac{1}{2}\log\left( \pT(1+c_4)^2 + \left(1-\pT\right) \left(\frac{1-\frac{a}{k}}{1-\pO}\right)^2 \right)  \label{eqn:jointc3}
\end{align}}{
\bea
	c_3(a,k,\pO,\pT) &=& \frac{a}{k}\log \left(\frac{c_4 + \frac{a}{k}}{\frac{a}{k}} \right)
	-\frac{1}{2}\log\left( \pT(1+c_4)^2 + \left(1-\pT\right) \left(\frac{1-\frac{a}{k}}{1-\pO}\right)^2 \right)  \label{eqn:jointc3}
\eea}
\ifthenelse{\boolean{dcol}}{ %dcol
and the constant $c_4 = c_4(a,k,\pO,\pT)$ satisfies
\bea
 \!\!\!c_4(a,k,\pO,\pT) \!\!\!&=& \!\!\!\frac{1}{2}\sqrt{1 + \frac{4(\pT^{-1}-1)(1-\frac{a}{k})\frac{a}{k}}{(1-\pO)^2}}-\frac{1}{2}. \label{eqn:jointc4}
\eea}{ %single
and the constant $c_4 $ satisfies
\bea
c_4 = c_4(a,k,\pO,\pT) &=& \frac{1}{2}\sqrt{1 + \frac{4(\pT^{-1}-1)(1-\frac{a}{k})\frac{a}{k}}{(1-\pO)^2}}-\frac{1}{2}. \label{eqn:jointc4}
\eea}
\end{pro}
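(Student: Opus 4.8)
The plan is to bound each joint tail probability $\Pr\{\sigma^2_{\max}(\Sens_\S) > a, \sigma^2_{\max}(\Sens_{\mathcal R}) > a\}$ by first reducing the largest-eigenvalue event to a bilinear/quadratic form that can be controlled by a Chernoff (moment-generating-function) argument, then applying an Ahlswede--Winter-style matrix-exponential bound to the resulting sum of independent rank-one contributions coming from the IID rows of $\A$. Concretely, writing $\pmb{X}$ and $\pmb{Y}$ for the first-row outer products of $\sqrt{m/k}\,\A_\S$ and $\sqrt{m/k}\,\A_{\mathcal R}$, the Gram matrices $\GramS$ and $\Sens_{\mathcal R}^T\Sens_{\mathcal R}$ are (after scaling) averages of $m$ IID copies of such outer products. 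The event $\sigma^2_{\max}(\Sens_\S) > a$ is equivalent to the existence of a unit test vector $\mat u$ with $\mat u^T \GramS\,\mat u > a$; I would discretize this over a suitable finite set (hence the $k^2$ prefactor — an $\varepsilon$-net / union-over-coordinates argument of size polynomial in $k$ that each contributes a scalar tail) so that the problem reduces to a scalar large-deviation estimate for $\Tr(\Am \pmb{X})$ and $\Tr(\Bm \pmb{Y})$ with rank-one PSD test matrices $\Am, \Bm$.

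Next I would handle the joint event through a single two-sided Chernoff step: for parameters $s,t \ge 0$,
\[
\Pr\{\ldots\} \le e^{-s a - t a}\;\E\exp\!\big(s\,\Tr(\Am\,\textstyle\sum_\ell \pmb{X}_\ell) + t\,\Tr(\Bm\,\textstyle\sum_\ell \pmb{Y}_\ell)\big),
\]
where $\pmb{X}_\ell,\pmb{Y}_\ell$ are the per-row outer products and the sums run over the $m$ rows. Because the rows are IID, the joint MGF factors over $\ell$, and the overlap parameter $i = |\S\cap\mathcal R|$ enters only through how $\pmb{X}_\ell$ and $\pmb{Y}_\ell$ share coordinates — but for the \emph{bound} I would not track $i$ explicitly; instead I would invoke hypothesis (\ref{eqn:tauAB}), $\tau_q\,\Tr(\Am)\Tr(\Bm) \ge \E\,\Tr(\Am\pmb{X})\Tr(\Bm\pmb{Y})$, to decouple the two traces in the cross term, and the boundedness $|A_{ij}|\le 1/\sqrt m$ (so entries of $\pmb{X}_\ell,\pmb{Y}_\ell$ are $O(1/k)$) to get a clean per-row bound on the exponential moment. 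This produces a per-row factor of the form $\tau_q(1+c_4)^2 + (1-\tau_q)\big((1-\tfrac ak)/(1-c_1)\big)^2$ after optimizing an auxiliary scalar; optimizing the Chernoff parameter then forces the specific choice $c_4$ in (\ref{eqn:jointc4}) (the positive root of the relevant quadratic) and collects the leftover terms into $c_3$ as in (\ref{eqn:jointc3}). The binary divergence $\mathcal D(\tfrac ak \,\|\, \tau_q/\tau_{p,\max})$ appears because, after scaling, $\Tr(\Am\pmb{X}_\ell)$ behaves like a Bernoulli-type increment with "success" level $\tau_q/\tau_{p,\max}$ and we are asking it to concentrate near level $a/k$; the $-m\cdot 2(\cdots)$ rate is exactly the Chernoff/Sanov exponent for $m$ such IID increments, with the factor $2$ reflecting that we pay the exponent \emph{twice}, once for each of $\S$ and $\mathcal R$. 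The minimum-eigenvalue statement is dual: the event $\sigma^2_{\min}(\Sens_\S) < a$ is $\mat u^T\GramS\,\mat u < a$ for some unit $\mat u$, and the same Chernoff argument applied to $-\Tr(\Am\pmb{X})$ (valid because $\Am\pmb X$ is bounded above as well) gives the second inequality with $\tau_{p,\min}$ in place of $\tau_{p,\max}$, and the ranges $k\tau_{p,\max} < a < k$ resp.\ $0 < a < k\tau_{p,\min}$ are precisely the regimes in which the relevant divergence argument is in its large-deviation (rather than trivial) regime, which is also where the assumption $\max(\tau_{p,\max},\tau_{p,\min}) \le \sqrt{\tau_q}$ keeps $c_3,c_4$ well-defined and the exponent nonnegative.

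The main obstacle I anticipate is the decoupling of the two overlapping Gram matrices: because $\Sens_\S$ and $\Sens_{\mathcal R}$ share $i$ columns, $\pmb{X}_\ell$ and $\pmb{Y}_\ell$ are dependent within a row, so the joint MGF does \emph{not} factor into a product of a marginal for $\S$ and a marginal for $\mathcal R$; getting a bound that is genuinely the square of the marginal rate requires exactly the hypothesis (\ref{eqn:tauAB}) to absorb the cross-correlation $\E\,\Tr(\Am\pmb{X})\Tr(\Bm\pmb{Y})$ into the product of marginals at the cost of the constant $\tau_q$. A secondary technical nuisance is controlling the exponential moment of a bounded but not sub-Gaussian quadratic form uniformly over the net — here I would use the boundedness $|A_{ij}|\le 1/\sqrt m$ together with a standard "$e^x \le 1 + x + \tfrac{x^2}{2}e^{|x|}$"-type estimate (or Bennett's inequality for bounded increments) rather than anything delicate, accepting the sub-optimal rate that the authors flag in the introduction. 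Tracking the various auxiliary constants so that they collapse into exactly the stated $c_3$ and $c_4$ is bookkeeping-heavy but routine once the Chernoff parameter and the auxiliary scalar in the net argument are optimized; I would defer those computations to an appendix.
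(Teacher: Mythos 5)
Your overall skeleton (joint Markov/Chernoff step with two parameters, decoupling the cross term via the hypothesis (\ref{eqn:tauAB}), optimizing to produce $c_3$ and $c_4$, and dualizing for $\sigm^2$) matches the spirit of the paper's argument, but two of your load-bearing steps do not work as described. First, your reduction of the event $\{\sigM^2(\Sens_\S)>a\}$ to finitely many fixed rank-one test matrices via a discretization ``of size polynomial in $k$'' is the crux of the problem and is not available: to certify the largest eigenvalue of a $k\times k$ Gram matrix through quadratic forms $\mat u^T\GramS\mat u$ you need an $\varepsilon$-net of the unit sphere in $\Real^k$, whose cardinality is exponential in $k$ (a union over coordinate vectors only controls diagonal entries, not $\eigM$). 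This would replace the stated $k^2$ prefactor by $e^{ck}$ and destroy the claimed bound. The paper's proof avoids any net: it writes $\Smat_m=\sum_{i=1}^m\X_i$, uses $\Pr\{\eigM(\Smat_m)\geq t\}\leq e^{-ht}\,\E\Tr(e^{h\Smat_m})$ (the Ahlswede--Winter trace trick, (\ref{eqn:chain})), and the $k^2$ in (\ref{eqn:joint}) is simply $\Tr(\mat I)\cdot\Tr(\mat I)$ left over after iterating the per-row bound for the product of the two traces in (\ref{eqn:chain2}); the non-commutativity across rows is handled by the Golden--Thompson inequality at each of the $m$ peeling steps.

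Second, your per-row moment bound via $e^x\leq 1+x+\tfrac{x^2}{2}e^{|x|}$ (or Bennett) produces second-order terms such as $\E\{\Tr(\Am\X)^2\Tr(\Bm\Y)\}$, which the hypothesis (\ref{eqn:tauAB}) does not control --- it only bounds the bilinear first-order cross term $\E\Tr(\Am\X)\Tr(\Bm\Y)$ --- so the bookkeeping cannot collapse to the stated $c_3,c_4$. The paper instead uses the convexity bound $e^{h\alpha}\leq 1+(e^h-1)\alpha$ for $0\leq\alpha\leq 1$, lifted to matrices as the positive-semidefiniteness of $\mat I+(e^h-1)\X-e^{h\X}$ in (\ref{eqn:Cher}) (this is where the normalization $|A_{ij}|\leq 1/\sqrt m$, hence $\eigM(\X_i)\leq 1$, is used). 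Multiplying the two resulting linear-in-$\X$ and linear-in-$\Y$ trace bounds (Lemma \ref{lem:corAW} and (\ref{eqn:joint_iter})) yields exactly one cross term $\E\Tr(\Am\X)\Tr(\Bm\Y)$, absorbed by $\tau_q$, and two marginal terms absorbed by $\taupM$ (resp.\ $\taupm$); iterating $m$ times gives the per-row factor $\pT(\pO e^{h_1}+1-\pO)(\pO e^{h_2}+1-\pO)+1-\pT$ with $\pO=\tau_q/\taupM$, $\pT=\taupM^2/\tau_q$, whose optimization over $h_1=h_2$ is what forces the quadratic whose root is (\ref{eqn:jointc4}) and leaves the remainder (\ref{eqn:jointc3}). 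So the divergence-plus-$c_3$ exponent is not a generic Sanov rate paid ``once for each of $\S$ and $\R$'' as you suggest, but the exact outcome of this specific linearization; without it, your route either loses the $k^2$ prefactor or requires moment hypotheses the proposition does not grant.
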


Proposition \ref{pro:joint} provides upper bounds on $q_i(a) $ or $ \Pr\{\kernel(\A_\S) > \z,\kernel(\A_{\mathcal{R}}) > \z \}$, for both cases $\func = \sigM^2$ and $\func = -\sigm^2$. 
This result requires an estimate for the constant $\tauAB$ in (\ref{eqn:tauAB}).
While Proposition \ref{pro:joint} does not assume independent columns $\pmb{A}_i$, however Theorem \PoiThm~does. Under this additional column independence assumption, we claim that we can take
%\bea
%\tauAB = \frac{3}{k^2}~~~~ \mbox{ if $A_{ij}$ are Bernoulli}. \label{eqn:ber_claim}
%\eea
\bea
\tauAB = \frac{\beta}{k^2}~~~~ \mbox{ and } ~~~~\taupM = \taupm = \frac{\beta'}{k}. \label{eqn:ber_claim}
\eea
for some constants $\beta,\beta'$. 
The latter claim is easily verified to be true, whereby in this case $\E \pmb{X}$ (see Proposition \ref{pro:joint}) equals an identity matrix scaled by some $\beta'/k$, specifically $\beta' = m \cdot (\E A_{ij})^2$.
If $A_{ij}$ are Bernoulli $\{-1/\sqrt{m},1/\sqrt{m}\}$ then $\beta'=1$ and we claim $\beta = 3$.
The former will be clarified in the upcoming Proposition \ref{pro:tau_est} in this subsection. 
%Also for the same Bernoulli case, we can check that $\taupM = \taupm = 1/k$.
Note, to meet the $\max(\taupM, \taupm) \leq \sqrt{\tau_q}$ condition in above Proposition \ref{pro:joint} we require $\beta' \leq \sqrt{\beta}$ in (\ref{eqn:ber_claim}); this is satisfied in the Bernoulli case.
The proof of Proposition \ref{pro:joint} uses a technique called the \emph{Ahlswede-Winter method}~\cite{Ahlswede2002}, that results the factor of $k^2$ appearing in (\ref{eqn:joint}). 
%The counterpart result for the marginal case (i.e. corresponding to $p(a)$ or $\Pr\{\kernel(\A_\S) > \z\}$) is as follows.
The counterpart result for the marginal case is as follows.

\newcommand{\TroppThm}{C}
%\begin{thmN}[c.f. Thm. 5.1,~\cite{Tropp2011}] \label{pro:marg}
\begin{thmN}{\TroppThm,~\textit{c.f.}, Thm. 5.1,~\cite{Tropp2011}} 
Let the assumptions on matrix $\A$, index subset $\S$, and matrix $\pmb{X}$ be the same as Theorem \ref{pro:joint}. 
%Let $\pmb{A}$ be an $m \times n$ random matrix, whereby the rows $[(\pmb{A})_{i1},(\pmb{A})_{i2},\cdots,(\pmb{A})_{in}]$ are IID. Let every entry $A_{ij}$ be bounded as $|A_{i,j}|\leq 1/\sqrt{m}$. Let $\S\subset \{1,2,\cdots, n\}$ be a size-$k$ index subsets.
%Let the size $k\times k$ random matrix $\pmb{B}$ satisfy $(\pmb{B})_{\ell,\omega} = (\pmb{A}_\S)_{ij_\ell} (\pmb{A}_\S)_{ij_\omega}$, where $j_\ell, j_\omega \in \S$. 
Let $\taupM \define \eigM(\E \pmb{X})$ and $\taupm \define \eigm(\E \pmb{X})$.
% Let the quantity $\taup$ be defined as in (\ref{eqn:tau}).  
%Let the function $\kernel(\cdot)$ on $m\times k$ matrices $\mat{A}$ satisfy $\func(\mat{A}) = \max(|\eigM(\mat{A}) - 1|, |\eigm(\mat{A})-1|)$.
Let $\mathcal{D}(\cdot || \cdot)$ denote binary information divergence.
Then the following tail probability bounds hold
\ifthenelse{\boolean{dcol}}{ %dco
\bea
\Pr\{\sigM^2(\pmb{A}_\S) > \z\} &\leq& k  e^{-m\cdot \mathcal{D}(\frac{a}{k}||\taupM) }, \nn
\Pr\{\sigm^2(\pmb{A}_\S) < \z\} &\leq& k  e^{-m\cdot \mathcal{D}(\frac{a}{k}||\taupm) },\label{eqn:marg}
\eea
for respective limits $k\cdot \taupM < a < k$ and $0 < a < k\cdot \taupm$.}{
\bea
\Pr\{\sigM^2(\pmb{A}_\S) > \z\} &\leq& k  e^{-m\cdot \mathcal{D}(\frac{a}{k}||\taupM) },~~\mbox{ $k\cdot \taupM < a < k$} \nn
\Pr\{\sigm^2(\pmb{A}_\S) < \z\} &\leq& k  e^{-m\cdot \mathcal{D}(\frac{a}{k}||\taupm) }~~~~~~~\mbox{ $0 < a < k\cdot \taupm$}.\label{eqn:marg}
\eea}
\end{thmN}
The exponent in the bounds for the joint case (\ref{eqn:joint}), seem to be twice that of the marginal case (\ref{eqn:marg}). This would be true if the constant $c_3$ in (\ref{eqn:joint}) and (\ref{eqn:jointc3}) is small, and if $\tauAB/\taupM \approx \taupM$ and $\tauAB/\taupm \approx \taupm$ (the latter two conditions are true if $\beta \approx \beta'$ in (\ref{eqn:ber_claim}) above).
Recall that having (\ref{eqn:joint}) drop twice as fast as (\ref{eqn:marg}) is excellent from the standpoint of achieving a small Poisson approximation error $\eps_n(a)$.
%Observe that (\ref{eqn:joint}) holds uniformly for all values for $i < k$, whereby $i = |\S \cap \mathcal{R}|$. 
%Recall from (\ref{eqn:errPA2}) and (\ref{eqn:errPA}) that the dominating term in front of the joint quantity $q_{k-1}(a)$ is roughly $\Bin{n}{2k-1}$, whereas the term in front of the marginal quantity is $\Bin{n}{k}$.
%%Thus the above observation seems reasonable. 
%Thus if $p(a)$ drops fast enough to make $\Bin{n}{k} \cdot p(a) $ small, then if $q_{k-1}(a)$ drops twice as fast as $p(a)$, the second error term in (\ref{eqn:errPA2}) would be also small.
Figure \ref{fig:RateCurve}$(a)$ seems to suggest that the exponent of (\ref{eqn:joint}) becomes double that of (\ref{eqn:marg}). 
%Both exponents shown here are those within the $\exp(\cdot)$ terms, and the exponent of (\ref{eqn:joint}) is ``halved'' for easier comparison, meaning that it is the exponent after factoring out $-2 m$ (for (\ref{eqn:marg}) we only factor out $-m$).
Here we plot the exponents within the $\exp(\cdot)$ terms in both (\ref{eqn:joint}) and (\ref{eqn:marg}), where the exponent of (\ref{eqn:joint}) is ``halved'' for easier comparison (meaning that it is the exponent after factoring out $-2 m$, where for (\ref{eqn:marg}) we only factor out $-m$).
The $\sigM^2$ and $\sigm^2$ cases are respectively shown for different $a$ values in the ranges $a > 1$ and $a <  1$, 
according to the different given expressions for the ranges of $a$ (note $k \cdot \taupM = k \cdot \taupm = 1$).
As it becomes more apparent as $k$ increases from 4 to 20, the plotted (``halved'') exponents of (\ref{eqn:joint}) is close to that of (\ref{eqn:marg}).

 %As $k$ increases from $4$ to $20$ we observe that the exponents 

\ifthenelse{\boolean{dcol}}{ %dcol
\begin{figure}[!t]
	\centering
	  \epsfig{file=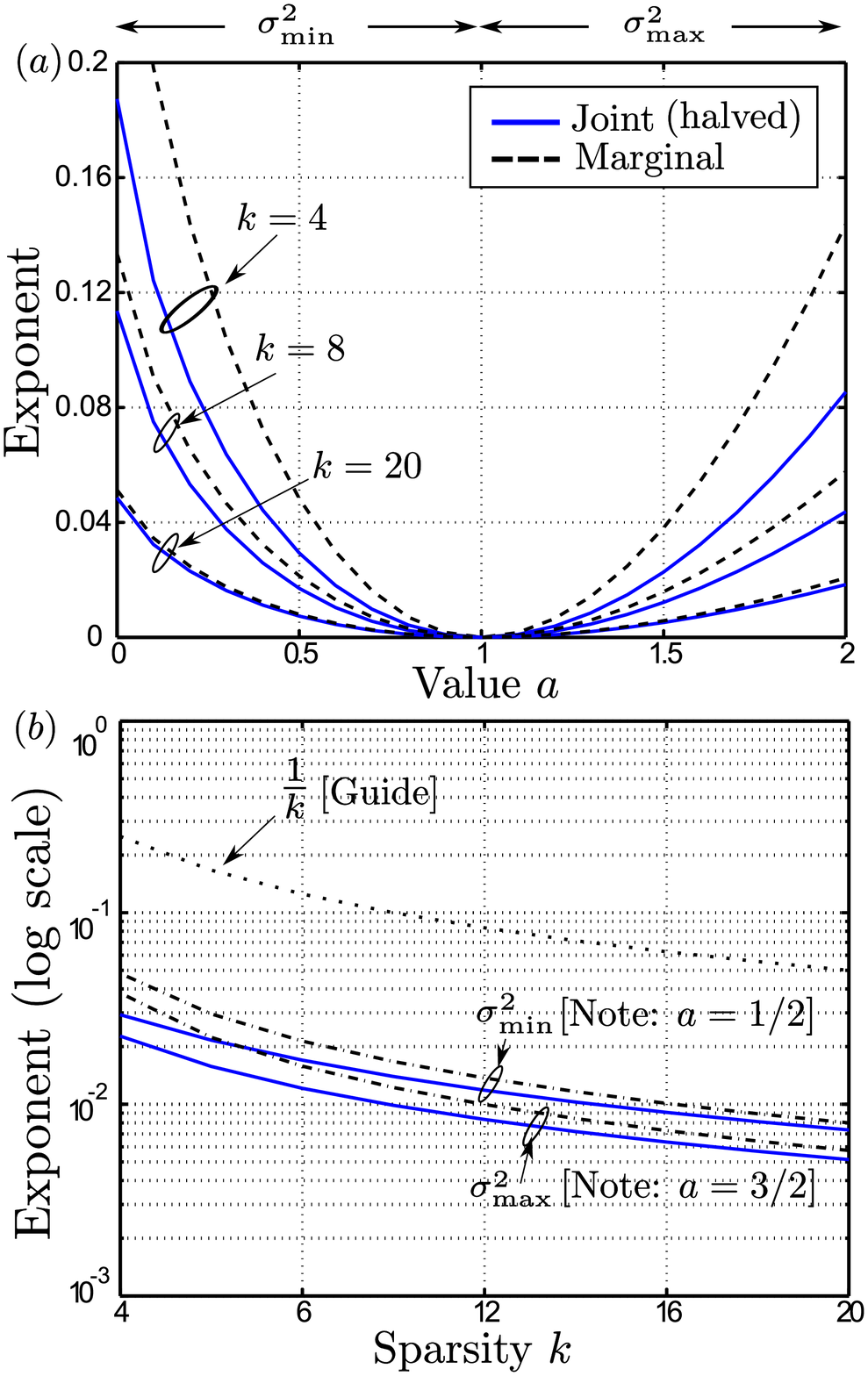,width=.7\linewidth}
		\caption{Bernoulli measure. In $(a)$, exponents of the joint and marginal tail probabilities (\ref{eqn:joint}) and (\ref{eqn:marg}) are plotted with respect to various $a$ values. In $(b)$, the same exponents are shown with respect to various $k$ (whereby $a$ is fixed).}
		\label{fig:RateCurve}
		\vspace*{-15pt}
\end{figure}
}{
\begin{figure}[!t]
	\centering
	  \epsfig{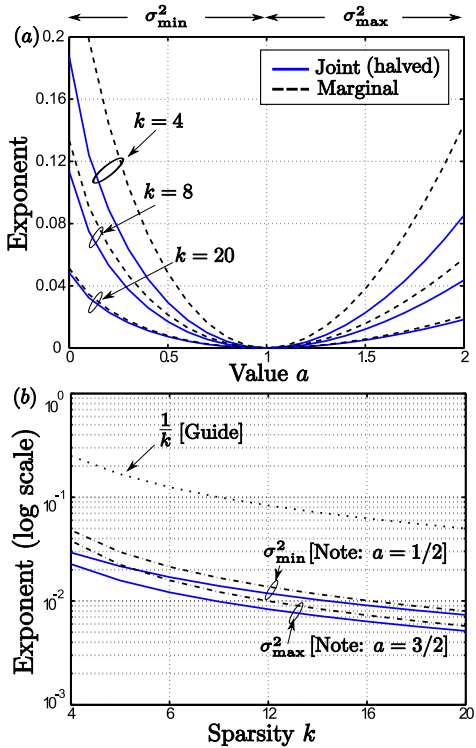}
		\caption{Bernoulli measure. In $(a)$, exponents of the joint and marginal tail probabilities (\ref{eqn:joint}) and (\ref{eqn:marg}) are plotted with respect to various $a$ values. In $(b)$, the same exponents are shown with respect to various $k$ (whereby $a$ is fixed).}
		\label{fig:RateCurve}
\end{figure}}

\newcommand{\const}{\beta}
%But before jumping to conclusions
The previous discussion is only aimed at developing intuition, and is not a proof of any sort.
We now evaluate the constant $c_3$ more carefully.
For the $\func = \sigM^2$ case (the $\func = -\sigm^2$ case follows similarly), we notice the following from the claim (\ref{eqn:ber_claim}): i) $c_1$ is proportional to $1/k$, where $c_1 = \tauAB/\taupM $, and ii) $c_2 $ is constant, where $c_2 = \taupM^2/\tauAB$. 
%Assumption i) is obviously reasonable for moderately large $k$. 
In particular for the Bernoulli case we will have $c_1 = 3/k$ and $c_2 = 1/3$. 
Under i) and ii) we claim that for large $k$, the constant $c_3$ is approximately $a/k$ (omitting a constant factor).
%thus making out $c_3$ to be small and the discussion in the previous paragraph follows. 
This implies that $c_3$ is gets smaller for fixed $a$ and increasing $k$, supporting the discussion in the previous paragraph.
To show this, we first argue that (omitting constant factors) $c_4$ is also approximately $a/k$.
From (\ref{eqn:jointc4}), by the Taylor approximation $\sqrt{1+\alpha} = 1 + 1/2 \alpha + o(\alpha)$ for $|\alpha| < 1$, we have $c_4 = \alpha/4 + o(\alpha)$ where here
\ifthenelse{\boolean{dcol}}{ %dcol
\begin{align}
\alpha =& \frac{\const_4\cdot (1-\frac{a}{k})\frac{a}{k}}{(1-\pO)^2}
= \const_4 \left(\frac{a}{k}\right) \left(1-\left(\frac{a}{k}\right)\right) \left(1 + c_1 + o\left(\frac{1}{k}\right)\right)^2 \nn
= &\const_4 \cdot \frac{a}{k}+ o\left(\frac{1}{k}\right),
\label{eqn:const}
\end{align}}{
\bea
\alpha = \frac{\const_4\cdot (1-\frac{a}{k})\frac{a}{k}}{(1-\pO)^2}
= \const_4 \left(\frac{a}{k}\right) \left(1-\left(\frac{a}{k}\right)\right) \left(1 + c_1 + o\left(\frac{1}{k}\right)\right)^2 
= \const_4 \cdot \frac{a}{k}+ o\left(\frac{1}{k}\right),
\label{eqn:const}
\eea}
for $\const_4 = 4(c_2\Iv - 1)$, and ii) implies $\const_4$ to be some positive constant.
Thus $c_4 = (\const_4/4) \cdot (a/k) + o(1/k)$.
Moving on to (\ref{eqn:jointc3}), the above discussion shows that $(c_4 + a/k)/(a/k) = \const_4/4 + o(1)$, and
\ifthenelse{\boolean{dcol}}{ %dcol 
\bea
(1 + c_4)^2 &=& 1 + 2 \const_4 \cdot \frac{a}{k} + o\left(\frac{1}{k}\right),  \nn
\left(\frac{1-\frac{a}{k}}{1-\pO}\right)^2 &=& 1 + c_1 - 2\cdot \frac{a}{k} +o\left(\frac{1}{k}\right), \nonumber
\eea}{
\bea
(1 + c_4)^2 &=& 1 + 2 \const_4 \cdot \frac{a}{k} + o\left(\frac{1}{k}\right), ~~~\mbox{ and }~~~
\left(\frac{1-\frac{a}{k}}{1-\pO}\right)^2 = 1 + c_1 - 2\cdot \frac{a}{k} +o\left(\frac{1}{k}\right), \nonumber
\eea}
where the final identity follows similarly as in (\ref{eqn:const}).
Then $c_3 = \const_3 \cdot(a/k) + o(1/k)$, where $\const_3 = \const+ \log((\const_4/4)+o(1)) $ for constants $\const_3,\const$ .
Hence the claim that $c_3$ drops reciprocally in $k$ follows. % and the joint bound approximately drops twice as fast as the marginal bound, as hoped. 
%
%(omitting constant factors) $c_4 = a/k + o(1/k)$ the constant $c_4$ in (\ref{eqn:jointc3})
%
%
%By (\ref{eqn:jointc4}), the constant $c_4$ in (\ref{eqn:jointc3}) is of the order of $a/k$.
%To see this, use the Taylor series $\sqrt{1+\alpha} = 1 + 1/2 \alpha + o(\alpha)$ for $|\alpha| < 1$, where in our case $\alpha = c \cdot (a/k)\cdot  (1-a/k)\cdot ( 1+ c'/k + o(c'/k))^2$ for sufficiently small $a/k$, and some constants $c,c'$.
%By assumption i) and iv) then the second term of $c_3$ in (\ref{eqn:jointc3}) is small, and $c_3$ is also of the order of $a/k$.
%Then by assumption i) again $c_3$ should be small, which indicates that the joint bound approximately drops twice as fast as the marginal bound, as hoped. 

\renewcommand{\fn}{\footnote{We ignored a $\log k$ term (due to $k^2$ in (\ref{eqn:joint})) on the RHS.}}
\newcommand{\constB}{\const_{\mathcal{D}}}
Unfortunately under our assumptions, the above argument that $c_3$ is small, is insufficient to show that for fixed $a$ the quantity $q_{k-1}(a)$ drops twice as fast as $p(a)$. 
This is because one can show (similarly as we did above) that $\mathcal{D}(a/k || c_1)$ also drops reciprocally in $k$, whenever $c_1$ is proportional to $1/k$. 
That is for fixed $a$, the exponents of both joint and marginal bounds (\ref{eqn:joint}) and (\ref{eqn:marg}) get smaller as $k$ increases, as illustrated in Figure \ref{fig:RateCurve}$(b)$.
This figure plots essentially the same exponents shown in Figure \ref{fig:RateCurve}$(a)$, but here $a$ is fixed to two values ($a=3/2$ for $\sigM^2$ and $a=1/2$ for $\sigm^2$), and the horizontal axis is now w.r.t. $k$.
We see that as the exponents shown drop at an approximate rate of $1/k$ with increasing $k$.
While the techniques behind Theorem \TroppThm~(and also Proposition \ref{pro:joint}) are simple, as pointed out in~\cite{Tropp2011} that, they do worse than Theorem \LedThm~if the columns $\pmb{A}_i$ are IID (which we assume in Theorem \PoiThm). 
Nevertheless we can show that the Poisson approximation error $\eps_n(a)$ in (\ref{eqn:errPA2}), drops if we allow $a$ to grow. 
%For $c_1$ is proportional to $1/k$, 
As mentioned before, the main concern is the exponent of the second term in (\ref{eqn:errPA2}), where 
%Under previous assumptions/arguments, this exponent will be \emph{negative} if 
the constant term in front of $q_{k-1}(a)$ has exponent (at most) $(2k-1) \cdot [ 1 + \log (n/(k-1))]$.
Taking the exponent of $q_{k-1}(a)$ (w.r.t. to $-2m$) as $\mathcal{D}(a/k || c_1) + c_3$.
Approximate as before $\mathcal{D}(a/k || c_1) = \constB \cdot(a/k) + \mathcal{O}(1/k)$ for some constant $\constB$, also as before $c_3 = \const_3 \cdot(a/k) + o(1/k)$, hence we require\fn 
%\bea
%m \cdot 2 \left( \const \cdot \left(\frac{a}{k}\right) + \mathcal{O}\left(\frac{1}{k}\right) \right) 
%%> (2k-1) \left[ 2 + \log \left( 2 + \frac{1}{k}\right)\left( \frac{n}{2k-1}\right) \right] + 
%> (2k-1) \left[ 2 + \log \left\{\left( \frac{n}{k}\right) \left( 1 + \frac{1}{k}\right) \right\}\right] \label{eqn:cor}
%\eea
\bea
m \cdot  \left( \const \cdot \left(\frac{a}{k}\right) + \mathcal{O}\left(\frac{1}{k}\right) \right) 
%> (2k-1) \left[ 2 + \log \left( 2 + \frac{1}{k}\right)\left( \frac{n}{2k-1}\right) \right] + 
> \left(k-\frac{1}{2}\right) \left[ 1 + \log \left( \frac{n}{k-1}\right) \right] \nonumber %\label{eqn:cor}
\eea
where $\const = \constB + \const_3$. 
%
%Under assumptions i) and ii), just as $c_3$ in (\ref{eqn:joint}) is (omitting constant factors) equal to $a/k + o(1/k)$, we also have $\mathcal{D}(a/k || c_1)$ proportional to $a/k + o(1/k)$, where ii) assumes $c_1$ is proportional to $1/k$. 
%%Now the exponent terms $\mathcal{D}(a/k || \tauAB/\taupM)$ and $\mathcal{D}(a/k || \taupM)$, are approximately $a/k $ (for small $a/k$). 
%Hence the exponents of both joint and marginal bounds (\ref{eqn:joint}) and (\ref{eqn:marg}) are $(m a)/k$ at best, not $m$ as in Theorem \LedThm. 
Simply taking $k-1/2$ and $k-1$ as $k$ (assuming moderately large $k$), we essentially proved the following main result of this subsection.

\renewcommand{\fn}{\footnote{Like most results that require independence, there is a natural generalization to martingales, e.g. see~\cite{Tropp2011}.}}

%\begin{thm} \label{cor:DDD}
%Assume that the columns $\pmb{A}_i$ of $\pmb{A}$ are IID. 
%Consider the error $\eps_n(a)$ in (\ref{eqn:errPA}) for the restricted isometries case, i.e., $\func = \sigM^2$ or $\func= -\sigm^2$.
%%Assume i) $c_1$ is proportional to $1/k$, where $c_1 = \tauAB/\taupM$, and ii) $c_2 $ is constant, where $c_2 = \taupM^2/\tauAB$ (replacing $\taupM$ with $\taupm$ in the latter case). 
%Assume the terms $\tauAB, \taupM$ and $\taupm$ (as defined in Proposition \ref{pro:joint}) satisfy (\ref{eqn:ber_claim}).
%%Under the following assumptions: i) $a \ll k$, ii) $c_1 = \tauAB/\taupM = \tauAB/\taupm = \mathca{O}(1/k)$, iii) $c_2 $ is constant, where $c_2 = \taupM^2/\tauAB= \taupm^2/\tauAB$, iv) $k$ is moderately large, 
%Then, the error $\eps_n(a)$ in (\ref{eqn:errPA}) will exponentially drop in $m$, if $m = t_1 k (1 + \log(t_2\cdot(n/k)))$ for constants $t_1,t_2$, if
%\bitm
%\item the value $a$ is chosen to satisfy $\const\cdot (a/k) + \mathcal{O}(1/k) \leq t_1^{-1}$, for the appropriate value of the term $\const\cdot (a/k) + \mathcal{O}(1/k)$ appearing in (\ref{eqn:cor}). 
%\item the constant $t_2$ satisfies $t_2 \geq 1+ 1/k$. 
%\eitm
%\end{thm}

\begin{thm} \label{cor:DDD}
Assume that the columns $\pmb{A}_i$ of $\pmb{A}$ are IID. 
Consider the error $\eps_n(a)$ in (\ref{eqn:errPA}) for the restricted isometries case, i.e., $\func = \sigM^2$ or $\func= -\sigm^2$.
%Assume i) $c_1$ is proportional to $1/k$, where $c_1 = \tauAB/\taupM$, and ii) $c_2 $ is constant, where $c_2 = \taupM^2/\tauAB$ (replacing $\taupM$ with $\taupm$ in the latter case). 
Assume the terms $\tauAB, \taupM$ and $\taupm$ (as defined in Proposition \ref{pro:joint}) satisfy (\ref{eqn:ber_claim}).
%Under the following assumptions: i) $a \ll k$, ii) $c_1 = \tauAB/\taupM = \tauAB/\taupm = \mathca{O}(1/k)$, iii) $c_2 $ is constant, where $c_2 = \taupM^2/\tauAB= \taupm^2/\tauAB$, iv) $k$ is moderately large, 
Let $m =  t_1\cdot k (1 + \log((n/k)))$ for some constant $t_1$. 
Then the error $\eps_n(a)$ in (\ref{eqn:errPA}) will exponentially drop in $m$, if we set $a = (t_2/\const) \cdot k$, where constants $t_2,\const$ satisfying $t_2 < t_1^{-1}$ and $\const = \constB + \const_3$ with $\constB$ and $\const_3$ corresponding to respective approximations of the above terms $\mathcal{D}(a/k || c_1)$ and $c_3$, and where $k$ is sufficiently large.
\end{thm}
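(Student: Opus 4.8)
The plan is to upper-bound $\eps_n(a)$ through the refined inequality (\ref{eqn:errPA2}), established in Appendix~\ref{sup:error}, which bounds the Theorem~\PoiThm\ error (\ref{eqn:errPA}) by the sum of a ``marginal'' piece $(1-e^{-\Lam(a)})\,\mtail(a)[\Bin{n}{k}-\Bin{n-k}{k}]\leq\Lam(a)^2$ and a ``joint'' piece equal to a large combinatorial factor times $\jtail_{k-1}(a)$. I would show each piece is $\exp(-\Omega(m))$ once $m=t_1\,k(1+\log(n/k))$ and $a=(t_2/\const)\,k$ with $k$ large. Two inputs are used throughout: the estimate $\Bin{n}{k}\leq(en/k)^k$, so that $\log\Bin{n}{k}\leq k(1+\log(n/k))=m/t_1$; and the hypothesis (\ref{eqn:ber_claim}), which makes $\tauAB$ proportional to $1/k^2$ and $\taupM=\taupm$ proportional to $1/k$, hence $\pO=\tauAB/\taupM$ proportional to $1/k$ and $\pT=\taupM^2/\tauAB$ a constant. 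Before invoking Proposition~\ref{pro:joint} I would verify its side conditions under (\ref{eqn:ber_claim}): $\max(\taupM,\taupm)\leq\sqrt{\tauAB}$, and the threshold range ($k\taupM<a<k$ for the $\func=\sigM^2$ branch, which holds for $a=(t_2/\const)k$ and large $k$); the lower-tail case $\func=-\sigm^2$ runs in parallel, with $\taupm$ replacing $\taupM$.

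For the marginal piece, feeding Theorem~\TroppThm, i.e. (\ref{eqn:marg}), into $\Lam(a)=\Bin{n}{k}\mtail(a)$ gives $\log\Lam(a)\leq\log k+m/t_1-m\,\mathcal{D}(a/k||\taupM)$. Since for $a=(t_2/\const)k$ the argument $a/k$ stays a fixed constant while $\taupM\to 0$ as $k$ grows, the divergence $\mathcal{D}(a/k||\taupM)$ grows without bound and in particular exceeds $1/t_1$, so $\log\Lam(a)\leq\log k-\Omega(m)$ and $\Lam(a)^2$ decays exponentially in $m$. This disposes of the marginal piece.

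The joint piece is the crux. First I would bound its combinatorial prefactor $2^k(e(2k-1)/(k-1))^{k-1}(en/(2k-1))^{2k-1}$ from (\ref{eqn:errPA2}), using $(2k-1)/(k-1)=2+1/(k-1)>2$ and $\log(en/(2k-1))=1+\log(n/(2k-1))<1+\log(n/(k-1))$: its logarithm is $O(k\log(n/k))$, in fact $(2+o(1))\,k(1+\log(n/k))=(2+o(1))\,m/t_1$ when $n/k$ is large. Next, Proposition~\ref{pro:joint} (under (\ref{eqn:ber_claim})) gives $\jtail_{k-1}(a)\leq k^2\exp(-2m(\mathcal{D}(a/k||\pO)+c_3))$. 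I would then substitute the Taylor expansions already carried out before the statement: (\ref{eqn:jointc4}) yields $c_4=(\const_4/4)(a/k)+o(1/k)$, (\ref{eqn:jointc3}) then yields $c_3=\const_3(a/k)+o(1/k)$, and the same $\sqrt{1+\alpha}\approx1+\alpha/2$ expansion applied to $\mathcal{D}(a/k||\pO)$ (with $\pO$ proportional to $1/k$) yields $\mathcal{D}(a/k||\pO)=\constB(a/k)+\mathcal{O}(1/k)$. Hence the joint bound decays at per-measurement rate $\const(a/k)+o(1/k)$ with $\const=\constB+\const_3$, which for $a=(t_2/\const)k$ equals $t_2+o(1)$. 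Collecting the estimates, $\log(\text{joint piece})\leq 2\log k+(2+o(1))m/t_1-2m\,t_2$, which is $-\Omega(m)$ precisely when $t_1$ and $t_2$ satisfy the relationship stated (so that $2m\,t_2$ dominates the prefactor exponent $\sim 2m/t_1$ after the vanishing terms are absorbed). Replacing $k-\tfrac{1}{2}$ and $k-1$ by $k$ (legitimate for large $k$) and combining the two pieces finishes the argument.

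The main obstacle is the tension the discussion preceding the statement flags. The Ahlswede--Winter / Tropp-style tail bounds (Theorem~\TroppThm\ and Proposition~\ref{pro:joint}) yield per-measurement rates only of order $1/k$ when $a$ is held fixed, forcing the wasteful $m=\Theta(k^2\log(n/k))$; the only way to push the rate to $\Theta(1)$ under these bounds is to let $a$ --- equivalently the restricted isometry constant --- grow linearly in $k$. Having made that concession one must still certify that the joint exponent essentially \emph{doubles} the marginal one, i.e. that the correction $c_3$ (through $c_4$, and the gap between $\pO$ and $\taupM$) is genuinely of lower order, so that the $(n/k)^{2k}$-type prefactor of the joint piece is killed. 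The delicate technical steps are therefore (i) making the Taylor expansions of $c_3$, $c_4$ and $\mathcal{D}(a/k||\pO)$ uniform over the relevant range of thresholds $a$, and (ii) keeping the hypotheses of Proposition~\ref{pro:joint} valid for $a$ as large as $\Theta(k)$; the rest is routine estimation with $\Bin{n}{k}\leq(en/k)^k$ and binary divergences.
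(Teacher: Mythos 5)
Your proposal takes essentially the same route as the paper's own argument: bound $\eps_n(a)$ via (\ref{eqn:errPA2}), control the combinatorial prefactor's exponent by $(2k-1)[1+\log(n/(k-1))]\approx 2m/t_1$, invoke Proposition \ref{pro:joint} under (\ref{eqn:ber_claim}) together with the Taylor approximations $\mathcal{D}(a/k\,||\,c_1)\approx \constB\,(a/k)$ and $c_3\approx\const_3\,(a/k)$ so that with $a=(t_2/\const)k$ the joint exponent is about $2m t_2$, and compare the two exponents (your explicit treatment of the marginal piece and of the side conditions of Proposition \ref{pro:joint} only tightens what the paper leaves implicit). The one caveat is that your decay criterion --- $2m t_2$ dominating $\sim 2m/t_1$, i.e.\ $t_2 > t_1^{-1}$ --- coincides with the paper's displayed requirement $m(\const(a/k)+\mathcal{O}(1/k)) > (k-\tfrac12)[1+\log(n/(k-1))]$ but is the reverse of the literal condition $t_2 < t_1^{-1}$ in the theorem statement, a sign discrepancy that lies in the statement rather than in your argument.
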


%\ifthenelse{\boolean{longver}}{ %LONG VERSION
%\begin{figure*}[!t]
%	\centering
%	  \epsfig{file={MaxEig.eps},width=.65\linewidth}
%		\caption{Bernoullli measure. For the maximal squared singular value $\sigM^2$ case, comparing empirical ``tail probability'' with the ``union bound'' predicted by $1 - \exp(-\Lam(\z))$ (obtained using extrapolation, see text).}
%		\label{fig:MaxEig}
%\end{figure*}
%}{}

Explicit constants could be obtained by more careful book-keeping, but not done here for brevity considerations.
Note we require $k$ to be sufficiently large to allow to previous $\mathcal{O}(1/k)$ term to become small enough (\textit{i.e.}, to allow $t_2 + \mathcal{O}(1/k) \leq t_1^{-1}$). 
Incidentally, recall that previous Figure \ref{fig:RateCurve}$(a)$ showed joint (``halved'') and marginal exponents for as $k$ increases.

We conjecture that we should be able to improve Theorem \ref{cor:DDD} to hold for \emph{fixed} $a$ - we leave this to future research. 
\ifthenelse{\boolean{longver}}{%LONG VERSION
Figure \ref{fig:MaxEig} shows experimental results supporting theory derived in this subsection. 
In the spirit of non-asymptotics, and also for experimental complexity considerations, we choose moderate measurement size $m= 50$ and four different block lengths $n$ of $100, 200, 500$ and $1000$.
Corresponding to these values for $n$, Figures \ref{fig:MaxEig}$(a)-(d)$ plots the empirical ``tail probability`` for the maximal squared singular value $\sigM^2$ case (the other minimum case $\sigm^2$ is omitted for brevity - though similar observations are expected).
Note that this ``tail probability'' is actually obtained using Algorithm 1 (\emph{Greedy singular value pursuit)} from~\cite{Dossal2009}, as the actual quantity is too computational intensive to compute. 
We also plot the ``union bound'' $1 - \exp(-\Lam(\z))$ where the marginal quantity $p(a)$ in $\Lam(\z)$ is obtained via \emph{extrapolation}.
More specifically for $p(a) = \Pr\{\sigM^2(\pmb{A}_\S) > a\}$ , this probability was simulated for small $m$ values (for which this event can be observed, and then extrapolated to $m=50$  (see Supplementary Material \ref{app:extrap}). 
This was done (rather than using bound in Theorem \LedThm) in attempt to obtain very tight predictions.
Assuming both Algorithm 1 and the extrapolation work well, the results suggest good match (between empirical values and union bound prediction) for the various cases $k=4 \sim 8$ shown here. In particular the case $k=4$ is remarkably excellent. 
}{ %SHORT VERSION
This conjecture is inspired by recent work~\cite{Dossal2009}, whereby for fixed values of $a$ that satisfy recovery guarantees (similar to Theorem \RIPthm), see equation (17),~\cite{Dossal2009}, 
experimental validation of theoretical union bounds presented in~\cite{Blanchard} (similar to (\ref{eqn:RICprob})) has been performed.
The reader is referred to~\cite{Dossal2009} for these results, performed for an undersampling ratio $m/n = 1/4$, and a wide range of values $m = 250 \sim 2000$ and $n = 1000 \sim 8000$.
Also to support this conjecture for smaller $m=50$, in Supplementary Material \ref{app:extrap} we present experimental results. 
}

The rest of the subsection discusses the proof techniques.
Although Theorem \ref{pro:joint} looks significantly more complicated than Theorem \TroppThm, the proof techniques that follow the Ahlswede-Winter method are very similar. To facilitate the proof of Proposition \ref{pro:joint}, we first present the proof for Theorem \TroppThm. Our proof is slightly different (and simpler) than that in~\cite{Tropp2011}, due to the fact that we made a further simplifying assumption that the rows of $\pmb{A}$ are IID; in~\cite{Tropp2011} the independence\fn~assumption also holds but the identical assumption is not necessary.

%The bounds for the joint case (\ref{eqn:joint}) are similar to those for the marginal case (\ref{eqn:marg}); the differences are the factor of $2$ in the exponent, the different constants in the second argument of $\mathcal{D}(\cdot||\cdot)$, and the missing constant $c_3$. 

%With the aim of showing the proof of Proposition \ref{pro:joint}, we begin by explaining the Ahlswede-Winter method for the marginal case. 

\renewcommand{\fn}{\footnote{Should have a quick comment on the convergence of this sum.}}

%The upper bound is exponential in $m$, with the exponent of the form $2 (\mathcal{D}(a/k||c_1) + c_3)$ for some specified constants $c_1,c_3$. For small estimate of $\eps_n(\z)$, we need to show that $q_i \ll \Bin{k}{r}\Bin{n-k}{k-r} \Bin{n}{k}$. 

%\renewcommand{\fn}{\footnote{Independence is not strictly required. Can be extended to martingales.}}

\newcommand{\Smat}{\pmb{S}}
\newcommand{\X}{\pmb{X}}
\newcommand{\h}{h}

\newcommand{\Tmat}{\pmb{T}}
\renewcommand{\A}{\mat{C}}
\newcommand{\B}{\mat{D}}
\newcommand{\Y}{\pmb{Y}}
\newcommand{\tauqM}{\tau_{q}}
\newcommand{\tauqm}{\tau_{q}}

\ifthenelse{\boolean{dcol}}{ %dcol
\begin{figure*}[bp]
	%\newcounter{tempequationcounter}
	\normalsize
	\setcounter{tempequationcounter}{\value{equation}}
	\hrulefill
	\bea
		\setcounter{equation}{24}
		\!\!\!\E\{\Tr (\A e^{\h_1\X} )  \Tr(\B e^{\h_2\Y} ) \} 
		\!\!\!&\leq& \!\!\!
		\Tr(\A) \Tr(\B) \left\{ \tauqM (e^{\h_1}-1)(e^{\h_2}-1) 
		 +   \taupM (e^{\h_1}-1)  + \taupM (e^{\h_2}-1) + 1 \right\}, \nn
		 \!\!\!\E\{\Tr (\A e^{\h_1\X} )  \Tr(\B e^{\h_2\Y} ) \} 
		 \!\!\!&\leq& \!\!\!
		 \Tr(\A) \Tr(\B) \left\{ \tauqm (e^{-\h_1}-1)(e^{-\h_2}-1) +   \taupm (e^{-\h_1}-1)   +~\taupm (e^{-\h_2}-1) + 1 \right\} ,
		 \label{eqn:joint_iter}
	\eea
	\setcounter{equation}{\value{tempequationcounter}}
\end{figure*}}

%The Ahlswede-Winter is sometimes likened to a concentration result on ``averages'' of random matrices. 
Consider the $m\times k$ submatrix $\pmb{A}_\S$ of $\pmb{A}$. Express the product $\pmb{A}_\S^T \pmb{A}_\S$ as an average of $m$ random matrices $\pmb{X}_i$, i.e. express  $\pmb{A}_\S^T \pmb{A}_\S = \frac{1}{m}\sum_{i=1}^m \pmb{X}_i $ whereby $\pmb{X}_i$ is the $i$-th row outer product of $m \cdot \pmb{A}_\S$. Clearly if $|\S|=1$, then each $\pmb{X}_i$ becomes a scalar RV and  $\pmb{A}_\S^T \pmb{A}_\S$ becomes simply an average of scalar RVs. 
%Concentration results for averages of scalar RVs, not unlike the Theorem \ref{thm:Ustat} in the previous section, are well-known in literature. 
The Ahlswede-Winter method is essentially a concentration result for sums of random matrices. Let $\Smat_m$ denote the matrix sum that satisfies $\Smat_m = \sum_{i=1}^m\X_i$. Write $\pmb{A}_\S^T \pmb{A}_\S = \frac{1}{m} \Smat_m$ and express similarly as in (\ref{eqn:Umax2})
\bea
\Pr\{\sigM^2(\pmb{A}_\S) > \z\} &=& \Pr\{\eigM(\Smat_m) > m \cdot a\}, ~~~~\mbox{ and} \nn
\Pr\{\sigm^2(\pmb{A}_\S) < \z\} &=& \Pr\{\eigM(-\Smat_m) > - m \cdot a\}, \nonumber
\eea
so that it suffices to only look at the maximal eigenvalue function $\eigM$. That is we will only need to treat the quantities $\eigM(\Smat_m)$ and $\eigM(-\Smat_m)$. For a real, symmetric matrix $\mat{A}$, let $e^{\mat{A}}$ is denote the \emph{matrix-exponential} that satisfies $e^{\mat{A}}=\sum_{i=0}^\infty \frac{1}{i!}\mat{A}^i$. If $\varsigma$ is an eigenvalue of $\mat{A}$, then $e^{\varsigma}$ is an eigenvalue of $e^{\mat{A}}$. 
%A real, symmetric matrix $\mat{A}$ is said to be \emph{positive semidefinite}, if all the eigenvalues $c_i$ of $\mat{A}$ are non-negative, and said to be \emph{positive definite} if all eigenvalues $c_i$ are all positive. We present a important fact which will be used multiple times later. 
By convexity of the function $e^{\alpha}$, the inequality $e^{h\alpha} \leq 1 + (e^{h } -1) \cdot \alpha $ holds for all $h \in \Real$ and $0 \leq \alpha \leq 1$. Let $\mat{I}$ denote the identity matrix. For any real, symmetric matrix $\mat{A}$ whereby $\eigM(\mat{A})\leq 1$, the properties of the matrix exponential and the inequality $e^{h\alpha} \leq 1 + (e^{h } -1) \cdot \alpha $ imply that for any $h \in \Real$
\bea
		\mat{I} + (e^h -1) \mat{A} - e^{h \mat{A}} \mbox{ is positive semidefinite }. \label{eqn:Cher}
\eea
%The fact (\ref{eqn:Cher}) will come in useful later. 
% Furthermore if $\mat{A}$ is positive semidefinite, then the eigenvalues $e^{(\sigma^2)}$ of $e^{\mat{A}}$ are lower-bounded by $1$, i.e. $e^{(\sigma^2)} \geq 1$, implying $e^{\mat{A}}$ is positive definite. 
Because $\Smat_m = \sum_{i=1}^m\X_i$ and $\X_i$ are row outer sums, therefore $\Smat_m$ is positive semidefinite. 
For any real, symmetric matrix $\mat{A}$, the matrix exponential $e^{\mat{A}}$ is clearly positive semidefinite. Also for a positive semidefinite matrix $\mat{A}$, we have $\Tr(\mat{A})\geq \eigM(\mat{A})$. For any $h,t> 0$, we have 
\ifthenelse{\boolean{dcol}}{ %dcol
\begin{align}
 &\Pr\{\eigM(\Smat_m) \geq t \} \nn
 &= \Pr\{ e^{\h \cdot \eigM(\Smat_m)}\geq e^{\h t} \} 
 = \Pr\{ \eigM(e^{\h \Smat_m}) \geq e^{\h t} \}\nn
 &\leq    \Pr\{\Tr( e^{\h\Smat_m}) \geq e^{\h t}\}
~~\leq~ e^{-\h t} \E \Tr(e^{\h \Smat_m}), \label{eqn:chain}
\end{align}}{
\bea
 \Pr\{\eigM(\Smat_m) \geq t \} = \Pr\{ e^{\h \cdot \eigM(\Smat_m)}\geq e^{\h t} \} 
 &=& \Pr\{ \eigM(e^{\h \Smat_m}) \geq e^{\h t} \}\nn
 &\leq&    \Pr\{\Tr( e^{\h\Smat_m}) \geq e^{\h t}\}
\leq e^{-\h t} \E \Tr(e^{\h \Smat_m}), \label{eqn:chain}
\eea}
where the first inequality follows because $e^{\h\Smat_m}$ is positive semidefinite, and the second inequality follows from Markov's inequality. Similarly, for any $h,t> 0$ we also have
\[
\Pr\{\eigM(-\Smat_m) \geq -t \} \leq e^{\h t} \E \Tr(e^{ -\h\Smat_m}).
\]
The proof of Theorem \TroppThm~relies on the following lemma, shown using the fact (\ref{eqn:Cher}).
%Our proof technique shown here for Proposition \ref{pro:marg} slightly differs from~\cite{Tropp2011}, but will lead to showing the joint case Proposition \ref{pro:joint}.

\begin{lem} \label{lem:corAW}
Let $\X$ be a random, positive semidefinite matrix that satisfies $\eigM(\X)\leq 1$. Let $\Am$ be any positive semidefinite matrix of the same size as $\X$. Then for any $h\geq 0$ we have the following inequalities
\bea
\Tr(\Am e^{h\X}) &\leq&  \Tr(\Am) + (e^h-1)\Tr(\Am\X), \nn
\Tr(\Am e^{h\X}) &\leq&  \Tr(\Am) + (e^{-h}-1)\Tr(\Am\X). \label{eqn:lemcorAW0}
\eea
Taking expectation, we also have
\bea
 %\E \{\Tr(\A e^{\h \X})\} \leq \Tr(A) \left(\frac{e^{\h}}{k} + (1-\frac{1}{k}) \right)
 \!\!\!\!\!\!\E \{\Tr(\Am e^{\h \X})\} &\leq &\Tr(\Am) \left(e^{\h}\taupM + (1-\taupM) \right) \nn
 \!\!\!\!\!\!\E \{\Tr(\Am e^{-\h \X})\} &\leq& \Tr(\Am) \left(e^{-\h}\taupm + (1-\taupm) \right) \label{eqn:lemcorAW1}
\eea
where $\taupM = \eigM(\E\X) $ and $\taupm = \eigm(\E\X) $.
\end{lem}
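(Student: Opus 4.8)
The plan is to prove the two scalar-style matrix inequalities \eqref{eqn:lemcorAW0} first, and then obtain \eqref{eqn:lemcorAW1} by taking expectations and invoking the positive-semidefiniteness of the residual matrices. For the first part, I would start from the pointwise convexity bound $e^{h\alpha}\le 1+(e^h-1)\alpha$, valid for $h\in\Real$ and $0\le\alpha\le 1$, which after the substitution $h\mapsto -h$ also gives $e^{-h\alpha}\le 1+(e^{-h}-1)\alpha$. Applying the first of these at the matrix level — exactly as in \eqref{eqn:Cher} — yields that $\mat{I}+(e^h-1)\X-e^{h\X}$ is positive semidefinite whenever $\X$ is symmetric with $\eigM(\X)\le 1$; the substituted version gives the same with $e^{-h}$ in place of $e^h-1$ read off as $(e^{-h}-1)$. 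The key step is then: for positive semidefinite $\Am$ and positive semidefinite $\pmb{M}$, one has $\Tr(\Am\pmb{M})\ge 0$. Writing $\Am=\pmb{B}\pmb{B}^T$ (Cholesky/square root) and $\pmb{M}$ symmetric PSD, $\Tr(\Am\pmb{M})=\Tr(\pmb{B}^T\pmb{M}\pmb{B})\ge 0$ since $\pmb{B}^T\pmb{M}\pmb{B}$ is PSD. Applying this with $\pmb{M}=\mat{I}+(e^h-1)\X-e^{h\X}$ and using linearity of trace gives $\Tr(\Am)+ (e^h-1)\Tr(\Am\X)-\Tr(\Am e^{h\X})\ge 0$, which is precisely the first line of \eqref{eqn:lemcorAW0}; the second line follows identically from the $e^{-h}$ version.

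For \eqref{eqn:lemcorAW1}, I would take expectations of \eqref{eqn:lemcorAW0}. By linearity, $\E\Tr(\Am e^{h\X})\le \Tr(\Am)+(e^h-1)\Tr(\Am\,\E\X)$. Now $\E\X$ is symmetric PSD with $\eigM(\E\X)=\taupM$, so $\taupM\mat{I}-\E\X$ is PSD, and hence (again by the PSD-trace-product fact applied to $\Am$) $\Tr(\Am\,\E\X)\le \taupM\Tr(\Am)$. Since $e^h-1\ge 0$ for $h\ge 0$, substituting gives $\E\Tr(\Am e^{h\X})\le \Tr(\Am)\bigl(1+(e^h-1)\taupM\bigr)=\Tr(\Am)\bigl(e^h\taupM+(1-\taupM)\bigr)$, the first line of \eqref{eqn:lemcorAW1}. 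For the second line one must be slightly more careful: from the second line of \eqref{eqn:lemcorAW0}, $\E\Tr(\Am e^{-h\X})\le\Tr(\Am)+(e^{-h}-1)\Tr(\Am\,\E\X)$, and now $e^{-h}-1\le 0$ for $h\ge 0$, so I need a \emph{lower} bound on $\Tr(\Am\,\E\X)$: using $\E\X-\taupm\mat{I}$ PSD gives $\Tr(\Am\,\E\X)\ge\taupm\Tr(\Am)$, and multiplying by the nonpositive factor $(e^{-h}-1)$ reverses the inequality to yield $\E\Tr(\Am e^{-h\X})\le\Tr(\Am)\bigl(e^{-h}\taupm+(1-\taupm)\bigr)$.

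The only subtlety — and the step I would flag as the main obstacle, though it is mild — is the sign bookkeeping in the second inequality of \eqref{eqn:lemcorAW1}: one must pair the \emph{minimum}-eigenvalue bound $\taupm\mat{I}\preceq\E\X$ with the negative coefficient $e^{-h}-1$, rather than reusing $\taupM$. Everything else is a direct combination of (i) the scalar convexity inequality promoted to matrices via the eigenvalue/matrix-exponential correspondence already recorded in \eqref{eqn:Cher}, and (ii) the elementary fact that $\Tr(\Am\pmb{M})\ge 0$ for PSD $\Am,\pmb{M}$, together with linearity of the trace; no estimate of convergence of the matrix-exponential series is needed here since each $\X$ (and hence $e^{h\X}$) is a fixed symmetric matrix with the series converging absolutely.
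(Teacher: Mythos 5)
Your proof is correct and essentially identical to the paper's: both obtain \eqref{eqn:lemcorAW0} from the matrix convexity fact \eqref{eqn:Cher} together with $\Tr(\Am\pmb{M})\geq 0$ for positive semidefinite $\Am$ and $\pmb{M}$, and both derive \eqref{eqn:lemcorAW1} by taking expectations and pairing the nonnegative factor $e^{h}-1$ with $\Tr(\Am\,\E\X)\leq \taupM\Tr(\Am)$ and the nonpositive factor $e^{-h}-1$ with $\Tr(\Am\,\E\X)\geq \taupm\Tr(\Am)$. The only immaterial difference is that you supply a short square-root argument for the trace positivity fact, which the paper simply invokes.
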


The proof of Lemma \ref{lem:corAW} is relegated to Appendix \ref{app:proofRI}. To show Theorem \TroppThm~we also need the \emph{Golden-Thompson inequality}~\cite{Ahlswede2002}. The Golden-Thompson inequality states that for any two real and symmetric matrices $\mat{A}$ and $\mat{B}$, we have $\Tr (e^{\mat{A} + \mat{B}}) \leq \Tr(e^{\mat{A} } e^{\mat{B}})$. The proof of Theorem \TroppThm~is also furnished in Appendix \ref{app:proofRI}. 

Proposition \ref{pro:joint} for the joint case is similarly proved using Lemma \ref{lem:corAW}. Consider two size-$k$ subsets $\S$ and $\R$, that intersect in exactly $i$ positions, \textit{i.e.}, $|\S\cap \R|=i$. In addition to $\Smat_m$, similarly define another matrix $\Tmat_m$ that satisfies $\Tmat_m= \sum_{i=1}^m \Y_i$, where each $\Y_i$ is a size $k\times k$ matrix. Also similar to $\X_i$, let $\Y_i$ equal the $i$-th row outer product of the matrix $\frac{m}{k}\cdot \pmb{A}_\R$. 
Recall the joint Markov inequality, where for two RVs $A$ and $B$ and for any $t_1,t_2> 0$, we have $\Pr\{A > t_1, B > t_2 \} \leq \frac{\E A B}{t_1 t_2}$. Applying similar reasonings as in (\ref{eqn:chain}) we get for $h_1,h_2>0$
\ifthenelse{\boolean{dcol}}{ %dcol
\begin{align}
	\Pr\{& \eigM(\Smat_m) > t, \eigM(\Tmat_m) > t\} \nn
	&\leq \Pr\{\Tr (e^{\h_1 \Smat_m}) > e^{\h_1 t}, \Tr(e^{\h_2 \Tmat_m}) > e^{\h_2 t}\} \nn
	&\leq e^{- t(\h_1+\h_2)} \cdot \E\{\Tr (e^{\h_1 \Smat_m})  \Tr(e^{\h_2 \Tmat_m}) \}, ~~~~~~~\mbox{ and } \nn
	\Pr\{& \eigM(-\Smat_m) > -t, \eigM(-\Tmat_m) > -t\}  \nn
	&\leq e^{ t(\h_1+\h_2)} \cdot \E\{\Tr (e^{-\h_1 \Smat_m})  \Tr(e^{-\h_2 \Tmat_m}) \}. \label{eqn:chain2}
\end{align}}{
\bea
	\Pr\{ \eigM(\Smat_m) > t, \eigM(\Tmat_m) > t\} 
	&\leq& \Pr\{\Tr (e^{\h_1 \Smat_m}) > e^{\h_1 t}, \Tr(e^{\h_2 \Tmat_m}) > e^{\h_2 t}\} \nn
	&\leq& e^{- t(\h_1+\h_2)} \cdot \E\{\Tr (e^{\h_1 \Smat_m})  \Tr(e^{\h_2 \Tmat_m}) \}, ~~~~~\mbox{ and } \nn
	\Pr\{ \eigM(-\Smat_m) > -t, \eigM(-\Tmat_m) > -t\} 
	&\leq& e^{ t(\h_1+\h_2)} \cdot \E\{\Tr (e^{-\h_1 \Smat_m})  \Tr(e^{-\h_2 \Tmat_m}) \}. \label{eqn:chain2}
\eea}
Let $\X,\Y$ denote a random, positive semidefinite matrices of the equal size. For any positive semidefinite matrices $\A,\B$ same size as $\X$, apply (\ref{eqn:lemcorAW0}) in Lemma \ref{lem:corAW} and use similar arguments that appear in its proof (see Appendix \ref{app:proofRI}) 
\ifthenelse{\boolean{dcol}}{ %dcol
to show for $h_1,h_2 > 0$ that (\ref{eqn:joint_iter}) (see page bottom) holds,}{
to show for $h_1,h_2 > 0$ 
\bea
\!\!\!\E\{\Tr (\A e^{\h_1\X} )  \Tr(\B e^{\h_2\Y} ) \} 
\!\!\!&\leq& \!\!\!
\Tr(\A) \Tr(\B) \left\{ \tauqM (e^{\h_1}-1)(e^{\h_2}-1) 
 +   \taupM (e^{\h_1}-1)  + \taupM (e^{\h_2}-1) + 1 \right\}, \nn
 \!\!\!\E\{\Tr (\A e^{\h_1\X} )  \Tr(\B e^{\h_2\Y} ) \} 
 \!\!\!&\leq& \!\!\!
 \Tr(\A) \Tr(\B) \left\{ \tauqm (e^{-\h_1}-1)(e^{-\h_2}-1) +   \taupm (e^{-\h_1}-1) \right. \nn
 &&~~~~~~~~~~~~~~~~~~~~~~~~~~~~~~~~~~~~~~~~~~~\left.  +~\taupm (e^{-\h_2}-1) + 1 \right\} ,
 \label{eqn:joint_iter}
\eea}
where the constant $\tauqM$ satisfies (\ref{eqn:tauAB}), \textit{i.e.}, satisfies $\tauqM \cdot\Tr(\A)\Tr(\B) \geq \E\Tr(\A\X)\Tr(\B\Y) $.
We are now ready to prove Proposition \ref{pro:joint}, given in detail in the Appendix \ref{app:proofRI}.

%We shall now proceed to the proof of Proposition \ref{pro:joint}.

%\begin{lem} \label{lem:joint_iter}
%Let $\X$ be a random, positive semidefinite matrix of size $k\times k$ that satisfies $\eigM(\X)\leq 1$. Let $\Y$ be another random, positive semidefinite matrix same size as $\X$ that satisfies $\eigM(\Y)\leq 1$, and for some integer $i< k$, let $(\X)_{\ell,\omega} =(\Y)_{\ell,\omega}$ for all $\ell,\omega \leq i$. Then for all $h_1,h_2> 0$, we have the following inequalities
%\bea
%\E\{\Tr (\A e^{\h_1\X} )  \Tr(\B e^{\h_2\Y} ) \} &\leq& \Tr(\A) \Tr(\B) \left\{ \tauqM (e^{\h_1}-1)(e^{\h_2}-1) 
% +   \taupM (e^{\h_1}-1)  + \taupM (e^{\h_2}-1) + 1 \right\}, \nn
% \E\{\Tr (\A e^{\h_1\X} )  \Tr(\B e^{\h_2\Y} ) \} &\leq& \Tr(\A) \Tr(\B) \left\{ \tauqm (e^{-\h_1}-1)(e^{-\h_2}-1) 
% +   \taupm (e^{-\h_1}-1)  + \taupm (e^{-\h_2}-1) + 1 \right\} ,
% \label{eqn:joint_iter}
%\eea
%where the constant $\tauqM$ satisfies $\tauqM \cdot\Tr(\A)\Tr(\B) \geq \E\Tr(\mat{A}\X)\Tr(\B\Y) $. In particular if 
%\end{lem}

To finish up this subsection, we address how to compute a constant $\tauAB$ that satisfies the hypothesis (\ref{eqn:tauAB}) required in Proposition \ref{pro:joint}. 

%\begin{pro}
%Let $\X$ and $\Y$ denote random matrices of size $k\times k$, with IID entries $X_{ij}$ and $Y_{ij}$, respectively. Assume that for some positive integer $c \leq k$, we have $X_{ij} = Y_{ij}$ for all $i,j\leq c$. Further assume that $X_{ij}$ and $Y_{\ell,\omega}$ are independent, for all 
%\end{pro}

{
\begin{pro} \label{pro:tau_est}
%Let $\pmb{A}$ be an $m \times n$ random matrix, whereby the columns $\pmb{A}_i$ of $\pmb{A}$ are identically distributed.
Let $\X$ be an outer product of the row $[A_1,A_2,\cdots,A_k]$ of $k$ RVs $A_i$. Let $\Y$ be an outer product of the row $[B_1,B_2,\cdots,B_k]$ of $k$ RVs $B_i$. For some positive integer $c \leq k$, assume i) $A_i = B_i$ for $i\leq c$, ii) the RVs $A_i$ are IID, and iii) the RVs $B_i$ are IID. Let $X_{ij}$ and $Y_{ij}$ denote the matrix entries of $\X$ and $\Y$, respectively; note that $X_{ij}=A_i A_j$ and $Y_{ij}=B_i B_j$. Assume $\E A_1 = \E B_1 = 0$. Then
\ifthenelse{\boolean{dcol}}{\bea
\setcounter{equation}{25}
\E {X}_{ij} {Y}_{\ell\omega} &=&
\left\{
	\begin{array}{lcl}
		%\E \tilde{X}_{i,i} \tilde{Y}_{\ell,\ell} & \mbox{ if } & i=j,\ell=\omega\\
		\E {X}_{ii} {Y}_{\ell\ell} =  \E A_{1}^4 & \mbox{ if~~i) holds}   \\
		\E {X}_{ii} {Y}_{\ell\ell} = (\E A_1^2)^2 & \mbox{ if~~ii) holds}   \\
		\E {X}_{ij} {Y}_{ij} = (\E A_1^2)^2 & \mbox{ if~~iii) holds} \\
		~~~~~~~0                      & \mbox{otherwise}
	\end{array}
\right. \label{eqn:proTauAB}
\eea
where above conditions i)-iii) are as follows
\bitm
\item[i)] $i = j=\ell=\omega$ and $1\leq i \leq c$.
\item[ii)] $i = j,~\ell=\omega$ and $i \neq \ell$  and $1 \leq i,\ell \leq k$.
\item[iii)] $i\neq j$ and $\{i,j\}= \{\ell,\omega\}$ and $1\leq i,j\leq c$.  
\eitm
}{ %dcol
\bea
\E {X}_{ij} {Y}_{\ell\omega} &=&
\left\{
	\begin{array}{lcl}
		%\E \tilde{X}_{i,i} \tilde{Y}_{\ell,\ell} & \mbox{ if } & i=j,\ell=\omega\\
		\E {X}_{ii} {Y}_{\ell\ell} =  \E A_{1}^4 & \mbox{ if } & i = j=\ell=\omega \mbox{ and } 1\leq i \leq c\\
		\E {X}_{ii} {Y}_{\ell\ell} = (\E A_1^2)^2 & \mbox{ if } & i = j,~\ell=\omega \mbox{ and } i \neq \ell  \mbox{ and } 1 \leq i,\ell \leq k\\
		\E {X}_{ij} {Y}_{ij} = (\E A_1^2)^2 & \mbox{ if } & i\neq j \mbox{ and } \{i,j\}= \{\ell,\omega\} \mbox{ and } 1\leq i,j\leq c   \\
		~~~~~~~0                      & \mbox{otherwise}
	\end{array}
\right. \label{eqn:proTauAB}
\eea}
Also for any positive semidefinite matrices $\A,\B$ of size $k\times k$, we have the following inequality
\ifthenelse{\boolean{dcol}}{
\bea
	\E \Tr(\A\X) \Tr(\B\Y) &\leq&   \left[ \max\left(\E A_{1}^4,\left(\E A_1^2\right)^2   \right) \right.  \nn
	 && ~~+ \left. 2\left(\E A_1^2\right)^2  \right] \cdot \Tr(\A)\Tr(\B).\nonumber
\eea}{
\[
	\E \Tr(\A\X) \Tr(\B\Y) \leq   \left( \max\left(\E A_{1}^4,\left(\E A_1^2\right)^2   \right)  
	 + 2\left(\E A_1^2\right)^2  \right) \cdot \Tr(\A)\Tr(\B).
\]}
\end{pro}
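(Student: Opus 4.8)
The plan is to compute the scalar cross-moments $\mathbb{E}X_{ij}Y_{\ell\omega}=\mathbb{E}[A_iA_jB_\ell B_\omega]$ first, which yields (\ref{eqn:proTauAB}), and then to expand $\mathbb{E}\,\Tr(\mat{C}\X)\,\Tr(\mat{D}\Y)=\sum_{i,j,\ell,\omega}c_{ij}\,d_{\ell\omega}\,\mathbb{E}[A_iA_jB_\ell B_\omega]$, where $c_{ij},d_{ij}$ denote the entries of the (symmetric) matrices $\mat{C},\mat{D}$, and bound the few surviving blocks by $\Tr(\mat{C})\Tr(\mat{D})$ using positive semidefiniteness. (Here I write $\mat{C},\mat{D}$ for the matrices called $\A,\B$ in the statement, and I use $X_{ij}=A_iA_j$, $Y_{\ell\omega}=B_\ell B_\omega$.)

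For the cross-moments: by i)--iii), once $B_i$ is identified with $A_i$ for $i\le c$, the variables occurring among $A_1,\dots,A_k,B_1,\dots,B_k$ reduce to the $2k-c$ variables $A_1,\dots,A_k,B_{c+1},\dots,B_k$, which are mutually independent, identically distributed, and mean zero; in particular $\mathbb{E}A_1^2=\mathbb{E}B_1^2$. Hence a product of four of these has nonzero expectation only when no variable appears with multiplicity one, i.e. only when all four factors coincide or they split into two matched pairs. Checking which index patterns $(i,j,\ell,\omega)$ realize each case gives (\ref{eqn:proTauAB}): all four equal forces $i=j=\ell=\omega\le c$ and contributes $\mathbb{E}A_1^4$; matching the two indices attached to $\X$ against each other and the two attached to $\Y$ against each other forces $i=j$ and $\ell=\omega$, contributing $\mathbb{E}A_i^2\,\mathbb{E}B_\ell^2=(\mathbb{E}A_1^2)^2$ whenever this does not collapse to the previous case; matching an index of $\X$ with an index of $\Y$ forces $\{i,j\}=\{\ell,\omega\}$ with $i\ne j$ and all indices in $\{1,\dots,c\}$ (so $A_i=B_i$, $A_j=B_j$), again contributing $(\mathbb{E}A_1^2)^2$; every other pattern leaves a variable of odd multiplicity and vanishes.

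For the inequality, substitute (\ref{eqn:proTauAB}) into $\sum_{i,j,\ell,\omega}c_{ij}d_{\ell\omega}\,\mathbb{E}[A_iA_jB_\ell B_\omega]$. The ``all four equal'' and ``$i=j,\ \ell=\omega$'' contributions consist of terms of the form $c_{ii}d_{\ell\ell}$ times a coefficient equal to $\mathbb{E}A_1^4$ or $(\mathbb{E}A_1^2)^2$, so their total is at most $\max(\mathbb{E}A_1^4,(\mathbb{E}A_1^2)^2)\sum_{i,\ell}c_{ii}d_{\ell\ell}=\max(\mathbb{E}A_1^4,(\mathbb{E}A_1^2)^2)\,\Tr(\mat{C})\Tr(\mat{D})$, using that $c_{ii},d_{\ell\ell}\ge0$ since $\mat{C},\mat{D}\succeq0$. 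The ``$\{i,j\}=\{\ell,\omega\},\ i\ne j$'' contribution equals $2(\mathbb{E}A_1^2)^2\sum_{i\ne j}c_{ij}d_{ij}$ (using $d_{ij}=d_{ji}$), and $\sum_{i\ne j}c_{ij}d_{ij}\le\sum_{i,j}c_{ij}d_{ij}=\Tr(\mat{C}\mat{D})\le\eigM(\mat{D})\,\Tr(\mat{C})\le\Tr(\mat{D})\Tr(\mat{C})$, the middle step from $\Tr(\mat{C}\mat{D})=\Tr(\mat{C}^{1/2}\mat{D}\,\mat{C}^{1/2})$ together with $\mat{D}\preceq\eigM(\mat{D})\mat{I}$. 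Summing these two bounds proves the asserted inequality.

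The only genuinely delicate step is the bookkeeping in the cross-moment computation: one must carefully track how the coupling $A_i=B_i$ for $i\le c$ merges index patterns (for instance $i=j=\ell=\omega>c$ contributes $(\mathbb{E}A_1^2)^2$ rather than $\mathbb{E}A_1^4$, and is harmlessly absorbed into the first block above). Once (\ref{eqn:proTauAB}) is in hand, the trace manipulations and positive-semidefiniteness estimates are routine.
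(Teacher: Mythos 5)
Your proposal is correct and follows essentially the same route as the paper's proof: the cross-moments are obtained from the same observation that any index appearing with odd multiplicity forces the expectation to vanish, and the trace inequality comes from the same split of $\sum_{i,j,\ell,\omega}c_{ij}d_{\ell\omega}\,\mathbb{E}[A_iA_jB_\ell B_\omega]$ into a diagonal block bounded by $\max(\mathbb{E}A_1^4,(\mathbb{E}A_1^2)^2)\cdot\mathrm{Tr}(\mat{C})\mathrm{Tr}(\mat{D})$ and a matched-pair block bounded through $\mathrm{Tr}(\mat{C}\mat{D})\le\mathrm{Tr}(\mat{C})\mathrm{Tr}(\mat{D})$; your extra remarks (the explicit maximal-eigenvalue justification, and the note that $i=j=\ell=\omega>c$ yields $(\mathbb{E}A_1^2)^2$, a case the paper's table nominally lists as zero) only add detail the paper leaves implicit. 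One caveat, which your write-up shares with the paper's: the matched-pair contribution is $2(\mathbb{E}A_1^2)^2$ times the sum of $c_{ij}d_{ij}$ restricted to $i\ne j$ with $i,j\le c$, and since off-diagonal products $c_{ij}d_{ij}$ involving an index above $c$ may be negative, writing this as the unrestricted sum over $i\ne j$ (or bounding it by $\mathrm{Tr}(\mat{C}\mat{D})$) is not automatically an upper bound when $c<k$; the clean fix is to bound the restricted sum by $\mathrm{Tr}(\mat{C}_c\mat{D}_c)\le\mathrm{Tr}(\mat{C}_c)\mathrm{Tr}(\mat{D}_c)\le\mathrm{Tr}(\mat{C})\mathrm{Tr}(\mat{D})$, where $\mat{C}_c,\mat{D}_c$ are the leading $c\times c$ principal submatrices (still positive semidefinite), after which your argument and the stated inequality go through unchanged.
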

}

\begin{proof}
The RVs $A_1,A_2,\cdots, A_k , B_{c+1},B_{c+2},\cdots, B_k$ are IID, and $A_i = B_i$ for $i\leq c$.
Because $X_{ij}=A_i A_j$ and $Y_{ij}=B_i B_j$, then $X_{ij} Y_{\ell\omega} = A_i A_j B_\ell B_\omega$. Assume there exists at least one index (say $i$) that does not equal any of the other indices (say $j,\ell,\omega$), then $\E X_{ij} Y_{\ell\omega} = 0$ (in this case $\E A_i A_j B_\ell B_\omega = (\E A_i) (\E A_j B_\ell B_\omega) =0$ by our independence assumption, and the assumption $(\E A_i)=0$. That is under our assumptions, the only cases whereby $\E {X}_{ij} {Y}_{\ell\omega}\neq 0$ are outlined in (\ref{eqn:proTauAB}).

Let $c_{ij}$ and $d_{ij}$ denote the matrix entries of $\A$ and $\B$, respectively. We get that
\ifthenelse{\boolean{dcol}}{ %docl
\begin{align}
&\E \Tr(\A{\X}) \Tr(\B{\Y}) \nn
&= \sum_{i,j=1}^k  \sum_{\ell,\omega=1}^k c_{ij} \E ({X}_{ij} {Y}_{\ell\omega})   d_{\ell\omega}\nn
&= \sum_{i,\ell=1}^k  c_{ii} \E ({X}_{ii} {Y}_{\ell\ell})   d_{\ell\ell} 
  + 2 \mathop{\sum_{i=1}^{c}}_{i \neq j}  c_{ij}\E ({X}_{ij} {Y}_{ij})   d_{ij}. 
\label{eqn:proTauAB2}
\end{align}}{ 
\bea
\E \Tr(\A{\X}) \Tr(\B{\Y}) 
&=& \sum_{i,j=1}^k  \sum_{\ell,\omega=1}^k c_{ij} \E ({X}_{ij} {Y}_{\ell\omega})   d_{\ell\omega}\nn
&=& \sum_{i,\ell=1}^k  c_{ii} \E ({X}_{ii} {Y}_{\ell\ell})   d_{\ell\ell} 
  + 2 \mathop{\sum_{i=1}^{c}}_{i \neq j}  c_{ij}\E ({X}_{ij} {Y}_{ij})   d_{ij}. %\nn
%&=&  \E (\tilde{X}_{11} \tilde{Y}_{22})\cdot \Tr(\A)\Tr(\B) + \left( \E (\tilde{X}_{11} \tilde{Y}_{11}) - \E (\tilde{X}_{11} \tilde{Y}_{22})\right)\sum_{i=1}^k a_{ii} b_{ii}  \nn
%&& + 4 \E (\tilde{X}_{12} \tilde{Y}_{12}) \left( \Tr(\A\B) - \sum_{i=1}^k a_{ii} b_{ii} \right).
\label{eqn:proTauAB2}
\eea}
By assumed positive definiteness of $\A$ and $\B$, we have $c_{ii} \geq 0$ and $d_{ii} \geq 0$. Also $\E X_{ii} Y_{\ell\ell} \geq 0$ for all $i,\ell\geq 0$, see (\ref{eqn:proTauAB}). Hence the first term of (\ref{eqn:proTauAB2}) is upper bounded by $(\max_{i,\ell=1}^k \E X_{ii} Y_{\ell\ell} )\cdot \Tr(\A)\Tr(\B)$. By our assumptions for all $i\leq j$ where $i,j \leq c$, we have $\E X_{ij}Y_{ij} = (\E A_1^2)^2$. The second term of (\ref{eqn:proTauAB2}) is upper bounded by $2 (\E A_{1}^2)^2 \cdot\Tr(\A\B) $ (this upper estimate independent of constant $c$), where $\Tr(\A\B) \leq \Tr(\A)\Tr(\B)$ by positive semidefiniteness of $\A$ and $\B$. 
\end{proof}

We now verify the claim $\taup = \beta/k^2$ in (\ref{eqn:ber_claim}), when the columns $\pmb{A}_i$ are independent.
Take $\X=\sqrt{\frac{m}{k}} \pmb{A}_\S$ and $\Y=\sqrt{\frac{m}{k}} \pmb{A}_\R$ as in Proposition \ref{pro:joint}, with bounded IID entries $|X_{ij}|\leq 1/k$ and $|Y_{ij}| \leq 1/k$.
Use these $\X$ and $\Y$ in Proposition \ref{pro:tau_est} 
to conclude that we can choose $\tauAB$ as $\tauAB =  \max (\E X_{11}^2,(\E X_{11})^2) + 2\cdot (\E X_{11})^2 $, 
which must be of the form $\beta/k^2$ since $|X_{11}|\leq 1/k$.
Finally we comment with independent columns, the condition $\max(\taupM,\taupm) \leq \sqrt{\tauAB} $ in Proposition \ref{pro:joint}, is easily satisfied. Recall this condition is equivalently $\beta' \leq \sqrt{\beta}$ see (\ref{eqn:ber_claim}), and simply take $\tauAB = \max((\beta'/k)^2,\beta/k^2)$.

%Using Proposition \ref{pro:tau_est}, we easily clarify the previous claim (\ref{eqn:ber_claim}) for the Bernoulli case. 

\subsection{Mutual coherence case} \label{ssect:MC}

Next to emphasize generality of the Poisson approximation Theorem \PoiThm, we demonstrate a different application.
In some early seminal work before the introduction of restricted isometry-type analyses, a different CS parameter was considered. Let $\Sens$ denote a matrix with $n$ columns $\col_i$, that satisfies the normalization $\Norm{\col_i}{2} =1$. The \textbf{mutual coherence} (or simply coherence) of such a matrix $\Sens$ is measured by the following quantity
\bea
	\mathop{\max_{1 \leq i,j \leq n}}_{i\neq j} |\col_i^T \col_j|. \label{eqn:corr}
\eea
By definition the mutual coherence is a number $a$ in $\Real$ between $0$ and $1$. 
CS recovery guarantees are obtainable from knowledge of (\ref{eqn:corr}), see \textit{e.g.},~\cite{Grib,Elad,Tropp2004,Tropp}, whereby the guarantees get stronger if the coherence gets smaller. 
We can relate the coherence to restricted isometry using the \emph{Gershorgin circle theorem}. Let $\func$ equal the function (\ref{eqn:g0}). 
As mentioned in~\cite{Tropp}, p. 2, for a matrix $\mat{A}$ with $k$ number of columns, all unit-norm, we have 
$\func(\mat{A})\leq (k-1) \cdot a $ where $a$ equals the mutual coherence (\ref{eqn:corr}) of $\mat{A}$, see~\cite{Tropp}, p. 2. However the coherence of an $m\times n$ matrix cannot be very small; it is at least $\sqrt{\frac{(n-m)}{m(n-1)}}$, see~\cite{Tropp}. 
%This has been a common complaint, and has led the interest of the research community stray towards other types of analysis.
%Still, this parameter is interesting in its own right, and for the sakes of both demonstrating the utility of U-statistic theory, as well as completeness, we devote this small subsection to Poisson approximation of mutual coherence. 
Many techniques \textit{e.g.},~\cite{Calder,Cand2008,Tropp2008,Barga} involve the mutual coherence, thus also for the sake demonstrating the utility of U-statistic theory, we devote this small subsection to Poisson approximation of mutual coherence.  
While\cite{Lao} recently considered a more complicated analysis for a more complicated setting, the exposition here is original, simplified, and framed in the context of CS.
Here we only consider size-$2$ subsets $\S$.
Define the kernel $\func : \Real^{m\times 2} \rightarrow \Real$ as
\bea
	\func(\mat{A}) = \left|\frac{\mat{a}_1^T \mat{a}_2}{\Norm{\mat{a}_1}{2} \cdot \Norm{\mat{a}_2}{2}} \right| \label{eqn:gmC}
\eea
where $\mat{A}$ has two columns $\mat{a}_1$ and $\mat{a}_2$. 
Here (unlike the restricted isometry case) we make effort to normalize porperly.
For an $m\times n$ matrix $\Sens$, the statistic $\max_\S \func(\SensS)$ equals the mutual coherence (\ref{eqn:corr}) of $\Sens$.
Let $g$ be the indicator kernel satisfying $g(\mat{A},a) = \Ind{\func(\mat{A}) \geq a}$. 
Then the corresponding U-statistic $\U_n(a)$ with previously defined indicator kernel $g$, is related to the mutual coherence because $\{U_n(a) = 0\} = \{\max_\S \func(\SensS) \leq a\}$. The mutual coherence is also a ``worst-case'' statistic, similar to the restricted isometries, and so the concept of Poisson approximation applies similarly. To apply Theorem \PoiThm, we require estimates for $p(a)$ and $q_i(a)$, whereby in this case $|S|=2$ so we only have $q_1(a)$. 

\newcommand{\f}{f}
Both $p(a)$ and $q_1(a)$ are similarly estimated. Let $\pmb{A}$ be an $m\times n$ random matrix, and assume its columns $\pmb{A}_i$ to be IID. Denote the probability $\f(a,\mat{b})$ as
\bea
\f(a,\mat{b}) = \Pr\left\{ \left|\left( \frac{\pmb{A}_1}{\Norm{\pmb{A}_1}{2}}\right)^T\mat{b} \right| > a \right\} \label{eqn:f}
\eea
Let $\S = \{i_1,i_2\}$ and $\R = \{i_2,i_3\}$ whereby $\S\cap\R = \{i_2\}$, and by conditioning on $\pmb{A}_{i_2}$ we have
\ifthenelse{\boolean{dcol}}{ %docl
\begin{align}
p(a) &= \Pr\{\kernel(\pmb{A}_\S) > \z\} ~~~~~~~~~~~~~~= \E \f\left( a, \frac{\pmb{A}_{i_2}}{\Norm{\pmb{A}_{i_2}}{2}}\right), \nn
q_1(a) &= \Pr\{\kernel(\Sens_\S) > \z,\kernel(\Sens_{\mathcal{R}}) > \z \} = \E \f\left(a,  \frac{\pmb{A}_{i_2}}{\Norm{\pmb{A}_{i_2}}{2}}\right)^2. \label{eqn:mC}
\end{align}}{
\bea
p(a) &=& \Pr\{\kernel(\pmb{A}_\S) > \z\} ~~~~~~~~~~~~~~~= \E \f\left( a, \frac{\pmb{A}_{i_2}}{\Norm{\pmb{A}_{i_2}}{2}}\right), \nn
q_1(a) &=& Pr\{\kernel(\Sens_\S) > \z,\kernel(\Sens_{\mathcal{R}}) > \z \} = \E \f\left(a,  \frac{\pmb{A}_{i_2}}{\Norm{\pmb{A}_{i_2}}{2}}\right)^2. \label{eqn:mC}
\eea}
The following proposition provides an exponential bound for $\f(a,\mat{b})$ in (\ref{eqn:f}). Here, $\mat{b}$ is any vector in $\Real^m$ whereby $\Norm{\mat{b}}{2} = 1$. 
We prove the following result under for both Gaussian and Bernoulli matrices, due to slight complications introduced by the normalization in (\ref{eqn:gmC}).

%, since we substitute $\pmb{A}_{i_2}/\Norm{\pmb{A}_{i_2}}{2} =\mat{b}$.

\begin{pro}\label{pro:MC}
Let $\pmb{A}_1$ be a length-$m$ random vector with IID entries with zero mean, whereby each entry is either Gaussian with variance $1/m$, or Bernoulli $\{-1/\sqrt{m},\sqrt{m}\}$. The probability $\f(a,\mat{b})$ in (\ref{eqn:f}), for $\Norm{\mat{b}}{2} = 1$, is upper bounded as $\f(a,\mat{b}) \leq 2 \exp(- m\cdot a^2/2)$.
\end{pro}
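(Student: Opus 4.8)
The plan is to split into the Gaussian and Bernoulli cases, since the Euclidean normalization in~(\ref{eqn:f}) behaves quite differently in each. In both cases I first dispose of the range $a \ge 1$: since $|(\pmb{A}_1/\Norm{\pmb{A}_1}{2})^T\mat{b}| \le 1$ by Cauchy--Schwarz (recall $\Norm{\mat{b}}{2}=1$), the probability $\f(a,\mat{b})$ vanishes there, so from now on $0 \le a < 1$.

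For the Bernoulli case the normalization disappears: every entry is $\pm 1/\sqrt{m}$, so $\Norm{\pmb{A}_1}{2}^2 = \sum_{i=1}^m 1/m = 1$ surely, whence $\f(a,\mat{b}) = \Pr\{|\pmb{A}_1^T\mat{b}| > a\}$. Writing $\pmb{A}_1^T\mat{b} = \tfrac{1}{\sqrt m}\sum_{i=1}^m \epsilon_i b_i$ with $\epsilon_i = \pm 1$ the IID signs of the entries, this is a Rademacher sum, and a standard Hoeffding bound for sums of bounded independent variables gives $\Pr\{|\pmb{A}_1^T\mat{b}| > a\} \le 2\exp(-ma^2/2)$, using $\sum_i b_i^2 = 1$; this settles the Bernoulli case.

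For the Gaussian case I would decompose $\pmb{A}_1$ relative to $\mat{b}$: put $Z = \pmb{A}_1^T\mat{b}$ and let $W = \Norm{\pmb{A}_1}{2}^2 - Z^2$ be the squared norm of the part of $\pmb{A}_1$ orthogonal to $\mat{b}$. Since $\pmb{A}_1$ is a centered isotropic Gaussian vector in $\Real^m$ with $\mathbb{E}\Norm{\pmb{A}_1}{2}^2 = 1$, rotational invariance makes $Z$ and $W$ independent, with $Z \sim N(0,1/m)$ and $W \sim \tfrac{1}{m}\chi^2_{m-1}$. The event in~(\ref{eqn:f}) is exactly $\{Z^2/(Z^2+W) > a^2\}$, which (as $a<1$) rearranges to $\{|Z| > \tfrac{a}{\sqrt{1-a^2}}\sqrt{W}\}$. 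Conditioning on $W$ and using the Gaussian tail bound $\Pr\{|N(0,1)|>x\}\le e^{-x^2/2}$ for $x\ge 0$ gives $\Pr\{|Z| > \tfrac{a}{\sqrt{1-a^2}}\sqrt{W} \mid W\} \le \exp(-\tfrac{ma^2}{2(1-a^2)}W)$; taking expectations and invoking $\mathbb{E}\,e^{-s\chi^2_{m-1}} = (1+2s)^{-(m-1)/2}$ with $s = \tfrac{a^2}{2(1-a^2)}$ (so that $1+2s = (1-a^2)^{-1}$) collapses everything to $\f(a,\mat{b}) \le (1-a^2)^{(m-1)/2}$. Finishing is immediate: $(1-a^2)^{(m-1)/2} \le e^{-(m-1)a^2/2} = e^{a^2/2}\,e^{-ma^2/2} \le e^{1/2}e^{-ma^2/2} < 2e^{-ma^2/2}$ since $0\le a<1$.

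The hard part is the Gaussian case, specifically reaching the clean constant $2$. The exact tail of a single coordinate of the uniform distribution on $S^{m-1}$ carries a polynomial prefactor of order $1/(\sqrt m\,a)$, so a naive Chernoff bound applied directly to the ratio $(\pmb{A}_1^T\mat{b})^2/\Norm{\pmb{A}_1}{2}^2$ --- bounding its two $\chi^2$ factors separately --- is too lossy to reach constant $2$. The device that circumvents this is to integrate out the one-dimensional $Z$ exactly \emph{after} conditioning on the orthogonal part $W$, which trades the awkward prefactor for the harmless factor $e^{a^2/2} < 2$; this is also why the Bernoulli case, lacking rotational invariance, must be argued separately by the (simpler) Hoeffding route above.
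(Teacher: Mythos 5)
Your proof is correct. The Bernoulli half coincides with the paper's argument: the normalization is deterministic ($\Norm{\pmb{A}_1}{2}=1$ surely), and Hoeffding's inequality for bounded independent summands yields $2\exp(-ma^2/2)$ directly. For the Gaussian half you take a genuinely different route. The paper notes that $\pmb{A}_1/\Norm{\pmb{A}_1}{2}$ is uniform on the unit sphere, rotates $\mat{b}$ onto the first coordinate axis, and quotes the spherical-cap estimate $\Pr\{X_1>a\}\le(1-a^2)^{m/2}$ from the cited lecture notes of R\"{a}cke, then doubles by symmetry. You instead decompose $\pmb{A}_1$ along and orthogonal to $\mat{b}$, exploit the independence of $Z=\pmb{A}_1^T\mat{b}\sim N(0,1/m)$ and $W\sim\chi^2_{m-1}/m$, and integrate out $Z$ conditionally using a Gaussian tail bound together with the $\chi^2$ moment generating function, arriving at $\f(a,\mat{b})\le(1-a^2)^{(m-1)/2}\le e^{1/2}e^{-ma^2/2}<2e^{-ma^2/2}$. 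Your version is self-contained (no external cap-area citation is needed) and treats the two-sided event in one pass, at the price of the slightly weaker exponent $(m-1)/2$, which you correctly absorb into the constant since $e^{1/2}<2$; the paper's version is shorter but leans on the cited geometric estimate plus a symmetry step. One point worth making explicit if you write this up: you need the two-sided Gaussian tail bound with constant $1$, namely $2(1-\Phi(x))\le e^{-x^2/2}$ for $x\ge0$ (the naive Chernoff bound with prefactor $2$ would leave you with $2e^{1/2}e^{-ma^2/2}$, which overshoots the claimed constant); this inequality is true and standard, but deserves a one-line justification.
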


We defer the proof for a moment. Use Proposition \ref{pro:MC} in (\ref{eqn:mC}), whereby substituting $\pmb{A}_{i_2}/\Norm{\pmb{A}_{i_2}}{2} =\mat{b}$, we get that $p(a)\leq 2 \exp(- m\cdot a^2/2)$ and $q_1(a)\leq 4 \exp(- m \cdot a^2)$. 
Thus the exponent of $q_1(a)$ is twice as large as that of $p(a)$, which suggests small Poisson approximation error $\eps_n(a)$. % recall (\ref{eqn:errPA2}) and (\ref{eqn:errPA}).
Here (unlike the restricted isometries case) we do not use (\ref{eqn:errPA2}), but instead use the other bound  (\ref{eqn:errPA3}) in the appendix. 
%We can verify the following result by setting $k=2$, and further bounding $\Bin{n}{2} < n^2/2$ and $\Bin{2}{1}\Bin{n-2}{1}\Bin{n}{2} < n^3$.
We can verify the following result by setting $k=2$, and further bounding $\Bin{n}{2}- \Bin{n-2}{2} < 2n - 3$ and $\Bin{2}{1}\Bin{n-2}{1}\Bin{n}{2} < n^3$.

\begin{thm} \label{cor:mC}
Let $\pmb{A}$ be an $m\times n$ random matrix, whereby the columns $\pmb{A}_i$ are IID and % with zero mean.
either Gaussian or Bernoulli distributed as described in previous Proposition \ref{pro:MC}.
%Assume $\A_\S$ is an $m \times k$ random matrix where $k\leq m$, and assume that the entries $(\A_S)_{ij}$ of $\A_\S$ are IID.
%Assume all entries $A_{ij}$ to be i) standard normal random variables (RV), or ii) symmetric Bernoulli variables in $\{-1,1\}$.
Denote $\Lam(\z) = (n(n-1)/2) \cdot p(a)$.
Then, the Poisson approximation error $\eps_n(a)$ is upper bounded as
\ifthenelse{\boolean{dcol}}{ %docl
\bea
	\eps_n(a) &\leq& (1-e^{-\Lam(\z)})   \cdot \left[ (4n-6)\cdot \exp \left(- \frac{m a^2}{2} \right) \right] \nn
	&& + (4 n^3) \cdot \exp(- m a^2)  . \label{eqn:corMC}
\eea}{
\bea
%	\eps_n(a) \leq (1-e^{-\Lam(\z)}) \cdot \left( n^2 \exp \left(- \frac{m a^2}{2} \right) 
%	+ \frac{e^4 n^3}{9} \cdot \exp(- m a^2)  \right). \label{eqn:corMC}
%	\eps_n(a) \leq (1-e^{-\Lam(\z)})   \cdot \left( n^2 \exp \left(- \frac{m a^2}{2} \right) \right)
%	+ (4 n^3) \cdot \exp(- m a^2)  . \label{eqn:corMC}
	\eps_n(a) \leq (1-e^{-\Lam(\z)})   \cdot \left[ (4n-6)\cdot \exp \left(- \frac{m a^2}{2} \right) \right]
	+ (4 n^3) \cdot \exp(- m a^2)  . \label{eqn:corMC}
\eea}
\end{thm}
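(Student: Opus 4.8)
The plan is to specialize the general Poisson approximation bound of Theorem~\PoiThm\ to the mutual-coherence kernel, for which the relevant index subsets have size $k=2$, and then feed in the exponential tail estimates furnished by Proposition~\ref{pro:MC}. Since $k=2$, the inner sum $\sum_{r=1}^{k-1}$ in the error expression (\ref{eqn:errPA}) degenerates to its single $r=1$ term, so the only joint probability that appears is $q_1(a)$; this is already why the statement of Theorem~\ref{cor:mC} mentions no higher-order joint tails.

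First I would pass from (\ref{eqn:errPA}) to the appendix form (\ref{eqn:errPA3}) rather than to (\ref{eqn:errPA2}). The reason is that the factor $p(a)^{-1}$ multiplying the joint term in (\ref{eqn:errPA}) is problematic here: $p(a)$ is exponentially small and we have no matching lower bound for it, so $p(a)^{-1}q_1(a)$ cannot be controlled directly. The remedy, used to obtain (\ref{eqn:errPA3}), is the elementary estimate $1-e^{-\Lam(a)}\le\Lam(a)=\Bin{n}{k}p(a)$, which turns $(1-e^{-\Lam(a)})\,p(a)^{-1}$ into the harmless combinatorial factor $\Bin{n}{k}$. Specialized to $k=2$ this reads
\[
   \eps_n(a)\ \le\ (1-e^{-\Lam(a)})\,p(a)\Bigl[\Bin{n}{2}-\Bin{n-2}{2}\Bigr]\ +\ \Bin{2}{1}\Bin{n-2}{1}\Bin{n}{2}\;q_1(a).
\]

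Next I would compute the two combinatorial coefficients, namely $\Bin{n}{2}-\Bin{n-2}{2}=2n-3$ and $\Bin{2}{1}\Bin{n-2}{1}\Bin{n}{2}=(n-2)n(n-1)<n^3$, and then substitute the tail bounds. Applying Proposition~\ref{pro:MC} inside the conditional identities (\ref{eqn:mC}) is legitimate because $\pmb{A}_{i_2}/\Norm{\pmb{A}_{i_2}}{2}$ is a unit vector (almost surely), so the bound $\f(a,\mat{b})\le 2\exp(-ma^2/2)$, valid uniformly over unit $\mat{b}$, may be carried under the expectation; this gives $p(a)\le 2\exp(-ma^2/2)$, and squaring (as in the second line of (\ref{eqn:mC}), which records the conditional independence of the two size-$2$ events given $\pmb{A}_{i_2}$) gives $q_1(a)\le 4\exp(-ma^2)$. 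Plugging these in, the first term is at most $(1-e^{-\Lam(a)})(2n-3)\cdot 2\exp(-ma^2/2)=(1-e^{-\Lam(a)})(4n-6)\exp(-ma^2/2)$ and the second at most $4n^3\exp(-ma^2)$, which is exactly (\ref{eqn:corMC}) once one records $\Lam(a)=\Bin{n}{2}p(a)=(n(n-1)/2)\,p(a)$.

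I do not anticipate a genuine obstacle: with Proposition~\ref{pro:MC} and the appendix bound (\ref{eqn:errPA3}) available, the proof is essentially bookkeeping. The one place that deserves care is the step $1-e^{-\Lam(a)}\le\Lam(a)$ used to eliminate $p(a)^{-1}$; without it the $k=2$ error cannot be bounded through (\ref{eqn:errPA2}) (which is tailored to the restricted-isometry case, where $q_{k-1}$ is estimated in a different way), and that is precisely why the mutual-coherence case is routed through (\ref{eqn:errPA3}).
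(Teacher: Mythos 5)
Your proposal is correct and follows essentially the same route as the paper: specialize the appendix bound (\ref{eqn:errPA3}) to $k=2$ (so only $q_1(a)$ appears), use $\Bin{n}{2}-\Bin{n-2}{2}=2n-3$ and $\Bin{2}{1}\Bin{n-2}{1}\Bin{n}{2}<n^3$, and insert $p(a)\leq 2\exp(-ma^2/2)$, $q_1(a)\leq 4\exp(-ma^2)$ obtained from Proposition \ref{pro:MC} through the conditioning identities (\ref{eqn:mC}). Your explicit justification for routing through (\ref{eqn:errPA3}) rather than (\ref{eqn:errPA2}) matches the paper's stated choice, so there is nothing to correct.
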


\renewcommand{\fn}{\footnote{In~\cite{Cand2008}, this estimate is given as $\sqrt{(2 \log n)/m}$ however we believe that a $\sqrt{2}$ factor has been omitted.}}

Theorem \ref{cor:mC} indicates that the mutual coherence is well predicted union bounds. 
By $p(a)\leq 2 \exp(- m\cdot a^2/2)$, we have $\Lam(\z) $ at most $ n^2 \cdot \exp (- m a^2/2)$, and $\Lam(z)$ 
%The first term $n^2 \cdot \exp (- m a^2/2)$ in (\ref{eqn:corMC}) 
drops exponentially in $m$ if $ a > \sqrt{(4 \log n)/m}$ - this is a standard estimate\fn, see~\cite{Cand2008}. 
As before we are mostly concernted about
the the second term in (\ref{eqn:corMC}), which requires a weaker condition on $a$ to drop exponentially. More specifically we only need $a > \sqrt{(\log 4 + 3 \log n)/m}$ (weaker than previous condition as long as $n > 4$). 
To conclude this subsection, we show the proof of Proposition \ref{pro:MC}.

{
\newcommand{\At}{{\pmb{X}}}
\newcommand{\C}{\mat{C}}
\renewcommand{\b}{\mat{b}}
\begin{proof}[Proof of Proposition \ref{pro:MC}]
For notational simplicity let $\At = \pmb{A}_{1}/\Norm{\pmb{A}_{1}}{2}$. We will show $\Pr\{\At^T \b > a\}\leq \exp(- m\cdot a^2/2)$, the other case $\Pr\{\At^T \b < -a\} \leq \exp(- m\cdot a^2/2)$ follows by symmetry of the distribution of $\At$. First consider the case where the entries of $\pmb{A}_1$ is Gaussian distributed, then $\At$ is uniformly distributed on the surface of an $m$-dimensional hypersphere. For any $m \times m$ orthogonal matrix $\C$, i.e. $\C^T\C = \mat{I}$, then $\At^T \C$ has the same distribution as $\At$. So choose any $\C\in \Real^{m\times m}$ such that $\C\b=[1,0\cdots,0]^T$ then $\Pr\{\At^T \b > a\}= \Pr\{X_1 > a \}$ since $\Norm{\mat{b}}{2}=1$. Since $\Norm{\At}{2}=1$, then $\Pr\{X_1 > a \}$ is proportional to the surface area of the spherical cap $\{x_1> a: \Norm{\mat{x}}{2}=1\}$. This probability is upper bounded by $(1-a^2)^{\frac{m}{2}} $, see~\cite{Racke}, p. XIII-3, which is in turn upper bounded by $\exp(- m\cdot a^2/2)$. 

%by surface area estimates for caps of $m$-dimensional hypersphere with radius 1, see~\cite{Vershynin2009}, p. 20, we get $\Pr\{X_1 > a \}\leq  \exp(- m\cdot a^2/2)$. 

In the case where the entries of $\pmb{A}_1$ Bernoulli distributed, then $\Norm{\pmb{A}_1}{2}=1$ and every entry $X_i$ of $\At$ is independent. For sums $S_m = \sum_{i=1}^m Y_i$ of independent RVs $Y_i$ with $|Y_i| \leq c_i$, see~\cite{Hoef} eqn. (2.6), we have $\Pr\{\frac{1}{m}S_m > a\} \leq \exp(m^2 a^2/(2 \Norm{\mat{c}}{2})^2)$ whereby $\mat{c}=[c_1,c_2\cdots,c_m]$. Setting $\At^T\b = \frac{1}{m} S_m$, we have $|Y_i| \leq \sqrt{m}\cdot b_i $, and setting $c_i = \sqrt{m}\cdot b_i$ we have $\Norm{\mat{c}}{2}^2 = m\cdot \Norm{\mat{b}}{2}^2 = m$, since $\Norm{\mat{b}}{2}=1$. Thus,  $\Pr\{\At^T \b > a\} \leq \exp(- m\cdot a^2/2)$.
\end{proof}
}

\begin{figure*}[!t]
	\centering
	  \epsfig{file={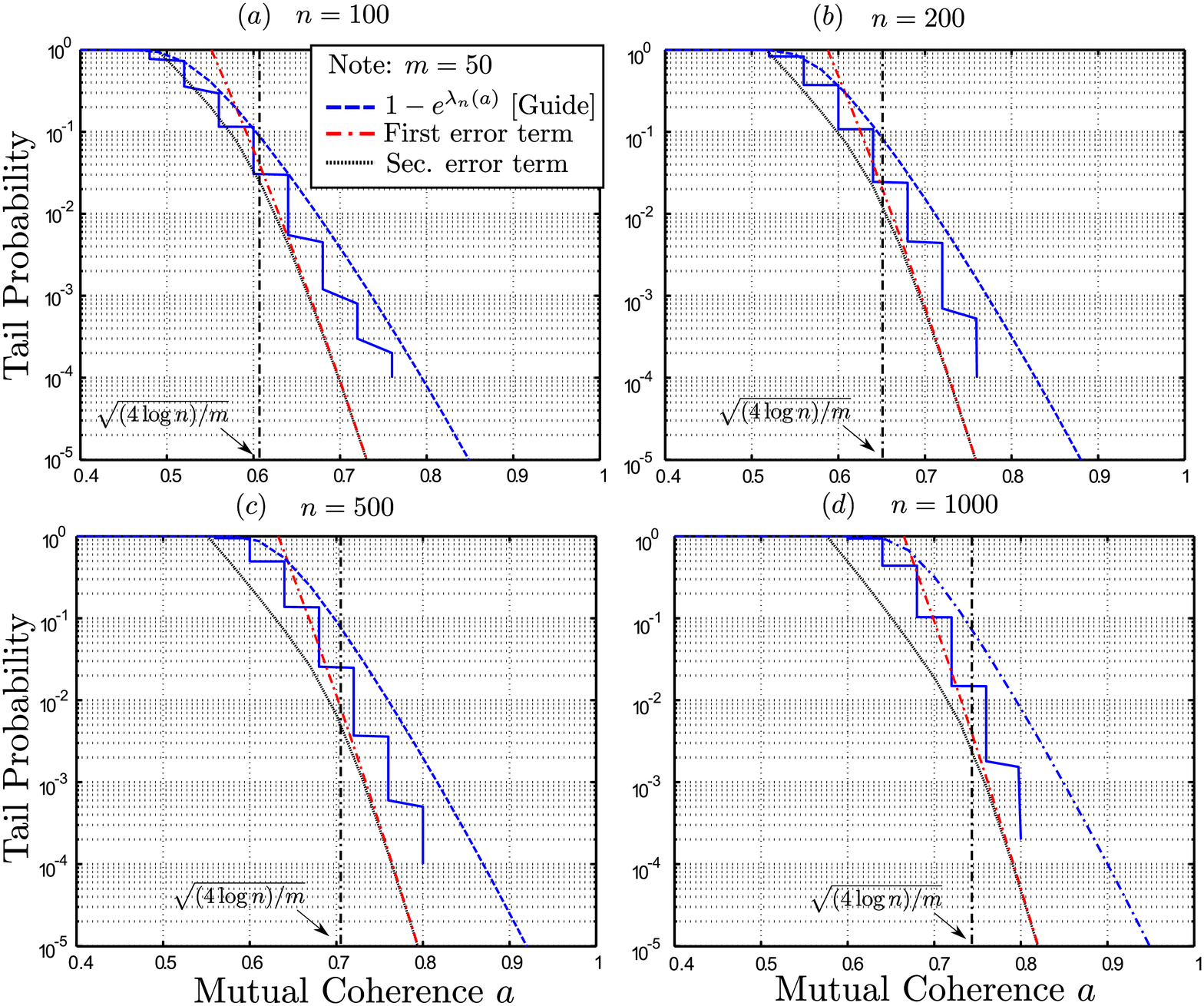},width=.7\linewidth}
		\caption{Bernoulli measure. For the mutual coherence case, comparing empirical tail probability with the ``union bound'' predicted by $1 - \exp(-\Lam(\z))$ (obtained using Gaussian approximation, see text). The error terms refer to (\ref{eqn:corMC}).}
		\label{fig:Corr}
		\vspace*{-15pt}
\end{figure*}

For the Bernoulli case, Figure \ref{fig:Corr} shows some empirical evidence that supports the theory derived in this subsection. 
Here we consider moderate measurement size $m=50$, and four different block lengths $n$ of $100, 200, 500$ and $1000$.
Corresponding to these values for $n$, Figures \ref{fig:Corr}$(a)-(d)$ plots the empirical tail probability of the mutual coherence, see (\ref{eqn:corr}). 
We also plot the function $1 - \exp(-\Lam(\z))$ where the marginal quantity $p(a)$ in $\Lam(\z)$ is taken to be $p(a) = 2\exp(-m a^2/2)/(a \sqrt{2 \pi}) $, and we expect this function to be close to the distribution of the mutual coherence.
This is due to the following two reasons. First from (\ref{eqn:gmC}) we see for the Bernoulli case $\func(\pmb{A}) = |\pmb{A}_1^T \pmb{A}_2|$, and $\pmb{A}_1^T \pmb{A}_2$ is a sum of $m$ IID Bernoulli $\{-1/m,1/m\}$ variables - which is 
%for moderately large $m$ we expect $\func(\pmb{A})$ (recall (\ref{eqn:gmC})) be 
\emph{approximately} Gaussian distributed with variance $1/m$.
Second, by Theorem \PoiThm~and Theorem \ref{cor:mC} we expect the mutual coherence to have good Poisson distribution. 
Indeed we observe for all cases of $n$ shown, that the empirical distribution is close to the plotted $1 - \exp(-\Lam(\z))$, \textit{i.e.}, the union bound is tight.
We also plot the error $\eps_n(a)$ in (\ref{eqn:corMC}); the first and second terms are plotted separately. 
We observe that for values of $a$ greater than the standard estimate $\sqrt{4\log n/m}$, the error values $\eps_n(a)$ become insignificantly small for all cases of $n$ shown. 
%We expect this plotted function $1 - \exp(-\Lam(\z))$

\section{Conclusion} \label{sect:conc}

This paper takes a first look at U-statistical theory for predicting the ``worst-case'' behavior of salient CS matrix parameters.
We showed how U-statistical theory is able to provide theoretical bounds on the tightness of union bounds analyses, 
whereby such results have never been investigated before in the CS context.
We investigated this premise for two important CS parameters: i) restricted isometries and ii) mutual coherence.
Our two main theorems determine that union bounds are tight, whereby 
for i) when $m = \mathcal{O}( k (1 + \log(n/k)))$ the restricted isometry constants need to grow linearly with sparsity $k$, and
for ii) the mutual coherence is of the standard estimate $\sqrt{(4\log n)/m}$.
%We look at two U-statistics results relating to i) ``average-'' and ii) ``worst-case'', whereby we exploit the generality of these results by applying to two different recovery algorithms. 
%The applicability of the theory is supported by provided empirical studies.
%We also pointed out how certain assumptions typically made in rate analysis, may be restrictive in obtaining estimates for the non-asymptotic setting.
That is under the specified conditions, the above two theorems justify the use of simple union bounds for ``worst-case'' analysis.

We discuss some directions for future work.
%There is ample room to improve the results presented here. 
%Firstly, it would be desirable to improve the $\sqrt{2 k \log n}$ factor in our Lasso analysis, to $\sqrt{2 \log n}$.
%Firstly, it would be also desirable to improve the (conjectured) suboptimal rate for Poisson approximation of restricted isometries. 
Firstly, it would be also desirable to improve the analyses in Subsection \ref{ssect:RI}, to allow the same conclusion for i) above but having the restricted isometry constants not depend on $k$. 
%(conjectured) suboptimal rate for the Poisson approximation of restricted isometries. 
Secondly, it would be interesting to consider application of the techniques here to the null-space property, from which powerful recovery guarantees can be obtained. 
%Thirdly, a natural progression would be to consider the analogue of the U-statistical Theorem \PoiThm, for the case when the columns of the sampling matrix columns are correlated.
Thirdly, one might investigate the same tightness of union bound analyses for the case when the sampling matrix columns are dependent, whereby this requires appropriate extensions of Theorem \PoiThm.

%(\textit{i.e.}, extensions of U-statistical Theorem \PoiThm~

\appendix

\subsection{Derivation of error estimate (\ref{eqn:errPA2})} \label{sup:error}

Here we derive (\ref{eqn:errPA2}) from (\ref{eqn:errPA}).
For some $a$ such that $p(a) >0$, note from (\ref{eqn:tail}) that $\Bin{n}{k} \cdot q_i(a)= \Lam(\z) p(a)^{-1} \jtail_i (\z) \geq (1-e^{-\Lam(\z)}) p(a)^{-1}\jtail_i (\z)$, the inequality follows because $1 - e^{-\alpha} \leq \alpha$ for all $\alpha \geq 0$.
Hence we can upper estimate the approximation error $\eps_n(a)$ given in (\ref{eqn:errPA}) as follows
\ifthenelse{\boolean{dcol}}{%dcol
\bea
	\eps_n(a) &\leq& (1-e^{-\Lam(\z)})\left\{\mtail(\z) \left[\Bin{n}{k}-\Bin{n-k}{k} \right]  \right\} \nn &&
	+ ~~\sum_{r=1}^{k-1}\Bin{k}{r}\Bin{n-k}{k-r} \Bin{n}{k} q_r(a).
	\label{eqn:errPA3}).
\eea}{
\bea
	\eps_n(a) \leq (1-e^{-\Lam(\z)})\left\{\mtail(\z) \left[\Bin{n}{k}-\Bin{n-k}{k} \right]  \right\} + \sum_{r=1}^{k-1}\Bin{k}{r}\Bin{n-k}{k-r} \Bin{n}{k} q_r(a).
	\label{eqn:errPA3}).
\eea}
We use a fact from\cite{Trust2011}, see Lemma 1, that for $i \leq k-1$ the inequality $q_{k-1}(\z) \geq q_i(\z)$ holds. By also using $\Bin{n-k}{k-1} \geq \Bin{n-k}{i}$ for all $i \leq k-1$, we claim 
\ifthenelse{\boolean{dcol}}{ %dcol
\begin{align}
\sum_{r=1}^{k-1}&\Bin{k}{r}\Bin{n-k}{k-r} \Bin{n}{k} q_r(a) \nn
&\leq  \left[\sum_{r=1}^{k-1} \Bin{k}{r}\right] \Bin{n-k}{k-1} \Bin{n}{k} q_{k-1}(a) \nn
&\leq  2^k \cdot \Bin{2k-1}{k-1} \Bin{n}{2k-1}  q_{k-1}(a), \nn
&\leq  2^k \cdot \left( \frac{e(2k-1)}{k-1} \right)^{k-1} \cdot \left( \frac{e n}{2k-1} \right)^{2k-1} q_{k-1}(a). \nonumber 
\end{align}}{
\bea
\sum_{r=1}^{k-1}\Bin{k}{r}\Bin{n-k}{k-r} \Bin{n}{k} q_r(a) 
&\leq & \left[\sum_{r=1}^{k-1} \Bin{k}{r}\right] \Bin{n-k}{k-1} \Bin{n}{k} q_{k-1}(a) \nn
&\leq & 2^k \cdot \Bin{2k-1}{k-1} \Bin{n}{2k-1}  q_{k-1}(a), \nn
&\leq & 2^k \cdot \left( \frac{e(2k-1)}{k-1} \right)^{k-1} \cdot \left( \frac{e n}{2k-1} \right)^{2k-1} q_{k-1}(a). \nonumber 
\eea}
The second-last inequality follows from the identities $\sum_{r=1}^{k-1} \Bin{k}{r} =2^k-1$  and $\Bin{n-k}{i} \Bin{n}{k} = \Bin{k+i}{i} \Bin{n}{k+i}$.

\subsection{Technical proofs of claims appearing in Subsection \ref{ssect:RI}} \label{app:proofRI}

\begin{proof}[Proof of Lemma \ref{lem:corAW}]
Put $\pmb{B}=\mat{I} + (e^h -1) \X - e^{h \X}$. By the linearity of $\Tr(\cdot)$, we have $\Tr(\Am\pmb{B}) = \Tr(\Am) + (e^h-1)\Tr(\Am\X) - \Tr(\Am e^{h\X})$. Also since $\X$ is positive semidefinite, (\ref{eqn:Cher}) states that $\pmb{B}$ is positive semidefinite. For any two positive semidefinite matrices $\Am$ and $\mat{B}$, we have $\Tr(\Am\mat{B})\geq0$, therefore $\Tr(\Am e^{h\X}) \leq  \Tr(\Am) + (e^h-1)\Tr(\Am\X) $. Take expectations of both sides. Finally because $e^h -1 \geq 0$, use $\Tr(\Am \E\X) \leq \Tr(\Am) \cdot \eigM(\E\X) $ to prove the first inequality of (\ref{eqn:lemcorAW1}).

To show the second inequality, put $\pmb{B}=\mat{I} + (e^{-h} -1) \X - e^{-h \X}$. By (\ref{eqn:Cher}) this matrix $\pmb{B}$ is still positive semidefinite. The rest of the arguments follow similarly as the first case, however note that in this case $e^{-h} -1 \leq 0$ therefore we use $\Tr(\Am \E\X) \geq \Tr(\Am) \cdot \eigm(\E\X) $ to finish the proof for the second inequality of (\ref{eqn:lemcorAW1}).
\end{proof}

\ifthenelse{\boolean{dcol}}{%dcol
\begin{figure*}[bp]
	\normalsize
	\setcounter{tempequationcounter}{\value{equation}}
	\hrulefill
	\bea
		\setcounter{equation}{35}
		\pO e^\h &=& \frac{(1-\pO)(2(\tau+t)-1) + \sqrt{(1-\pO)^2 + 4(\pT\Iv-1)(\tau+t)(1-\tau-t)}}{2 (1-\tau-t)},
		 \label{eqn:dsads} \\
		 \pO e^\h &=&  \frac{(1-\pO)(\tau+t)}{1 - \tau - t} + \frac{-(1-\pO) + \sqrt{(1-\pO)^2 + 4(\pT\Iv-1)(\tau+t)(1-\tau-t)}}{2(1 - \tau - t)}\nn
  &=& \frac{(1-\pO)(\tau + c_4 +t)}{1 - \tau - t} \label{eqn:projoint1}
	\eea
	\setcounter{equation}{\value{tempequationcounter}}
\end{figure*}
}

%\begin{proof}[Proof of Proposition \ref{pro:marg}]
\begin{proof}[Proof of Theorem \TroppThm]
In this proof, we set $\Smat_m= \sum_{i=1}^m \X_i$, where $\X_i$ is the $i$-th row outer sum of $\frac{m}{k} \pmb{A}_\S$, or simply $\pmb{A}_\S^T\pmb{A}_\S = \frac{d}{m} \Smat_m$. By the assumption $|A_{ij}|\leq 1$, then $\eigM(\X_i)\leq 1$. By (\ref{eqn:chain}) we have $\Pr\{\eigM(\Smat_m)> t \}\leq e^{-h t} \E \Tr(e^{h\Smat_m})$. First we want to show 
\bea
 \E \Tr(e^{h\Smat_m}) \leq k \left(e^h \taupM + 1- \taupM\right)^m. \label{eqn:promarg1}
\eea
Use the Golden-Thompson inequality to write $\E \Tr(e^{h\Smat_m}) \leq \E \Tr(e^{h\Smat_{m-1}} e^{h \X_m})$. For now use the notation shortcut $\tau =\taupM$. Because $\X_m$ is positive semidefinite and satisfies $\eigM(\X_m)\leq 1$, use Lemma \ref{lem:corAW} (for $e^{h\Smat_{m-1}}$ in place of $\Am$) to obtain $\E \Tr(e^{h\Smat_m}) \leq \E \Tr(e^{h\Smat_{m-1}})\left(e^{\h}\tau + (1-\tau) \right)  $. Repeat the argument $m-1$ more times for $\Smat_{m-1}, \Smat_{m-2}, \cdots $ where we finally get $\E \Tr(e^{h\Smat_m}) \leq \Tr(\mat{I}) \left(e^h \tau + 1- \tau\right)^m$ and $\Tr(\mat{I})=k$, showing (\ref{eqn:promarg1}). Putting previous facts together, for $t > 0$ we have the bound 
\[
\Pr\{\eigM(\Smat_m)> m\cdot(\tau + t) \}\leq k e^{-h m (\tau + t)}  \left(e^h \tau + 1- \tau\right)^m.
\]
Optimize the bound by setting $\tau e^h = (\tau+t)(1-\tau)/(1-\tau-t)$, see (4.7) in~\cite{Hoef}, where we enforce $\tau + t < 1$ to guarantee a positive solution for $h$. Using some manuipulations to express $\Pr\{\eigM(\Smat_m)> m\cdot(\tau + t) \}\leq k e^{\mathcal{D}(\tau + t||\tau)}$, where $\mathcal{D}(\cdot||\cdot)$ is the binary information divergence. Finally equate $\Pr\{\sigM^2(\pmb{A}_\S) \geq a\} = \Pr\{\eigM(\Smat_m) > \frac{ma}{k}\}$ and $\Pr\{\eigM(\Smat_m)> m\cdot(\tau + t) \}$ we set $t = a/k - \tau$, and we proved the first inequality, where the limits on $a$ follow from $\tau + t = a/k < 1$ and $t = a/k - \tau > 0$.

The second inequality is shown very similarly. For the rest of the proof, use the notation shortcut $\tau =\taupm$. 
Starting from the equation below (\ref{eqn:chain}), repeating similar arguments we can show 
\[
\Pr\{\eigM(-\Smat_m)> -m\cdot(\tau + t) \}\leq k e^{h m (\tau + t)}  \left(e^{-h} \tau + 1- \tau\right)^m,
\]
which is optimized by setting $\tau e^{-h} = (\tau+t)(1-\tau)/(1-\tau-t)$ to get $\Pr\{\eigM(-\Smat_m)> -m\cdot(\tau + t) \}\leq k e^{\mathcal{D}(\tau + t||\tau)}$, where we enforce $\tau + t> 0$ to guarantee a positive solution for $h$. Equating $\Pr\{\sigm^2(\pmb{A}_\S) < a\} = \Pr\{\eigM(-\Smat_m) > -\frac{ma}{k}\}$ and $\Pr\{\eigM(\Smat_m)> -m\cdot(\tau + t) \}$ we set $t = a/k - \tau$ to prove the second inequality, where the limits on $a$ follow from $\tau + t = a/k > 0$ and $t = a/k - \tau < 0$.
\end{proof}

\renewcommand{\fn}{\footnote{Alternatively, it might be easier to verify that when $t=0$, both of the two equations displayed above are satisfied when we set $h_1 = h_2 = 0$.}}
\newcommand{\p}{\taupM}
\newcommand{\pp}{\tauqM}
\begin{proof}[Proof of Proposition \ref{pro:joint}]
We continue where we left off from (\ref{eqn:chain2}). Apply the Golden-Thomson inequality on the term $\E\{\Tr (e^{\h_1 \Smat_m})  \Tr(e^{\h_2 \Tmat_m})\}$ to get 
\ifthenelse{\boolean{dcol}}{ %dcol
\begin{align}
\E\{\Tr &(e^{\h_1 \Smat_m})  \Tr(e^{\h_2 \Tmat_m}) \} \nn
&\leq \E\{\Tr (e^{\h_1 \Smat_{m-1}}e^{\h_1\X_m} )  \Tr(e^{\h_2 \Tmat_{m-1}}e^{\h_2\Y_m} ) \}. \nonumber
\end{align}}{
\bea
\E\{\Tr (e^{\h_1 \Smat_m})  \Tr(e^{\h_2 \Tmat_m}) \} 
&\leq& \E\{\Tr (e^{\h_1 \Smat_{m-1}}e^{\h_1\X_m} )  \Tr(e^{\h_2 \Tmat_{m-1}}e^{\h_2\Y_m} ) \}. \nonumber
\eea}
Apply the first inequality in (\ref{eqn:joint_iter}), followed by the Golden-Thomson inequality and then (\ref{eqn:joint_iter}) again and so on, to show (using some further algebraic manipulations) that for $t > 0$
\ifthenelse{\boolean{dcol}}{ %dcol
\begin{align}
\Pr&\left\{\eigM(\Smat_m)\geq m \left(\tau + t\right), \eigM(\Tmat_m) \geq m \left(\tau + t\right) \right\} \nn
&\leq k^2 e^{- m(\tau+t)(\h_1+\h_2)} \left\{ \pT(\pO e^{\h_1 } + 1 - \pO)(\pO e^{\h_2 } + 1 - \pO) \right.\nn
& ~~~~~~~~~~~~~~~~~~~~~~~~~~~\left.+ 1-\pT \right\}^m \label{eqn:projoint0}
\end{align}}{
\begin{align}
\Pr&\left\{\eigM(\Smat_m)\geq m \left(\tau + t\right), \eigM(\Tmat_m) \geq m \left(\tau + t\right) \right\} \nn
&\leq k^2 e^{- m(\tau+t)(\h_1+\h_2)} \left( \pT(\pO e^{\h_1 } + 1 - \pO)(\pO e^{\h_2 } + 1 - \pO) + 1-\pT \right)^m \label{eqn:projoint0}
\end{align}}
where the constants $\pO =  \pp/\p$ and $\pT = \p^2/\pp$, and we used the shorthand $\tau = \taupM$. Differentiating the exponent of (\ref{eqn:projoint0}) with respect to both $h_1$ and $h_2$ we get respectively
\ifthenelse{\boolean{dcol}}{ %dcol
\begin{align}
	(\pO e^{\h_2 } + 1 - \pO) & \left(\pO e^{\h_1} \left(1-\tau - t\right) -  (1-\pO)\left(\tau+t\right)\right) \nn
	&= (\pT\Iv-1)(\tau+t),\nn
	(\pO e^{\h_1 } + 1 - \pO) & \left(\pO e^{\h_2} \left(1-\tau - t\right) -  (1-\pO)\left(\tau+t\right)\right) \nn
	&= (\pT\Iv-1)(\tau+t).
	 \nonumber
\end{align}}{
\bea
	(\pO e^{\h_2 } + 1 - \pO) \left(\pO e^{\h_1} \left(1-\tau - t\right) - (1-\pO)\left(\tau+t\right)\right)&=& (\pT\Iv-1)(\tau+t),\nn
	(\pO e^{\h_1 } + 1 - \pO) \left(\pO e^{\h_2} \left(1-\tau - t\right) - (1-\pO)\left(\tau+t\right)\right)&=& (\pT\Iv-1)(\tau+t).
	 \nonumber
\eea} 
To solve the previous two equations it suffices to have $h = h_1 = h_2$. Then by substituting $a=c_1 e^h$ we solve the quadratic equation $f(a) = a^2 + b a + c$ where $b = (1-\pO)(1-2(\tau + t))/(1-\tau -t)$ and $c = -(\tau + t)[(1-\pO) + (\pT\Iv - 1)]/(1-\tau-t)$. 
\ifthenelse{\boolean{dcol}}{ %dcol
Under the assumption $\pT \leq 1$, the solution (\ref{eqn:dsads}) (see page bottom) for $a = \pO e^h $ will exist for some positive $h > 0$, if we constraint $\tau + t < 1$. To see this, check that for $0 < t < 1 - \tau$ the RHS above increases monotonically as $t$ increases, and verify\fn~that if we set $t=0$ the RHS of (\ref{eqn:dsads}) equals $c_1$ (in which case $h=0$).
We then write (\ref{eqn:projoint1}) (see page bottom) where $c_4 = c_4(a,k,\pO,\pT)$ is given as in (\ref{eqn:jointc4}) after equating $a/k = \tau + t$.}{ %single
Under the assumption $\pT \leq 1$, the following solution for $a = \pO e^h $ will exist for some positive $h > 0$
\[
  %pe^\h = \frac{1-p}{2}\left( \frac{(2(q+t) -1)}{1 - q - t} \pm \frac{1}{1 - q - t}\right)
  \pO e^\h = \frac{(1-\pO)(2(\tau+t)-1) + \sqrt{(1-\pO)^2 + 4(\pT\Iv-1)(\tau+t)(1-\tau-t)}}{2 (1-\tau-t)},
\]
if we constraint $\tau + t < 1$. To see this, check that for $0 < t < 1 - \tau$ the RHS above increases monotonically as $t$ increases, and verify\fn~that if we set $t=0$ the RHS above equals $c_1$ (in which case $h=0$).
%Because we assumed $c_1 \leq 1$, and because $\pO e^\h \geq 0$, the above assignment must be possible as long as 
We write 
\bea
  \pO e^\h &=&  \frac{(1-\pO)(\tau+t)}{1 - \tau - t} + \frac{-(1-\pO) + \sqrt{(1-\pO)^2 + 4(\pT\Iv-1)(\tau+t)(1-\tau-t)}}{2(1 - \tau - t)}\nn
  &=& \frac{(1-\pO)(\tau + c_4 +t)}{1 - \tau - t} \label{eqn:projoint1}
\eea
where $c_4 = c_4(a,k,\pO,\pT)$ is given as in (\ref{eqn:jointc4}) after equating $a/k = \tau + t$.} 
Substituting (\ref{eqn:projoint1}) back into (\ref{eqn:projoint0}) and by further algebraic manipulations we get
\ifthenelse{\boolean{dcol}}{%dcol
\begin{align}
\Pr&\left\{\eigM(\Smat_m) \geq m \left(\tau + t\right), \eigM(\Tmat_m) \geq m \left(\tau + t\right) \right\} \nn
&~~~~~\leq k^2  e^{-m\cdot 2(\mathcal{D}(\tau+t||\pO) + c_3(t,k,\pO,\pT))} \nonumber
\end{align}}{
\[
\Pr\left\{\eigM(\Smat_m) \geq m \left(\tau + t\right), \eigM(\Tmat_m) \geq m \left(\tau + t\right) \right\} 
\leq k^2  e^{-m\cdot 2(\mathcal{D}(\tau+t||\pO) + c_3(t,k,\pO,\pT))}
\]}
where $c_3 = c_3(a,k,\pO,\pT)$ is given in (\ref{eqn:jointc3}) after equating $a/k = \tau + t$, and we get the desired result.
The limits on $a$ follow as before: $\tau + t = a/k < 1$ and $t = a/k - \tau > 0$.

For the other case we have $\pO =  \pp/\taupm$ and $\pT = \taupm^2/\pp \leq 1$, and we similarly show
\ifthenelse{\boolean{dcol}}{%dcol
\begin{align}
\Pr&\left\{\eigM(-\Smat_m)\geq -m \left(\tau + t\right), \eigM(-\Tmat_m) \geq -m \left(\tau + t\right) \right\} \nn
\leq& k^2 e^{m(\tau+t)(\h_1+\h_2)} \left\{ \pT(\pO e^{-\h_1 } + 1 - \pO)(\pO e^{-\h_2 } + 1 - \pO) \right. \nn
&~~~~~~~~~~~~~~~~~~~~~~+ \left. 1-\pT \right\}^m \nonumber
\end{align}}{
\begin{align}
\Pr&\left\{\eigM(-\Smat_m)\geq -m \left(\tau + t\right), \eigM(-\Tmat_m) \geq -m \left(\tau + t\right) \right\} \nn
&\leq k^2 e^{m(\tau+t)(\h_1+\h_2)} \left( \pT(\pO e^{-\h_1 } + 1 - \pO)(\pO e^{-\h_2 } + 1 - \pO) + 1-\pT \right)^m \nonumber
\end{align}}
where we use the shorthand $\tau = \taupm$. Again as in the second part of the proof of Theorem \TroppThm, we now constrain $\tau + t > 0$ and proceed similarly as before. Now treating $e^{-h}$ instead of $e^h$, the expression for $c_1 e^{-h}$ simply equals the RHS of the equation above (\ref{eqn:projoint1}), whereby the said RHS decreases monotonically as $t$ decreases in the range $-\tau < t < 0$. 
Hence we conclude as before that $c_1 e^{-h}$ equals the RHS of (\ref{eqn:projoint1}) for some positive $h$, and the essentially same expressions follow.
The limits on $a$ follow from $\tau + t = a/k > 0$ and $t = a/k - \tau < 0$.
%and everything follow similarly, however note that now we have $\pO =  \pp/\taupm$ and $\pT = \taupm^2/\pp \leq 1$.
\end{proof}

% trigger a \newpage just before the given reference
% number - used to balance the columns on the last page
% adjust value as needed - may need to be readjusted if
% the document is modified later
%\IEEEtriggeratref{8}F
% The "triggered" command can be changed if desired:
%\IEEEtriggercmd{\enlargethispage{-5in}}

% references section
% NOTE: BibTeX documentation can be easily obtained at:
% http://www.ctan.org/tex-archive/biblio/bibtex/contrib/doc/

% can use a bibliography generated by BibTeX as a .bbl file
% standard IEEE bibliography style from:
% http://www.ctan.org/tex-archive/macros/latex/contrib/supported/IEEEtran/bibtex
%\bibliographystyle{IEEEtran}
%% argument is your BibTeX string definitions and bibliography database(s)
%\bibliography{IEEEabrv,fabian,CS}
%
% <OR> manually copy in the resultant .bbl file
% set second argument of \begin to the number of references
% (used to reserve space for the reference number labels box)

%% all floats here

% Generated by IEEEtran.bst, version: 1.12 (2007/01/11)

\end{document}

% END of IEEEtest.tex ************